\documentclass[pra,reprint,amsmath,amssymb,amsthm,aps,superscriptaddress]{revtex4-1}

\usepackage{graphicx}
\usepackage{dcolumn}
\usepackage{bm}
\usepackage{float}
\usepackage{appendix}

\usepackage{todonotes}
\setlength {\marginparwidth}{2cm} 

\usepackage{amsthm}
\usepackage{algorithm}
\usepackage{algpseudocode}
\usepackage{tabularx}
\usepackage{comment}
\usepackage[normalem]{ulem}
\usepackage[colorlinks, linkcolor=blue,anchorcolor=blue,citecolor=blue,urlcolor=blue]{hyperref}

\newtheorem{theorem}{Theorem}
\newtheorem{lemma}{Lemma}

\DeclareRobustCommand{\rchi}{{\mathpalette\irchi\relax}}
\newcommand{\irchi}[2]{\raisebox{\depth}{$#1\chi$}} 

\begin{document}

\title{Efficient algorithms to solve atom reconfiguration problems. II.~The assignment-rerouting-ordering (aro) algorithm}

\author{Remy El Sabeh}
\affiliation{Department of Computer Science, American University of Beirut, Lebanon.}

\author{Jessica Bohm}
\affiliation{Institute for Quantum Computing, University of Waterloo, Waterloo, Ontario N2L 6R2, Canada.}

\author{Zhiqian Ding}
\affiliation{Institute for Quantum Computing, University of Waterloo, Waterloo, Ontario N2L 6R2, Canada.}

\author{Stephanie Maaz}
\affiliation{David R. Cheriton School of Computer Science, University of Waterloo, Waterloo, Ontario N2L 3G1, Canada.}

\author{Naomi Nishimura}
\affiliation{David R. Cheriton School of Computer Science, University of Waterloo, Waterloo, Ontario N2L 3G1, Canada.}

\author{Izzat El Hajj}
\affiliation{Department of Computer Science, American University of Beirut, Lebanon.}

\author{Amer E. Mouawad}
\affiliation{Department of Computer Science, American University of Beirut, Lebanon.}
\affiliation{David R. Cheriton School of Computer Science, University of Waterloo, Waterloo, Ontario N2L 3G1, Canada.}
\affiliation{University of Bremen, Bremen, Germany}

\author{Alexandre Cooper}
\email[]{alexandre.cooper@uwaterloo.ca}
\affiliation{Institute for Quantum Computing, University of Waterloo, Waterloo, Ontario N2L 6R2, Canada.} 

\date{\today}

\begin{abstract}
Programmable arrays of optical traps enable the assembly of configurations of single atoms to perform controlled experiments on quantum many-body systems.
Finding the sequence of control operations to transform an arbitrary configuration of atoms into a predetermined one requires solving an atom reconfiguration problem quickly and efficiently.
A typical approach to solve atom reconfiguration problems is to use an assignment algorithm to determine which atoms to move to which traps.
This approach results in control protocols that exactly minimize the number of displacement operations; however, this approach does not optimize for the number of displaced atoms or the number of times each atom is displaced, resulting in unnecessary control operations that increase the execution time and failure rate of the control protocol.
In this work, we propose the assignment-rerouting-ordering (aro) algorithm to improve the performance of assignment-based algorithms in solving atom reconfiguration problems. 
The aro algorithm uses an assignment subroutine to minimize the total distance traveled by all atoms, a rerouting subroutine to reduce the number of displaced atoms, and an ordering subroutine to guarantee that each atom is displaced at most once.
The ordering subroutine relies on the existence of a partial ordering of moves that can be obtained using a polynomial-time algorithm that we introduce within the formal framework of graph theory.
We numerically quantify the performance of the aro algorithm in the presence and in the absence of loss, and show that it outperforms the exact, approximation, and heuristic algorithms that we use as benchmarks. 
Our results are useful for assembling large configurations of atoms with high success probability and fast preparation time, as well as for designing and benchmarking novel atom reconfiguration algorithms.
\end{abstract}

\maketitle

\section{Introduction}

Programmable arrays of optical traps~\cite{Tervonen1991, Prather1991, Curtis2002, Hossack2003} have recently emerged as effective tools for assembling configurations of single atoms and molecules with arbitrary spatial geometries~\cite{Bergamini2004, Lee2016, Kim2016, Endres2016, Barredo2016, Barredo2018, Anderegg2019}.
Supplemented with strong and tunable interactions like Rydberg-Rydberg interactions~\cite{Gallagher1994, Lukin2001}, these configurations realize large, coherent quantum many-body systems that act as versatile testbeds for quantum science and technology~\cite{Saffman2010,Browaeys2020,Morgado2021,Daley2022}.

An ongoing challenge is to assemble configurations of thousands of atoms with high success probability and fast preparation time. Addressing this challenge requires the design and implementation of improved algorithms to solve atom reconfiguration problems~\cite{Schymik2020, Cooper2024, Cimring2023}, which are hard combinatorial optimization problems that seek a sequence of control operations to prepare a given configuration of atoms from an arbitrary one. In the absence of atom loss, finding a control protocol that exactly minimizes the total number of displacement operations, without concern for any other performance metrics, can be done in polynomial time using \emph{assignment algorithms}, such as those based on the Hungarian algorithm~\cite{kuhn1955hungarian,edmonds1972theoretical,Lee2017}.

Relying on assignment algorithms to solve atom reconfiguration problems~\cite{Schymik2020,Cooper2024,Cimring2023}, however, suffers from two major drawbacks.
The first drawback is that these assignment-based  algorithms do not optimize for the number of displaced atoms; in fact, minimizing the total number of displaced atoms is an \textsf{NP}-complete problem, even on simple graphs and geometries such as grids, for which an approximate solution can be obtained in polynomial time using the \emph{Steiner tree 3-approximation algorithm} (3-approx)~\cite{Calinescu2007}.
If single atoms are displaced sequentially, then increasing the number of displaced atoms increases the number of transfer operations required to extract and implant the atoms from and into the array of optical traps, and thus increases the probability of losing them. 
For example, an algorithm might choose to displace $M$ atoms once instead of one atom $M$ times, which, although equivalent in terms of the total number of displacement operations, results in greater uncertainty about which atoms will be lost, complicating the problem of efficiently allocating surplus atoms to replace lost ones.
The second drawback is that the moves are executed in an arbitrary order, without taking into account the possibility of early moves obstructing later moves; the same atom might thus be displaced multiple times, further increasing the probability of losing it.

\begin{figure}[t]
\includegraphics[]{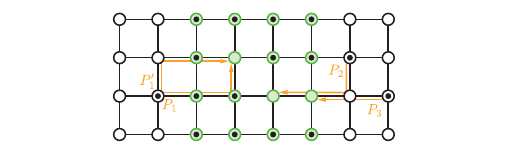}
\caption{
\textbf{The assignment-rerouting-ordering (aro) algorithm.}
The atom reconfiguration problem consists of finding a sequence of moves to transform an arbitrary configuration of atoms (black dots) contained in a static array of optical traps (circles) into a target configuration of atoms (shaded green disks). First, the aro algorithm uses the assignment subroutine to find a sequence of moves that minimizes the total distance traveled by all atoms, or, equivalently, the total number of displacement operations. Second, the rerouting subroutine attempts to update the path of each move to reduce the number of atoms displaced without increasing the total displacement distance, here choosing the path $P_1'$ over the path $P_1$. Third, the ordering subroutine finds a sequence of moves that prevents an atom from obstructing the path of another atom, here choosing to execute the move associated with $P_2$ before executing the move associated with $P_3$.}
\label{fig:intro}
\end{figure}

In this paper, we propose the assignment-rerouting-ordering (aro) algorithm to overcome the aforementioned drawbacks.
Similar to typical assignment-based reconfiguration algorithms, the aro algorithm uses an assignment algorithm to determine which atoms to move to which traps; however, the aro algorithm then updates the resulting sequence of moves using two novel subroutines, which are our main contributions: a rerouting subroutine and an ordering subroutine.
The rerouting subroutine attempts to reroute the path of each move to reduce the number of displaced atoms, whereas the ordering subroutine orders the sequence of moves to guarantee that each atom is displaced at most once~(Fig.~\ref{fig:intro}). Both subroutines might thus modify the sequence of moves without increasing the total number of displacement operations. 
Hence, besides possibly achieving a reduction in the number of transfer operations, the aro algorithm exactly minimizes both the number of displacement operations and the number of transfer operations per displaced atom.
The resulting reduction in the total number of control operations allows the aro algorithm to outperform a typical assignment-based algorithm, the 3-approx algorithm, as well as our recently introduced redistribution-reconfiguration (red-rec) algorithm~\cite{Cimring2023}, at solving atom reconfiguration problems both in the absence and in the presence of loss.

More generally, we are motivated by the goal of improving the performance of atom reconfiguration algorithms ``from the ground up'' by building upon exact and approximation algorithms for which provable analytical guarantees exist, e.g., within the framework of combinatorial optimization and graph theory. 
This approach is complementary to operationally-driven approaches that build upon intuition and operational constraints to formulate heuristic algorithms~\cite{Barredo2016, Barredo2018, Schymik2020, Sheng2021, Mamee2021, Ebadi2021, Tao2022, Sheng2022, Cimring2023,Tian2023}. 

Exact and approximation algorithms can be used to improve operational performance as standalone algorithms or as subroutines within other heuristic algorithms, and they can also be used to provide performance bounds to benchmark new algorithms and identify ways to further improve them.
Moreover, these algorithms and the formal results that underpin them can support applications in other areas, e.g., robot motion planning~\cite{vanderBerg2009, Heuvel2013, Solovey2014, Demaine2019}.

For the purpose of this paper, we refer to ``assignment algorithms'' as algorithms that solve assignment problems and ``assignment-based reconfiguration algorithms'' or ``assignment-based algorithms'' as reconfiguration algorithms that solve atom reconfiguration problems using an assignment algorithm as a subroutine. When there is no risk of confusion, the term ``assignment'' or ``assignment algorithm'' might be used to denote a typical assignment-based reconfiguration algorithm that does not exploit the rerouting or ordering subroutines.

The rest of the paper is organized as follows.
We start by reviewing atom reconfiguration problems and describing our baseline assignment-based reconfiguration algorithm~(Sec.~\ref{sec:problem}), which is used as a point of reference when quantifying operational performance.
We then introduce the aro algorithm~(Sec.~\ref{sec:aro_algo}), describing in detail the rerouting subroutine~(Sec.~\ref{sec:rerouting}) and the ordering subroutine~(Sec.~\ref{sec:ordering}).
Next, we numerically benchmark the performance of the aro algorithm against exact, approximation, and heuristic algorithms, first, in the absence of loss~(Sec.~\ref{sec:benchmarking_lossless}), and then, in the presence of loss~(Sec.~\ref{sec:benchmarking_loss}). 
We provide supporting proofs and technical details about the various subroutines, including runtime analysis and proofs of correctness, in the appendices~(App.~\ref{app:assignment}~--~\ref{sec:batching}).

\section{Atom reconfiguration problems}\label{sec:problem}

An atom reconfiguration problem~\cite{Schymik2020, Cooper2024, Cimring2023} seeks a control protocol, $\mathcal{R}:\mathcal{C}_0\mapsto\mathcal{C}_T$, to transform an arbitrary configuration $\mathcal{C}_0$ of $N_a^0$ atoms into a given target configuration $\mathcal{C}_T$ of $N_a^T$ atoms.
A configuration of atoms is contained in an array of optical traps, $\mathcal{A}(V)$, defined by its spatial arrangement or \emph{geometry}, $V = \{\vec{v}_j~|~\vec{v}_j = (v_{j_x}, v_{j_y}) \in \mathbb{R}^2, 1 \leq j \leq N_t\}$, where $N_t$ is the number of traps in the optical trap array. We later choose the geometry to be a square lattice (a grid) of $N_t=N_t^x\times N_t^y$ traps in the plane where $v_{j_x}=x_0+j_x\delta x$, $v_{j_y}=y_0+j_y\delta y$ for $0\leq j_x\leq N_t^x - 1$, $0 \leq j_y \leq N_t^y-1$, $(x_0, y_0)$ is the origin of the array, and $\delta x,~\delta y$ are the lattice spacing constants.

The control protocol is composed of a sequence of extraction-displacement-implantation (EDI) cycles that extract, displace, and implant a single atom from one static trap to another using a secondary array of dynamic traps. This operation can also be done on multiple atoms simultaneously. These EDI cycles are composed of a sequence of elementary control operations that include elementary transfer operations, which extract (implant) an atom from (into) a static trap into (from) a dynamic trap, and elementary displacement operations, which displace a dynamic trap containing an atom from one static trap to another by an elementary displacement step $\delta x$ or $\delta y$. In order to account for the probability of losing an atom even if a trap stays idle, we also include no-op operations that leave some traps unchanged, while transfer or displacement operations are performed on other traps.

In an operational setting, the initial configuration of atoms is obtained by randomly loading a single atom into every trap of the trap array with a probability given by the \emph{loading efficiency} $\epsilon$. Given the initial and target configurations of atoms, the reconfiguration problem is then solved, the control protocol is executed, and a measurement is performed to check whether or not the updated configuration of atoms contains the target configuration. In the presence of loss, the atom reconfiguration problem might have to be solved multiple times through multiple reconfiguration cycles until the target configuration is reached (success) or is no longer reachable (failure). The same algorithm is used independently of the initial configuration, that is, we do not consider adaptive algorithms whose behavior changes depending on the measured configuration, nor do we consider protocols that rely on mid-cycle measurements.

\subsection{Atom reconfiguration problems on graphs}
Atom reconfiguration problems can be viewed as reconfiguration problems on graphs~\cite{Heuvel2013,Nishimura2018,Bousquet2022, Calinescu2007,Ito2011,Cooper2024}. A configuration of indistinguishable atoms trapped in an array of optical traps is represented as a collection of tokens placed on a subset of the vertices of a graph, $G$, where $V(G)$ and $E(G)$ denote the vertex set and edge set of $G$, respectively, with $|V(G)| = n$ and $|E(G)| = m$. We directly use $n$ and $m$ to refer to the number of vertices and the number of edges, respectively, when the graph $G$ is clear from the context. We assume that each graph is finite, simple, connected, undirected, and edge-weighted (we refer to Diestel's textbook~\cite{diestel} for standard graph terminology). We use $w: E(G) \rightarrow \mathbb{N}^{+}$ to denote the edge-weight function which implies that $w(e)$ is positive for all $e = \{u,v\} \in E(G)$.

Although our main result, Theorem~\ref{thm:broken_path_system}, is valid for arbitrary (positive edge-weighted) graphs, we focus on (unweighted) \emph{grid graphs}. Specifically, we focus on the \emph{$(p \times q)$-grid graph}, which is a graph of $p \times q$ vertices with vertex set $\{(x, y) \mid x \in \{0, 1, \cdots, p - 1\},~y \in \{0, 1, \cdots, q - 1 \}\}$ for $p,q\in\mathbb{N}^{+}$. We denote the \emph{width} $p$ of a grid graph $G$ by $W_G$, and its \emph{height} $q$ by $H_G$ (we drop the subscript $G$ when the graph we are referring to is clear from the context). Two vertices $v=(x, y)$ and $v'=(x', y')$ ($v \neq v'$) are adjacent, and thus connected by an edge, if and only if $|x - x'| + |y - y'| \leq 1$. Note that $m = \mathcal{O}(n)$ whenever $G$ is a planar graph, as is the case for grid graphs.

In addition to the graph $G$, the atom reconfiguration problem requires definitions of the initial (source) and desired (target) configurations of atoms. The traps containing the atoms in the source and target configurations are identified as subsets of vertices $S \subseteq V(G)$ and $T \subseteq V(G)$, respectively. We assume that $|S| \geq |T|$ since otherwise the problem does not have a solution. Note that $S$ and $T$ need not be disjoint. Each vertex in $S$ has a token on it and the problem is to move the tokens on some $S^\star \subseteq S$ such that all vertices of $T$ eventually contain tokens.

Here, a \emph{move} of token $\tau_i$ ($1 \leq i \leq |S|$) from vertex $u$ to vertex $v$, which is equivalent to a sequence of elementary displacement operations, is \emph{allowed} or \emph{unobstructed} whenever $\tau_i$ is on $u$ and the path $P$ (defined formally in the next section) from $u$ to $v$ in $G$ associated with it is free of tokens (except for $\tau_i$); otherwise, we say that the move is \emph{obstructed} and call each token $\tau_j$ ($j \neq i$) on $P$ an \emph{obstructing token}.
If we attempt to move a token along a path that is not free of tokens, then we say that this move causes a \emph{collision}. Because a collision induces the loss of the colliding atoms, moves that cause collisions are replaced by sequences of moves that do not cause collisions.
Indeed, if the move of token $\tau_i$ from $u$ to $v$ is obstructed, then, assuming $v$ is free of tokens, we can always reduce this move to a sequence of unobstructed moves by replacing the move by a sequence of moves involving the obstructing tokens, i.e., solving the \emph{obstruction problem} (which we solve using a slightly different procedure described in App.~\ref{sec:obstruction_solver}).
A solution to an atom reconfiguration problem is thus a sequence of unobstructed moves, each of which displaces a token from a vertex with a token to a vertex without a token along a path that is free of obstructing tokens.  

\subsection{Path systems as solutions to atom reconfiguration problems on graphs}

Our reconfiguration algorithm constructs a \emph{valid path system} (defined below), updates it, and then finds a sequence of unobstructed moves to execute along every path.

We define a \emph{path} in a graph $G$ as a walk whose sequence of vertices comprises distinct vertices. We define a \emph{walk} (of length $\ell$) in $G$ as a sequence of vertices in $V(G)$, $(v_0, \ldots, v_\ell)$, such that $\{v_i,v_{i + 1}\} \in E(G)$ for all $i \in \{0, \ldots, \ell-1\}$, where $\{v_1, v_2, \ldots, v_{\ell-1}\}$ are the \emph{internal} vertices of the walk. We define a \emph{cycle} in $G$ as a walk of length $\ell \geq 3$ that starts and ends on the same vertex, $v_0 = v_\ell$, and whose internal vertices form a path. 
The weight of a path is given by the distance between its first ($v_0$) and last vertex ($v_\ell$), where the \emph{distance} between $u$ and $v$ in $G$ is the weight of a \emph{shortest path} between $u$ and $v$, computed as the sum of the weights of the edges connecting the vertices of a shortest path $P$ between $u$ and $v$, $w(P) = \sum_{e \in P}{w(e)}$. When the graph is unweighted, or each of its edges has a weight of one, in which case the graph is said to be \emph{uniformly-weighted}, then the distance between $u$ and $v$ corresponds to the minimum number of edges required to get from $u$ to $v$ in $G$. 

A \emph{path system} $\mathcal{P}$ in $G$ is a collection of paths, $\mathcal{P} = \{P_1, P_2, \ldots, P_k\}$, in which each path $P_i \in \mathcal{P}$ for $i \in \mathbb{N}^{+}([1,k])$ is a path from $v_{s_i}$ (source vertex) to $v_{t_i}$ (target vertex), which we denote by $\{v_{s_i}, v_1, v_2, \ldots, v_{t_i}\}$ (single-vertex paths with $v_{s_i} = v_{t_i}$ are also allowed).
We define a \emph{doubly-labeled} vertex as a vertex that is both a source vertex and a target vertex, whether within the same path or in two different paths.
The weight of a path system is given by the sum of the weights of its paths, $w(\mathcal{P}) = \sum_{P \in \mathcal{P}}{w(P)}$. 
Each source vertex $v_{s_i} \in V(P_i)$ associated with a path $P_i\in\mathcal{P}$ contains a token, i.e., $v_{s_i} \in S$; the other vertices in $P_i$ may or may not contain tokens.
A token $\tau_i$ is said to be \emph{isolated} in a path system $\mathcal{P}$ whenever there exists a single-vertex path $P = \{v\} \in \mathcal{P}$ such that the token $\tau_i$ is on the vertex $v$ and no other path in $\mathcal{P}$ contains vertex $v$.

We say that a move (of a non-isolated token) associated with path $P_i \in \mathcal{P}$ is \emph{executable} whenever the target vertex $v_{t_i}$ does not contain a token;
an unobstructed move is trivially executable, whereas an obstructed move can always be reduced to a sequence of unobstructed moves, assuming $v_{t_i}$ contains no token. A path system $\mathcal{P}$ is said to be \emph{valid (for $T$)} or \emph{$T$-valid} whenever there exists some ordering of the moves that makes them executable, and executing all the moves associated with $\mathcal{P}$ results in each vertex in $T$ having a token on it (with the exception of isolated tokens, which need not move). Clearly, in a valid path system, all source vertices are distinct and all target vertices are distinct, although some source vertices can be the same as some target vertices. 
We note that for any valid path system, we can always find an executable move, unless the problem has already been solved with all vertices in $T$ occupied by tokens. 
We also note that, before any move is executed, whenever we have a token on some target or internal vertex, then there must exist a path for which this token is on the source vertex. 

As described in the next subsection~(Sec.~\ref{sec:assignment}), a typical assignment-based reconfiguration algorithm solves an assignment problem to compute a valid path system of minimum weight, i.e., in which each path is one of the many possible shortest paths between its source vertex and its target vertex, chosen arbitrarily among the set of all possible shortest paths. The obstruction problem is then solved to find a sequence of unobstructed moves on each path. Our proposed aro algorithm~(Sec.~\ref{sec:aro_algo}) solves an assignment problem to compute a valid path system, runs the rerouting subroutine, and then runs the ordering subroutine, which yields an ordering of the paths such that the resulting sequence of associated moves guarantees that each move is unobstructed at the time of its execution.
 
\subsection{Baseline reconfiguration algorithm}\label{sec:assignment}

A typical approach to solve an atom reconfiguration problem is to map it onto an assignment problem, which can be solved in polynomial time using an assignment algorithm, such as one based on the Hungarian algorithm~\cite{kuhn1955hungarian,edmonds1972theoretical,Lee2017}. A typical assignment-based reconfiguration algorithm computes a \emph{(valid) distance-minimizing path system}, which is a valid path system, $\mathcal{P}$, whose weight $w(\mathcal{P})$ is minimized, i.e., the resulting control protocol exactly minimizes the total number of displacement operations performed on all atoms. 

\begin{algorithm}[H]
\caption{-- A baseline reconfiguration algorithm}
\label{alg:assignment}
\begin{algorithmic}[1]
\Require A static trap array, $A$, represented as a  positive edge-weighted graph $G=(V,E)$ with $\sum_{e \in E(G)}{w(e)} = \mathcal{O}(n^c)$ for some positive integer $c$; an initial configuration of atoms, $C_0$, represented as a set of source vertices, $S \subseteq V(G)$; and a target configuration of atoms, $C_T$, represented as a set of target vertices, $T \subseteq V(G)$.
\State Compute the distance and a shortest path between all pairs of vertices of $S$ and $T$ by solving the all-pairs shortest path (APSP) problem~($\mathcal{O}(n^3)$, $\mathcal{O}(n^2)$ on uniformly-weighted grid graphs).
\State Compute a distance-minimizing path system, matching every vertex in $T$ to a distinct vertex in $S$ and forming a shortest path between them by solving the assignment problem~($\mathcal{O}(n^3)$).
\State (optional) Using the isolation subroutine, modify the path system to (locally) isolate tokens that do not need to be displaced~($\mathcal{O}(n^5)$).
\State Using the obstruction solver subroutine, find a sequence of unobstructed moves~($\mathcal{O}(n^2)$).
\State (optional) Using the batching subroutine, batch moves to perform control operations on multiple atoms in parallel~($\mathcal{O}(n^3)$).
\end{algorithmic}
\end{algorithm}

To benchmark the performance of our proposed aro algorithm, we use a \emph{baseline reconfiguration algorithm}~(Alg.~\ref{alg:assignment}), which is a slightly modified version of a typical assignment-based algorithm that relies primarily on solving an assignment problem.
This baseline algorithm solves atom reconfiguration problems in five steps, two of which are optional.
In the first and second steps, the \emph{assignment subroutine} computes a valid distance-minimizing path system by solving the all-pairs shortest path (APSP) problem, followed by the assignment problem.
In the third optional step, the \emph{isolation subroutine} isolates a maximal subset of tokens (not contained in a larger subset) found on doubly-labeled vertices.
The purpose of the isolation subroutine is to (attempt to) decrease the number of atoms that have to move by greedily fixing in place some atoms whose deletion from the graph (along with the vertices on which they lie) does not increase the weight of a distance-minimizing path system recomputed in the resulting graph.
We find a maximal subset without the guarantee that it is a maximum subset (the largest subset in the whole graph), because finding a maximum subset is equivalent to the \textsf{NP}-complete problem of minimizing the number of tokens that move, which remains \textsf{NP}-complete even on grids~\cite{Calinescu2007}. 
In the fourth step, the \emph{obstruction solver subroutine} computes a sequence of unobstructed moves associated with the path system. 
In the fifth optional step, the \emph{batching subroutine} combines some of the moves to simultaneously displace multiple atoms in parallel.
We describe each of these subroutines in more detail in App.~\ref{app:assignment}, App.~\ref{app:isolation}, App.~\ref{sec:obstruction_solver}, and App.~\ref{sec:batching}, respectively.

\section{The assignment-rerouting-ordering algorithm}~\label{sec:aro_algo}

To improve the performance of assignment-based reconfiguration algorithms, we propose the assignment-rerouting-ordering (aro) algorithm, which exploits a rerouting subroutine (Sec.~\ref{sec:rerouting}) and an ordering subroutine (Sec.~\ref{sec:ordering}).
The aro algorithm performs fewer transfer operations than the baseline reconfiguration algorithm while still minimizing the total number of displacement operations, thereby strictly improving overall performance in the absence and in the presence of loss. 

\begin{algorithm}[H]
\caption{-- The aro algorithm}\label{alg:aro}
\begin{algorithmic}[1]
\Require A static trap array, $A$, represented as a positive edge-weighted graph $G=(V,E)$ with $\sum_{e \in E(G)}{w(e)} = \mathcal{O}(n^c)$ for some positive integer $c$; an initial configuration of atoms, $C_0$, represented as a set of source vertices, $S \subseteq V(G)$; and a target configuration of atoms, $C_T$, represented as a set of target vertices, $T \subseteq V(G)$.
\State Compute the distance and a shortest path between all pairs of vertices of $S$ and $T$ by solving the all-pairs shortest path (APSP) problem~($\mathcal{O}(n^3)$, $\mathcal{O}(n^2)$ on uniformly-weighted grid graphs).
\State Compute a distance-minimizing path system, matching every vertex in $T$ to a distinct vertex in $S$ and forming a shortest path between them by solving the assignment problem~($\mathcal{O}(n^3)$).
\State (optional) Using the isolation subroutine, modify the path system to (locally) isolate tokens that do not need to be displaced~($\mathcal{O}(n^5)$).
\State Using the rerouting subroutine, reroute the paths in the path system to increase the number of isolated tokens in the path system~($\mathcal{O}(n^5)$ assuming the distance-preserving rerouting subroutine is used, $\mathcal{O}(n^3)$ on uniformly-weighted grid graphs).
\State Using the ordering subroutine, order the paths in the path system, which breaks cycles if they exist~($\mathcal{O}(n^{c+6}m)$, $\mathcal{O}(n^8)$ on uniformly-weighted grid graphs).
\State Using the obstruction solver subroutine, find a sequence of unobstructed moves~($\mathcal{O}(n^2)$).
\State (optional) Using the batching subroutine, batch moves to perform control operations on multiple atoms in parallel~($\mathcal{O}(n^3)$).
\end{algorithmic}
\end{algorithm}

The aro algorithm~(Alg.~\ref{alg:aro}) solves atom reconfiguration problems in seven steps, five of which are imported from the baseline algorithm. In particular, Steps 1, 2, 3, 6 and 7 are standard subroutines that are shared with the baseline algorithm, and Steps 4, 5 are our main contributions.
In the first three steps, similarly to the baseline algorithm, the assignment subroutine computes a valid distance-minimizing path system by solving the all-pairs shortest path (APSP) and assignment problems, and optionally isolates a maximal subset of tokens located on doubly-labeled vertices by using the isolation subroutine. Next, instead of directly computing the sequence of moves to execute as in the baseline algorithm, the aro algorithm seeks to further update the path system. 
In the fourth step, the \emph{rerouting subroutine}~(Sec.~\ref{sec:rerouting}), which is a heuristic, seeks to reroute each path in the path system in an attempt to reduce the number of displaced atoms. 
In the fifth step, the \emph{ordering subroutine}~(Sec.~\ref{sec:ordering}) constructs an ordered path system that admits an ordering of its paths, guaranteeing that each atom moves at most once.
In the sixth step, similarly to the baseline algorithm, the \emph{obstruction solver subroutine} computes a sequence of unobstructed moves associated with the path system; because the paths are ordered, the move associated with each path is unobstructed, and solving the obstruction problem is trivial.
In the seventh (optional) step, the batching subroutine combines some of the moves to simultaneously displace multiple atoms in parallel.

Our current implementation of the aro algorithm has been designed to work with general edge-weighted graphs and runs in time $\mathcal{O}(n^{8})$ on uniformly-weighted grid graphs.
The correctness of the algorithm follows from Theorem~\ref{thm:ops} (App.~\ref{app:ops}), Lemma~\ref{lem-greedy-isolate} (App.~\ref{app:isolation}), and Lemma~\ref{lem-dp} (App.~\ref{app:rerouting}). Theorem \ref{thm:broken_path_system} summarizes all the aforementioned results. We note that our running time estimates on both general edge-weighted graphs and uniformly-weighted grid graphs are not optimized. In particular, we conjecture the existence of an implementation of the aro algorithm restricted to uniformly-weighted grid graphs that runs in $\mathcal{O}(n^4)$, which we believe follows almost immediately from Theorem~\ref{thm:broken_path_system}. We chose not to include the details of said implementation because it only works on uniformly-weighted grid graphs and the running time of the algorithm has no effect on the operational performance. Also, we believe that it can be optimized further, and our focus in this paper is primarily on proving the existence of a polynomial-time ordering subroutine.

\subsection{Rerouting subroutine}\label{sec:rerouting}
The baseline reconfiguration algorithm returns a path system that minimizes the total number of displacement operations, without considering the total number of displaced atoms or the total number of transfer operations. 
Because the problem of minimizing the number of transfer operations is an \textsf{NP}-complete problem~\cite{Calinescu2007} (even on grids), and finding a control protocol that simultaneously minimizes both displacement and transfer operations is impossible for some instances~(Fig.~\ref{fig:fig1a}c), we must resort to using heuristics that attempt to reduce the number of atoms that are displaced.

To reduce the number of displaced atoms while preserving the number of displacement operations, we rely on the \emph{distance-preserving rerouting subroutine}, which attempts to replace each path in the path system with another path of the same weight that contains fewer vertices occupied by atoms.
We refer to \emph{rerouting a path} as updating its sequence of internal vertices while preserving its source and target vertices.
An example of rerouting a path can be found in Fig.~\ref{fig:intro} where $P_1$ is rerouted to $P_1'$.
The intent behind the usage of the rerouting subroutine is to attempt to increase the number of isolated tokens, i.e., tokens that do not have to move.
This version of rerouting was designed specifically for uniformly-weighted grid graphs, but it can easily be generalized.

The subroutine proceeds by iterating over every path in the path system, and, for every path, attempting to reroute it, while keeping the rest of the path system unchanged, in a way that maximizes token isolation while preserving the weight of the path. If the original path is a straight line, then there is nothing to do, as the path cannot be rerouted without increasing its weight. Otherwise, suppose that the source vertex $v_s$ of the path is $(x_1, y_1)$ and the target vertex $v_t$ of the same path is $(x_2, y_2)$, and that, without loss of generality, $x_1 < x_2$ and $y_1 < y_2$. Given $W = |x_1 - x_2|$ and $H = |y_1 - y_2|$, there are a total of ${(W + H)!}/{H!W!}$ shortest paths between vertex $(x_1, y_1)$ and vertex $(x_2, y_2)$.
Using a brute-force approach for every path is inefficient, as the number of rerouted paths to consider for every path is exponential in the Manhattan distance between the source vertex and the target vertex of the path. To avoid an exhaustive search and to speed up computation, we exploit dynamic programming~(App.~\ref{app:rerouting}); exhaustively rerouting paths to increase the number of isolated tokens can then be performed in $\mathcal{O}(n^{5})$ time on general positive edge-weighted graphs and in $\mathcal{O}(n^{3})$ time on uniformly-weighted grid graphs~(Lemma~\ref{distance_preserving_running_time}).

A possible extension of the rerouting subroutine that we have developed, but whose performance we have not quantified, is to search for paths that might not necessarily preserve the minimum total displacement distance or minimum number of displacement operations. The \emph{distance-increasing rerouting subroutine}~(see App.~\ref{app:distance-increasing-rerouting} for a detailed presentation) trades off an increase in displacement operations for a decrease in transfer operations. This subroutine runs in $\mathcal{O}(n^{7})$ time on uniformly-weighted grid graphs  (Lemma~\ref{bounded_termination_increasing}).
We note that the aro algorithm would still be correct and would have the same asymptotic running time if we were to replace the distance-preserving rerouting subroutine with the distance-increasing rerouting subroutine. 

\subsection{Ordering subroutine}\label{sec:ordering}
The \emph{ordering subroutine} constructs an ordered path system that admits a (partial) ordering of its paths, so that the moves associated with the paths are unobstructed at the time of their execution, i.e., atoms displaced in preceding moves do not obstruct the displacement of atoms in succeeding moves and atoms obstructing certain paths are displaced before the atoms on the paths that they obstruct.
The ordering is obtained by finding the partial ordering of the vertices of the \textit{dependency graph}, which is a graph where each path is represented by a vertex, and where each \emph{dependency} of a path $P_i$ on a path $P_j$ is represented by a directed edge from the vertex representing path $P_j$ to the vertex representing path $P_i$. $P_i$ is said to \emph{depend on} $P_j$ if $v_{s_j}$ is an internal vertex in (or the target vertex of) $P_i$, or if $v_{t_i}$ is an internal vertex in (or the source vertex of) $P_j$.
For example, in a partial ordering of the paths in the path system in Fig.~\ref{fig:intro}, $P_2$ would precede $P_3$, as executing the move associated with $P_3$ before executing the move associated with $P_2$ would obstruct the move associated with $P_2$. Such partial orderings are easy to compute in \emph{cycle-free} path systems, which are path systems that not induce any cycle, i.e., that induce a cycle-free graph, or a forest.

This ordering of moves guarantees that each displaced atom undergoes exactly one EDI cycle, thereby restricting the number of transfer operations per displaced atom to its strict minimum of two (one extraction operation and one implantation operation per EDI cycle).
The existence of a polynomial-time procedure to transform any (valid) path system into a \emph{(valid) ordered path system}, in which the paths are ordered such that executing the moves associated with each path displaces every atom at most once, follows from Theorem~\ref{thm:broken_path_system}.

\begin{theorem}\label{thm:broken_path_system}
A valid path system $\mathcal{P}$ in a positive edge-weighted graph $G$ can always be transformed in polynomial time into a valid cycle-free path system $\mathcal{P'}$ such that $w(\mathcal{P'})\leq w(\mathcal{P})$. Moreover, the dependency graph associated with the path system $\mathcal{P'}$ is a directed acyclic graph (DAG), which admits a partial ordering of its vertices, implying a partial ordering of the corresponding moves. 
\end{theorem}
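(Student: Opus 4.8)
\emph{Proof strategy.} The plan is to obtain $\mathcal{P'}$ from $\mathcal{P}$ by a finite sequence of weight‑non‑increasing local \emph{rerouting} (and re‑pairing) operations, using the weight $w(\mathcal{P})$ — a positive integer bounded by a polynomial in $n$ on grids — as a monovariant that forces termination in polynomially many steps. The first ingredient is a combinatorial characterization of validity convenient for bookkeeping: a path system with pairwise‑distinct source vertices (each carrying a token) and pairwise‑distinct target vertices is valid iff the induced source‑to‑target assignment $v_{s_i}\mapsto v_{t_i}$ has no directed cycle through distinct vertices, i.e.\ no subfamily $P_{i_1},\dots,P_{i_r}$ with $v_{t_{i_1}}=v_{s_{i_2}},\dots,v_{t_{i_r}}=v_{s_{i_1}}$ and $r\ge 2$. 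The ``only if'' direction is the deadlock argument (every target on such a chain is permanently occupied, so no move of the chain is ever executable); the ``if'' direction peels off a move whose target is not a source vertex and inducts (this is essentially the correctness of the obstruction solver). A corollary is that contracting each path of a valid path system to a single edge between its endpoints yields a forest — the sense in which $\mathcal{P'}$ will be ``cycle‑free''.

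The dependency graph $D(\mathcal{P})$ has the paths as vertices and an arc $P_i\to P_j$ whenever executing $P_j$ before $P_i$ is forced: either $v_{s_j}\in V(P_i)\setminus\{v_{s_i}\}$ (an obstructing token on $P_i$ must be cleared first) or $v_{t_i}\in V(P_j)\setminus\{v_{t_j}\}$ (a token parked at $v_{t_i}$ after executing $P_i$ would block $P_j$). The key operation handles a directed cycle $P_1\to\cdots\to P_\ell\to P_1$ realized by arcs of the first type (relabelling so): writing $P_i=Q_i\cdot R_i$, with $Q_i$ the prefix of $P_i$ from $v_{s_i}$ to $v_{s_{i+1}}$ and $R_i$ the suffix, replace $P_i$ by $R_i$ (indices mod $\ell$). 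Since $v_{s_{i+1}}\neq v_{s_i}$, each $Q_i$ carries an edge of positive weight, so $w$ strictly decreases; the source and target multisets are unchanged (the sources are merely cyclically reassigned among the $\ell$ paths); and, via the validity characterization, cyclically shifting the assignment inside a dependency cycle does not create an assignment cycle, so validity is preserved. A second‑type arc lying on a directed cycle is eliminated either by rerouting $P_j$ off $v_{t_i}$ at equal weight when possible, or otherwise by the symmetric rotation on reversed paths; and if one additionally wants the \emph{union} graph $\bigcup_i P_i$ (not only the contracted graph) to be a forest, one first re‑pairs to a minimum‑cost assignment and then reroutes each geodesic to avoid the cycles of the union, which the min‑cost exchange inequalities $d(v_{s_i},v_{t_i})+d(v_{s_j},v_{t_j})\le d(v_{s_i},v_{t_j})+d(v_{s_j},v_{t_i})$ permit without changing $w$.

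Termination and conclusion: each operation strictly decreases (or, for a pure uncrossing step, leaves fixed while decreasing a secondary count such as the number of multiply‑used edges) a polynomially bounded non‑negative integer, and each is computable in polynomial time (detecting a directed cycle in $D(\mathcal{P})$, locating a cycle in the union, computing shortest detours), so the process halts after polynomially many steps at some $\mathcal{P'}$ with $w(\mathcal{P'})\le w(\mathcal{P})$ and $D(\mathcal{P'})$ free of directed cycles. A DAG admits a topological order of its vertices, and any linear extension of the induced partial order on the paths is an execution order in which every move is unobstructed when performed — obstructing tokens move before the paths they obstruct, and a token already at its target never sits on a path executed later — so each atom undergoes exactly one EDI cycle, i.e.\ is displaced at most once. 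Since $\mathcal{P'}$ remains valid, it is cycle‑free in the stated sense, which is the theorem.

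The step I expect to be the main obstacle is verifying that the rotation preserves validity, i.e.\ that cyclically shifting the assignment inside a dependency cycle never introduces a directed cycle in the (possibly longer) source‑to‑target assignment. The natural route is to pull a hypothetical new assignment cycle back along the identities $v_{t_{i-1}}=\tau(v_{s_i})=\sigma(v_{s_{i-1}})$ (where $\sigma,\tau$ are the assignments before and after) into a closed walk of the original assignment and show it cannot collapse; the delicate cases are the degenerate ones in which some $R_i$ is a single vertex, turning the corresponding token isolated — harmless, but to be checked explicitly. A related technical point, needed only if the forest is intended for the union graph rather than for the contracted path graph, is the claim that the geodesics realizing a minimum‑cost assignment can always be chosen so that their union is acyclic; this follows from the exchange inequalities above but requires its own argument.
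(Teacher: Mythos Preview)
Your proposal has a genuine gap: the rotation operation does \emph{not} preserve validity in general, and the concern you flag as ``the main obstacle'' is precisely where it fails. Work on the path graph $d\text{--}c\text{--}a\text{--}b$ with paths $P_a=(a,c,d)$, $P_b=(b,a)$, $P_c=(c,a,b)$; the assignment chain is $c\mapsto b\mapsto a\mapsto d$, so the system is valid. Since $c$ is internal to $P_a$ and $a$ is internal to $P_c$, there is a first-type dependency $2$-cycle on $\{P_a,P_c\}$. Your rotation replaces $P_a$ by its suffix $(c,d)$ and $P_c$ by its suffix $(a,b)$, leaving $P_b$ untouched. The new assignment sends $a\mapsto b$ and $b\mapsto a$: a deadlock, hence an invalid system. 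The problem is exactly what you suspected --- the cyclic re-pairing can splice a path \emph{outside} the dependency cycle into a new assignment cycle --- and your proposed pull-back argument cannot rule this out, because the counterexample is real. (Note too that in this instance the underlying graph is already a tree, so the difficulty is purely in the re-pairing, not in any geometric uncrossing.)

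More fundamentally, even a corrected rotation would only address the dependency graph; it would not by itself make the union graph $G[\mathcal{P}']$ a forest, which is what ``cycle-free'' means in the statement. You defer that to a one-line appeal to the exchange inequalities $d(s_i,t_i)+d(s_j,t_j)\le d(s_i,t_j)+d(s_j,t_i)$, but those constrain the \emph{assignment}, not the geometry of the geodesics realizing it; they give no mechanism for choosing shortest paths whose union is acyclic. This is in fact the main technical content of the theorem. The paper establishes it through a pipeline of structural transformations --- merging pairs of paths so they intersect at most once, unwrapping nested paths, and then a delicate cycle-breaking step that locates a ``special'' cycle through a carefully chosen minimum-frequency edge and rebuilds the paths inducing it via one of two complementary (even/odd) re-pairings whose weights sum to twice the original. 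The paper's own counterexample (their MST figure) shows that naive cycle-breaking can strictly increase weight, so a non-trivial argument really is required here; your secondary monovariants (``number of multiply-used edges'') gesture in this direction but are not developed. Only \emph{after} the union is a forest does the paper obtain the DAG property, by running the exact tree solver on each component and reading off a topological order --- essentially discarding the original pairing rather than rotating it.
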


In simple terms, Theorem~\ref{thm:broken_path_system} states that a path system can always be transformed (in polynomial time) into an ordered path system that admits an ordering of its paths such that executing the moves associated with the paths is guaranteed not to cause any collisions.
This theorem is valid for any arbitrary path system defined over any arbitrary (edge-weighted) graph. 

Theorem~\ref{thm:broken_path_system} applies to general (positive) edge-weighted graphs, even when the path system is not distance-minimizing, e.g., is obtained from implementing the distance-increasing rerouting subroutine. Its correctness directly follows from the correctness of the ordering subroutine~(see Alg.~\ref{alg:ordering}), which efficiently constructs a cycle-free path system and finds the ordering of the paths within it. We now describe the four steps of the ordering subroutine, providing formal proofs of supporting lemmas and theorems in App.~\ref{app:ordering}.

\begin{algorithm}[H]
\caption{-- The ordering subroutine}\label{alg:ordering}
\begin{algorithmic}[1]
\Require A valid path system, $\mathcal{P}$, defined on a positive edge-weighted graph $G$ with $\sum_{e \in E(G)}{w(e)} = \mathcal{O}(n^c)$ for some positive integer $c$.
\State Merge path system ($\mathcal{O}(n^{c+4}m^2)$, $\mathcal{O}(n^7)$ on uniformly-weighted grid graphs).
\State Unwrap path system ($\mathcal{O}(n^2m)$, $\mathcal{O}(n^{3})$ on uniformly-weighted grid graphs).
\State Detect/break cycles in path system ($\mathcal{O}(n^{c + 6}m)$, $\mathcal{O}(n^{8})$ on uniformly-weighted grid graphs).
\State Order path system ($\mathcal{O}(n^3)$). 
\end{algorithmic}
\end{algorithm}

First, the \emph{merging step} converts a path system into a (non-unique) \emph{merged path system} (\emph{MPS}). 
A merged path system is a path system such that no two paths intersect more than once, with the intersecting sections of the two paths possibly involving more than one vertex (all vertices in the intersection being consecutive in the vertex sequence of both paths). The merging operation does not increase the total weight of the path system, but it can decrease it; if the path system is generated by the Hungarian algorithm, then the total weight of the path system is already minimized.
A valid merged path system can be computed in time $\mathcal{O}(n^{c+4}m^2)$ for a graph $G$ where $\sum_{e \in E(G)}{w(e)} = \mathcal{O}(n^{c})$ for some positive integer~$c$ (see Lemma~\ref{lemma:mps} in App.~\ref{app:mps}).

Second, the \emph{unwrapping step} converts an MPS into an \emph{unwrapped path system} (\emph{UPS}) by recomputing tangled paths. An unwrapped path system is a MPS such that no two paths within it are tangled. 
Two paths $P_i$ and $P_j$ are \emph{tangled} if $P_i$ wraps or is wrapped by $P_j$, where a path $P_i$ is said to be \emph{wrapped} in another path $P_j$ if it is entirely contained in it; if $P_i$ is wrapped by $P_j$, then $P_j$ wraps $P_i$. Unwrapping the paths that a path $P_j$ wraps is performed by sorting their respective source traps and target traps separately based on their order of occurrence within $P_j$, and assigning every source trap to the target trap of the same order. 
A valid UPS can be obtained from a valid MPS in time $\mathcal{O}(n^2m)$~(see Lemma~\ref{lemma:ups} in App.~\ref{app:ups}).

Third, the \emph{cycle-breaking step} converts an UPS into a \emph{cycle-free path system} (CPS).
The cycle-breaking step modifies the path system such that the graph induced on the modified path system is a cycle-free graph (a forest). 
Even though a graph can have exponentially many cycles, we prove that the cycle-breaking step can be executed in polynomial time, i.e., a valid CPS can be obtained from a valid UPS in time $\mathcal{O}(n^{c + 6}m)$~(see Theorem~\ref{lemma:cycle_breaking_termination} in App.~\ref{app:cps}). This result relies on the existence of ``special cycles'' in a graph with cycles~(App.~\ref{app:special-cycle}), which can be found using a procedure that we provide in App.~\ref{app:find-cycles}. Once a special cycle has been found, the set of paths that induce it can be found and updated
to break the special cycle~(see App.~\ref{app:break-cycles}), and the process is repeated on the resulting path system. This subroutine terminates in polynomial time and produces a cycle-free path system~(see Lemma~\ref{lemma:cycle_breaking_termination} in App.~\ref{app:termination}). 
The reason for using the (time-consuming) cycle-breaking procedure to break cycles instead of, e.g., computing minimum spanning trees (MSTs) using Theorem~{2.1} of Călinescu \emph{et al.}~\cite{Calinescu2007}, is that computing an MST to break cycles might in fact increase the total weight of the path system~(see Fig.~\ref{fig:mst_counterexample} for an example on a weighted graph). 

\begin{figure}[t]
\includegraphics[]{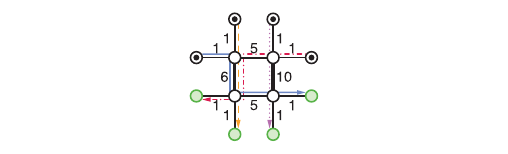}
\caption{
\label{fig:mst_counterexample}
\textbf{Justification for the cycle-breaking procedure.} 
Example of a path system defined on a weighted graph that induces a cycle that cannot be broken by computing a minimum spanning tree (MST) without increasing the total weight of the path system. The initial path system has a total weight equal to $46$ (the red path and the blue path each has a weight of 13, the orange path has a weight of 8 and the purple path has a weight of 12). The MST of the graph induced on the path system includes all the edges of the graph except for the edge of weight $10$. The path system generated by computing all-pairs shortest paths on the MST and then computing a source-target matching has a total weight of $52 = 1 \times 8 + 5 \times 4 + 6 \times 4$, irrespective of the specific choice of the matching (among $4! = 24$ possibilities), as the edges of weight $5$ will each be in two paths and the edge of weight $6$ will be in $4$ paths. A similar example can be constructed for the case of uniformly-weighted graphs, where deleting some edge might increase the weight of the path system.
}
\end{figure} 


Fourth, the \emph{ordering step} constructs an \emph{ordered path system} (OPS) by ordering the moves associated with a CPS, e.g., by constructing the DAG associated with the path system. This step can be performed in time $\mathcal{O}(n^{3})$~(see Theorem~\ref{thm:ops} in 
App.~\ref{app:ops}). The moves associated with the OPS can be trivially computed using the obstruction solver subroutine~(see Sec.~\ref{sec:obstruction_solver}).

Having described in detail the baseline reconfiguration algorithm and the aro algorithm, we now proceed to numerically quantify their operational performance in the absence of loss~(see Sec.~\ref{sec:benchmarking_lossless}) and in the presence of loss~(see Sec.~\ref{sec:benchmarking_loss}).

\section{Quantifying performance in the absence of loss}~\label{sec:benchmarking_lossless}

We numerically quantify the performance of the aro algorithm in the absence of loss. We choose the performance metrics to be the total number of displacement operations, $N_\nu$, the total number of transfer operations, $N_\alpha$, and the total number of control operations, $N_T=N_\alpha+N_\nu$. The values computed for these performance metrics can be compared to the values obtained using the baseline reconfiguration algorithm and the 3-approx algorithm; the total number of displacement operations is minimized by using the baseline reconfiguration algorithm, whereas the total number of transfer operations is bounded by at most 3~times its optimal value by using the 3-approx algorithm.
These performance metrics directly correlate with the operational performance obtained in the presence of loss~(see Sec.~\ref{sec:benchmarking_loss}), quantified in terms of the mean success probability; reducing the total number of displacement and transfer operations results in fewer atoms lost, whereas reducing the number of displaced atoms concentrates the loss probability in as few atoms as possible, simplifying the problem of filling up empty target traps in subsequent reconfiguration cycles.  

Although our results are valid for any arbitrary geometry, and more generally for atom reconfiguration problems defined on arbitrary graphs, we focus our analysis on the problem of preparing compact-centered configurations of $N_a^T=\sqrt{N_a^T}\times\sqrt{N_a^T}$ atoms in rectangular-shaped square-lattice arrays of $N_{t}=\sqrt{N_a^T}\times\eta\sqrt{N_a^T}=\eta N_a^T$ static traps, where $\eta=N_{t}/N_a^T$ is the \emph{trap overhead factor}, which quantifies the overhead in the number of optical traps needed to achieve a desired configuration size.
In the absence of loss, the overhead factor is typically chosen based on the desired baseline success probability, which depends on the probability of loading at least as many atoms as needed to satisfy the target configuration, and thus on the loading efficiency, $\epsilon$. 
In the presence of loss, the overhead factor is typically a non-linear function of the configuration size that is chosen based on the desired mean success probability~\cite{Cimring2023}.

\begin{figure}[t]
\includegraphics[]{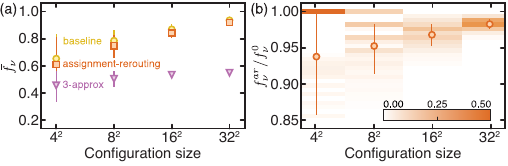}
\caption{
\label{fig:lossless_rerouting_ordering}
\textbf{Reducing the fraction of displaced atoms using the rerouting subroutine in the absence of loss.}
(a)~Mean fraction of displaced atoms, $\bar{f}_\nu=\langle N_a^\nu/N_a^0\rangle$, computed for various configuration sizes for the baseline (yellow circle), assignment-rerouting (orange square), and 3-approx (purple inverted triangles) algorithms. 
(b)~Distribution of the relative fraction of displaced atoms and its mean value for various configuration sizes computed as the ratio of the fraction of displaced atoms for the assignment-rerouting algorithm and the baseline reconfiguration algorithm.
}
\end{figure}

For each target configuration size, we sample over a thousand initial configurations of atoms by distributing $N_a^0$ atoms at random over $N_{t}$ traps. We then count the number of transfer and displacement operations for each displaced atom within each realization and compute the ensemble average over the distribution of initial configurations. The number of atoms in the initial configuration satisfies a binomial distribution, $N_a^0 \sim$ \text{Bino}($N_{t},\epsilon$), where the loading efficiency is conservatively chosen to be $\epsilon=0.5$ and the trap overhead factor is chosen to be $\eta=1/\epsilon=2$. As computed from the cumulative distribution function of the binomial distribution, the baseline success probability is thus $\bar{p}=0.5$, i.e., half the initial configurations contain enough atoms to satisfy the target configuration; however, we restrict our analysis to successful reconfiguration protocols with $N_a^0\geq N_a^T$. 

We first compute the reduction in the number of displaced atoms, $N_a^\nu$, or, equivalently, the fraction of displaced atoms, $f_\nu=N_a^\nu/N_a^0$, achieved by supplementing our baseline reconfiguration algorithm with the rerouting subroutine to obtain the assignment-rerouting algorithm. If we consider the baseline reconfiguration algorithm, the mean fraction of displaced atoms increases with configuration size~(Fig.~\ref{fig:lossless_rerouting_ordering}a) from $\bar{f}_\nu^{0}=0.65(18)$ for preparing a configuration of $N_a^T=4\times4$ atoms to $\bar{f}_\nu^{0}=0.94(2)$ for preparing a configuration of $N_a^T=32\times32$ atoms.  Nearly all atoms are thus displaced for large configuration sizes, in contrast with the 3-approx algorithm that displaces only slightly more than half of the atoms ($0.55(2)$ for a configuration of
$N_a^T=32\times32$ atoms). 
Compared to the baseline reconfiguration algorithm, the assignment-rerouting algorithm slightly reduces the fraction of displaced atoms~(Fig.~\ref{fig:lossless_rerouting_ordering}b), achieving a mean relative fraction of displaced atoms of $\langle f_\nu^{ar}/f_\nu^{0}\rangle=0.94(^{+6}_{-8})$ for preparing a configuration of $N_a^T=4\times4$ atoms and $\langle f_\nu^{ar}/f_\nu^{0}\rangle=0.983(7)$ for preparing a configuration of $N_a^T=32\times32$ atoms. The mean relative gain in performance is thus larger for smaller configuration sizes, even though a gain in performance is not always possible for small configuration sizes for which the baseline reconfiguration algorithm already minimizes the total number of displaced atoms ($f_{\nu}^{ar}/f_{\nu}^{0}=1$). The gap in the mean fraction of displaced atoms is narrower for larger configuration sizes because we expect longer paths in the path system, which makes the rerouting subroutine less effective, as it becomes less likely that rerouting a path while keeping the rest of the path system unchanged will isolate atoms. The key takeaway of Fig.~\ref{fig:lossless_rerouting_ordering} is that assignment-rerouting is absolutely better than baseline in terms of the fraction of displaced atoms, with the largest relative improvement obtained for small trap arrays; however, its performance does not match the performance of the 3-approx. Any heuristic algorithm exploiting the assignment subroutine would thus be better off using the rerouting subroutine, if not for the increase in computational runtime.

\begin{figure}[t!]
\includegraphics[]{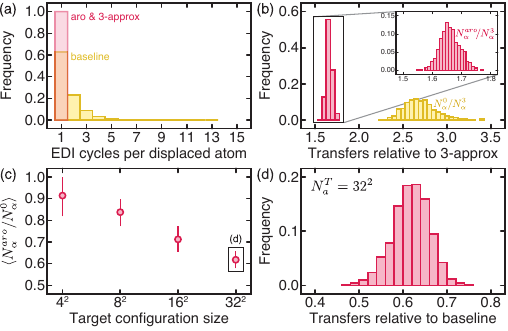}
\caption{
\label{fig:lossless_aro}
\textbf{Reducing the number of transfer operations using the rerouting and ordering subroutines in the absence of loss for a target configuration of $\bm{32\times32}$ atoms.}
(a)~Distribution of the number of EDI cycles per displaced atom using the baseline (yellow), aro (red), and 3-approx (red) algorithms.
(b)~Distribution of the number of transfer operations computed relative to the 3-approx for the baseline (yellow) and aro (red) algorithms. 
(c)~Mean relative number of transfer operations for the aro algorithm computed relative to the baseline reconfiguration algorithm for various configuration sizes.
(d)~Distribution of the relative number of transfer operations for the aro algorithm computed relative to the baseline reconfiguration algorithm. 
}
\end{figure}

We then compute the reduction in the number of EDI cycles per displaced atom obtained by implementing the ordering subroutine. 
The baseline reconfiguration algorithm executes moves in an arbitrary order, possibly displacing the same atom multiple times, and thus having it undergo multiple EDI cycles, each of which entails unnecessary extraction and implantation operations. 
The ordering subroutine improves on the baseline reconfiguration algorithm by ordering the moves so that each atom undergoes at most one EDI cycle~(Fig.~\ref{fig:lossless_aro}a).
The number of transfer operations per displaced atom is strictly reduced to two, as it is the case for the 3-approx algorithm, given one extraction and one implantation operation per EDI cycle.

We further compute the reduction in the total number of transfer operations by implementing the full aro algorithm.
We express the number of transfer operations relative to the 3-approx algorithm, $N^{aro}_\alpha/N^{3}_\alpha$, which performs at most three times the minimum number of transfer operations. For preparing a configuration of $N_a^T=32\times32$ atoms, the baseline reconfiguration algorithm performs on average $2.7(2)$ times more transfer operations than the 3-approx algorithm, whereas the aro algorithm performs on average $1.66(4)$ times more transfer operations~(Fig.~\ref{fig:lossless_aro}b).
The mean relative fraction of transfer operations performed by the aro algorithm over the baseline reconfiguration algorithm decreases with configuration size~(Fig.~\ref{fig:lossless_aro}c), ranging from $0.91(9)$ for preparing a configuration of $N_a^T=4\times4$ atoms to $0.62(4)$ for preparing a configuration of $N_a^T=32\times32$ atoms~(Fig.~\ref{fig:lossless_aro}d). The downward trend can be explained by the deterioration of the efficacy of the baseline reconfiguration algorithm with an increase in the size of the configurations. In large configurations, it is more likely that the target of a move is a vertex in the path associated with another move. Depending on the ordering of the moves, moves that were initially unobstructed may become obstructed, requiring the obstructing atom to be transferred again (See App.~\ref{sec:obstruction_solver}). The aro algorithm remedies this issue by ensuring that each displaced atom is transferred exactly once.

\begin{figure}[t]
\includegraphics[]{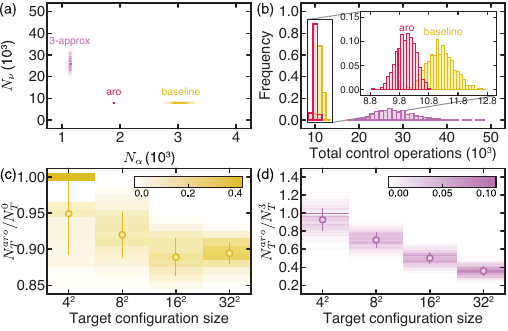}
\caption{
\label{fig:lossless_absolute_comparison}
\textbf{Reducing the total number of control operations using the aro algorithm in the absence of loss for a target configuration of $\bm{32\times32}$ atoms.}
(a)~Distribution of the number of transfer and displacement operations for the baseline (yellow), aro (red), and 3-approx (purple) algorithms.
(b)~Distribution of the number of control operations for the baseline (yellow), aro (red), and 3-approx (purple) algorithms.
(c-d)~Distribution of the relative number of control operations and its mean value for various configuration sizes for the aro algorithm computed relative to the baseline (yellow) and 3-approx algorithms (purple).
}
\end{figure}

We finally compute the total number of control operations by summing the number of transfer operations and the number of displacement operations for all atoms~(Fig.~\ref{fig:lossless_absolute_comparison}a).
Because both the baseline and the aro algorithms exactly minimize the number of displacement operations, and the aro algorithm performs fewer transfer operations, the aro algorithm ultimately performs fewer total control operations~(Fig.~\ref{fig:lossless_absolute_comparison}b).
In addition, although the 3-approx algorithm performs fewer transfer operations than both the baseline and aro algorithms, it performs significantly more displacement operations. The 3-approx algorithm thus performs worse than the baseline and aro algorithms in terms of total number of control operations~(Fig.~\ref{fig:lossless_absolute_comparison}b).
The mean relative number of control operations computed for the aro algorithm relative to the baseline (3-approx) algorithm decreases with configuration size, ranging from $0.95(6)$ ($0.93(12)$) for a configuration of $N_a^T=4\times4$ atoms to $0.89(2)$ ($0.36(5)$) for a configuration of $N_a^T=32\times32$
~(Fig.~\ref{fig:lossless_aro}c-d). The uptrend observed in Fig.~\ref{fig:lossless_absolute_comparison}c between $N_a^T = 16 \times 16$ and $N_a^T = 32 \times 32$ is due to displacement operations accounting for most of the control operations when the target configuration size is large. Hence, the gap between the performance of the aro algorithm and that of the baseline algorithm is smaller in terms of the total number of control operations, as both algorithms minimize displacements. The 3-approx algorithm, however, does not minimize displacements, which is reflected in the strictly downward trend in Fig.~\ref{fig:lossless_absolute_comparison}d.

The total number of control operations correlates with the mean success probability in the presence of loss when the duration and the efficiency of control operations are comparable for displacement and transfer operations.
Hence, the relative reduction in the number of control operations achieved by the aro algorithm translates into a relative increase in the mean success probability in the presence of loss, which we quantify in the next section.

\section{Quantifying performance in the presence of loss}~\label{sec:benchmarking_loss}

We numerically evaluate the performance of the aro algorithm in the presence of loss using realistic physical parameters following the approach outlined in our accompanying paper introducing the redistribution-reconfiguration (red-rec) algorithm~(see~\cite{Cimring2023}). 
We conservatively choose the trapping lifetime to be $60~\text{seconds}$ and the success probability of elementary displacement and transfer operations to be $0.985$.
The red-rec algorithm is a heuristic algorithm that seeks to increase operational performance by performing parallel control operations. The key idea of the algorithm is, first, to redistribute atoms from columns containing more atoms than needed (donors) to columns containing fewer atoms than needed (receivers) and, then, to reconfigure each column using an exact 1D reconfiguration algorithm.
In addition to increasing operational performance, it also enables efficient implementation on a low-latency feedback control system with fast computational running time. For the current study, we implement a slightly improved -- more computationally efficient -- version of the red-rec algorithm.

We choose the performance metric to be the mean success probability obtained by averaging the success probability over the distribution of random initial configurations and loss processes. 
In the presence of loss, larger arrays are required to load enough atoms to replace atoms lost during multiple reconfiguration cycles.
As the height of the trap array is increased, there is a sharp transition between near-certain success and near-certain failure~(Fig.~\ref{fig:loss}a-c). 
The relative gain in performance achieved by the aro algorithm over the baseline algorithm is maximized at the inflection point where $\bar{p}=0.5$.

\begin{figure}[t]
\includegraphics[]{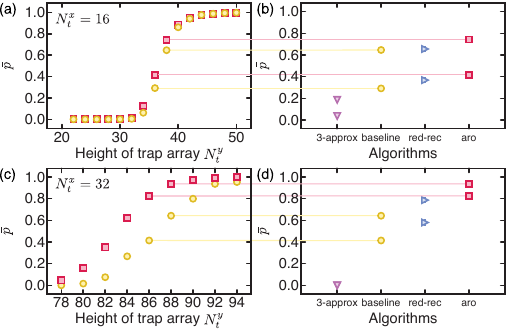}
\caption{
\label{fig:loss}
\textbf{Increasing the mean success probability using the aro algorithm.} Mean success probability, $\bar{p}$, for preparing a configuration of $N_a^T=N_t^x\times N_t^x$ atoms in an array of $N_t=N_t^x\times N_t^y$ traps for 
(a)~$N_t^x=16$, (b)~$N_t^x=16$, $N_t^y=36,38$, (c)~$N_t^x=32$, and (d)~$N_t^x=32$, $N_t^y=86,~88$.
The markers represent the 3-approx (purple inverted triangle), baseline (yellow disk), red-rec (blue triangle), and aro (red square) algorithms.
}
\end{figure}

In the presence of loss, the aro algorithm outperforms the 3-approx, baseline, and red-rec algorithms~(Fig.~\ref{fig:loss}b-d). Comparing the baseline and aro algorithms, the mean success probability increases from $\bar{p}_a=0.29(3)$ ($0.65(3)$) to $\bar{p}_{aro}=0.41(4)$ ($0.74(3)$) for a static trap array of $16\times36$ ($16\times38$) traps, and from $\bar{p}_a=0.41(4)$ ($0.64(3)$) to $\bar{p}_{aro}=0.82(3)$ ($0.94(2)$) for a static trap array of $32\times86$ ($32\times88$) traps. The relative gain of performance is $\bar{p}_{aro}/\bar{p}_{a}=1.4(3)$~($1.1(1)$) and $\bar{p}_{aro}/\bar{p}_{a}=2.0(2)$ ($1.5(1)$) for a static trap array of $16\times36$ ($16\times38$) and 
$32\times86$ ($32\times88$) traps, respectively.

The relative improvement in performance is offset by a significant increase in computational running time. The red-rec algorithm thus maintains an operational advantage when real-time computation is needed. Our current implementation of the aro algorithm has, however, not been optimized for real-time operation, and  we foresee opportunities to further improve its running time.

\section{Conclusion}
We have introduced the assignment-rerouting-ordering (aro) algorithm and shown that it outperforms the baseline reconfiguration algorithm, which is a typical assignment-based algorithm, both in the absence and the presence of loss. The aro algorithm exactly minimizes the total number of displacement operations, while reducing the number of displaced atoms and restricting the number of transfer operations to strictly two (one extraction and one implantation operation per EDI cycle). 


Further gains in performance could possibly be achieved by trading off an increase in displacement operations for a decrease in transfer operations. 
As we have seen, to minimize the total number of displacement operations, the baseline reconfiguration algorithm needs to displace nearly all atoms, including those that are already located in the target region of the trap array.
Reducing the number of displaced atoms would reduce the number of transfer operations.
One way to reduce the number of displaced atoms is to impose the constraint that a subset of the atoms located in the target region are fixed in place, i.e., the traps containing them remain idle.
This subset could be selected at random, searched over, or identified based on heuristics, e.g., fixing in place atoms located near the geometric center of the target configuration to minimize their corruption, as they are the most costly to replace.
Another approach is to replace the distance-preserving rerouting subroutine with the distance-increasing rerouting subroutine; as the latter subroutine was designed to trade off an increase in displacement operations for a decrease in transfer operations. 

We have alluded to the possibility of using adaptive algorithms to further improve operational performance. By adaptive algorithms, we mean running different reconfiguration algorithms in each reconfiguration cycle, such that the choice of which algorithm to execute is possibly dependent on the measured atom configuration. Deploying adaptive algorithms requires quantifying the performance of the algorithms in terms of how the atoms are distributed in the initial configuration. 

The current implementation of the aro algorithm has not been optimized for runtime performance, which prevents numerical benchmarking for large configuration sizes and near-term deployment in an operational context. Future work will focus on improving the runtime performance of the aro algorithm, possibly at the cost of integrating additional heuristics, quantifying performance on arbitrary graphs, and demonstrating applicability in an operational setting. 
 
Our results highlight the value to be gained from extending formal results from combinatorial optimization and graph theory in an operational setting.
Additional research opportunities exist in developing exact and approximation algorithms for atom reconfiguration problems that simultaneously optimize multiple objective functions, as well as for problems with  labeled, distinguishable atoms that encode quantum information. 
Besides enabling the preparation of large configurations of atoms, reconfiguration algorithms can also be used to implement quantum information protocols on quantum devices with dynamic connectivity graphs, e.g., by displacing atoms with respect to one another to perform gates among otherwise non-interacting atoms~\cite{Brandhofer2021, Bluvstein2022, Tan2022}.

Finally, we emphasize that our results are general and thus extend beyond the scope of atom reconfiguration problems. We encourage the interested readers to read our complementary study on the red-rec algorithm~\cite{Cimring2023}. The source code for the benchmarking module and reconfiguration algorithms will be made available upon reasonable request. 

\section{Acknowledgements}
This work was supported by the Canada First Research Excellence Fund (CFREF).
Amer E. Mouawad's work was supported by the Alexander von Humboldt Foundation and partially supported by the PHC Cedre project 2022 ``PLR''. Research by Remy El Sabeh, Stephanie Maaz, and Naomi Nishimura was supported by the Natural Sciences and Engineering Research Council of Canada.

\begin{appendix}

\begin{figure}[t]
\includegraphics[]{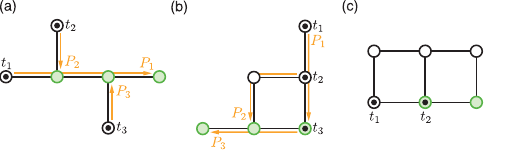}
\caption{
\label{fig:fig1a}
\textbf{Examples of atom reconfiguration problems on graphs.}
(a)~Example problem for which ordering the paths of a path system reduces the number of transfer operations. 
Executing the move associated with either $P_2$ or $P_3$ (or both) before the move associated with $P_1$ would force $\tau_2$ or $\tau_3$ (or both) to move twice.
(b)~Example problem for which a token (here, $\tau_3$) can be isolated without increasing the weight of the path system. 
(c)~Example problem for which the displacement distance and the number of displaced tokens cannot be simultaneously minimized. Two tokens need to move to minimize the displacement distance, whereas one token needs to move if minimizing displacements is not imposed.
}
\end{figure}

\section{Assignment subroutine}\label{app:assignment}
The assignment subroutine first solves the \emph{all-pairs shortest path} (APSP) problem in order to compute the shortest pairwise distances between occupied traps in the initial configuration (represented as the set of source vertices $S$) and occupied traps in the target configuration (represented as the set of target vertices $T$). More generally, the algorithm consists of computing shortest paths between every vertex in $S$ and every vertex in $T$. 
On uniformly-weighted grid graphs, the APSP problem can be solved in a time that is quadratic in the number of vertices ($\mathcal{O}(n^2)$), as the shortest distance between any two traps in the grid is equal to the Manhattan distance between them.
In the general edge-weighted case, this problem can be solved in $\mathcal{O}(n^3)$ time using the Floyd-Warshall algorithm~\cite{floyd1962algorithm}.
Alternatively, because it is more easily amenable to parallel implementation, e.g., on a GPU, Dijkstra's algorithm (run from each source vertex) could replace the Floyd-Warshall algorithm to find shortest paths between pairs of vertices in positive edge-weighted graphs.

The assignment subroutine then solves the assignment problem to find one of possibly many sets of pairs of source and target traps that exactly minimize the total distance traveled by all atoms, and thus exactly minimize the total number of displacement operations. The problem of computing a distance-minimizing path system can be reduced to solving an assignment problem using the assignment subroutine. A solution to the \emph{assignment problem} consists of finding a bijection $f: A \rightarrow B$ that minimizes the total cost $\sum_{a \in A} C(a, f(a))$, 
where $A$ and $B$ are two sets of equal cardinality and $C: A \times B \rightarrow \mathbb{R}$ is a positive cost function. In the absence of a surplus of atoms, i.e., when $|S| = |T|$, we set $A = S$, $B = T$, and for any pair $a \in A$, $b \in B$, we choose the cost function $C(a, b)$ to be the distance between vertex $a$ and vertex $b$. Clearly, a solution to the assignment problem is a matching between source traps and target traps that minimizes the total displacement distance. This assignment problem has a polynomial-time solution; the first documented solution, which was attributed to Kuhn as the Hungarian algorithm~\cite{kuhn1955hungarian}, has an asymptotic running time of $\mathcal{O}(|A|^4)$ that was later improved to $\mathcal{O}(|A|^3)$~\cite{edmonds1972theoretical}.

In our current implementation of the assignment subroutine, we solve the APSP problem using the Floyd-Warshall algorithm~\cite{floyd1962algorithm}, which we modify to store one of the shortest paths between each pair of source and target vertices. We then use the Hungarian algorithm to solve the assignment problem and find one of many possible pairings of source and target traps that minimize the total displacement distance; each pair is associated with the shortest path computed and stored earlier.  

We now show how the problem of computing a $T$-valid distance-minimizing path system (in  positive edge-weighted graph) can be reduced to the assignment problem, even with the existence of surplus atoms. 
Given a positive edge-weighted graph $G$, and two sets $S, T \subseteq V(G)$ such that $|S| \geq |T|$, we define a set $U$ ($|U| = |S| - |T|$), and we set $A = S$ and $B = T \cup U$. As for the cost function, we define it for any tuple in $A \times B$ as 
\[ C(a, b) = 
    \begin{cases} 
      d_G(a, b) & b \in T\\
      W & \text{otherwise,} \\
   \end{cases}
\]
\noindent
where $d_G(a,b)$ is the weight of a shortest path between $u$ and $v$ in $G$ and $W$ is a large number (we set $W$ large enough to ensure that it is larger than $|S|$ times the weight of the heaviest shortest path). Running the Hungarian algorithm on the constructed instance will yield an assignment such that each vertex in $B \setminus U$ is assigned to a source vertex in $A$; this subset of the matching can be used to construct a valid distance-minimizing path system $\mathcal{P}$ (where for each pair we pick one of the many possible shortest paths)~\cite{Calinescu2007}. We provide a proof of correctness for completeness. 

\begin{lemma}
Given an $n$-vertex (positive) edge-weighted graph $G$ and two sets $S,T \subseteq V(G)$ such that $|S| \geq |T|$, we can compute in time $\mathcal{O}(n^3)$ a $T$-valid distance-minimizing path system $\mathcal{P}$. 
\end{lemma}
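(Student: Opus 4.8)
The plan is to prove the lemma constructively by giving the two‑phase procedure already sketched above---compute all pairwise distances, then solve a single assignment instance---and then verifying its running time, that its output weight is minimum, and that the output is $T$‑valid. First I would run the Floyd--Warshall algorithm on $G$, modified so that besides the distances $d_G(u,v)$ it records for every ordered pair a representative shortest path (via predecessor pointers); this costs $\mathcal{O}(n^3)$. I would then build the assignment instance with $A = S$, $B = T \cup U$ for a set $U$ of $|S|-|T|$ fresh fictitious vertices and the cost matrix $C$ displayed above, and run the Hungarian algorithm; since $|A| = |B| = |S| \le n$ this costs $\mathcal{O}(|A|^3) = \mathcal{O}(n^3)$ and returns a minimum‑cost perfect matching $f\colon A \to B$. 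Finally I set $\mathcal{P} = \{\, P_t : t \in T \,\}$, where $P_t$ is the stored shortest $f^{-1}(t)$--$t$ path (a single‑vertex path when $f^{-1}(t) = t$). The total running time is $\mathcal{O}(n^3)$, so only correctness remains.

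For distance‑minimality I would argue in both directions. On one hand, every perfect matching of $A$ to $B$ sends exactly $|U|$ sources to fictitious vertices, so the constant $W$ appears the same number of times in every assignment and is immaterial to the optimization; hence minimizing $\sum_{a} C(a,f(a))$ is equivalent to minimizing $\sum_{t \in T} d_G(f^{-1}(t),t) = w(\mathcal{P})$, and the Hungarian algorithm returns a matching realizing this minimum over all injections of $T$ into $S$. On the other hand, any $T$‑valid path system $\mathcal{Q}$ induces such an injection: map each $t \in T$ that ends up occupied by its original (isolated) token to itself, and every other $t$ to the source of a path of $\mathcal{Q}$ with target $t$; distinctness of source and target vertices in a valid path system makes this an injection, and its total distance is at most $w(\mathcal{Q})$ because each path of $\mathcal{Q}$ has weight at least the distance between its endpoints. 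Combining the two yields $w(\mathcal{P}) \le w(\mathcal{Q})$ for every $T$‑valid $\mathcal{Q}$.

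The remaining part is to show $\mathcal{P}$ is $T$‑valid, i.e.\ that some ordering of its moves is executable and leaves every vertex of $T$ occupied; I expect this to be the main obstacle. I would run the obstruction solver subroutine of Sec.~\ref{sec:baseline} on $\mathcal{P}$ and maintain the invariant that the current path system has distinct source vertices, distinct target vertices whose set is $T$, and minimum weight among all path systems obtained from an injection of $T$ into $S$ by picking shortest paths. The crux is that this minimum weight is preserved by the target‑swap step: if the move of $P_i$ is obstructed by the token at $v_{s_j}$, then $v_{s_j}$ lies on the stored shortest $v_{s_i}$--$v_{t_i}$ path, so $d_G(v_{s_i},v_{t_i}) = d_G(v_{s_i},v_{s_j}) + d_G(v_{s_j},v_{t_i})$, and by the triangle inequality $d_G(v_{s_i},v_{t_j}) + d_G(v_{s_j},v_{t_i}) \le d_G(v_{s_i},v_{t_i}) + d_G(v_{s_j},v_{t_j})$; thus the swap never increases the total weight, and since the original weight was already minimum and the swapped system still comes from an injection of $T$ into $S$, the weight is unchanged and the invariant holds. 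Given the invariant, whenever $T$ is not fully occupied there is an unoccupied $v \in T$, which is the target of some path of the current system, so its move is executable and the procedure always has a move to make; termination in $\mathcal{O}(n^2)$ steps follows from the potential argument already noted for the obstruction solver. This exhibits the required execution order, so $\mathcal{P}$ is $T$‑valid. The delicate point I would be careful about is phrasing the invariant purely in terms of ``injection of $T$ into $S$'' rather than ``valid path system,'' so that the argument for weight‑invariance under swaps does not secretly assume the very $T$‑validity we are trying to prove.
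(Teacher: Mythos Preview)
Your proposal is correct and follows essentially the same approach as the paper: compute all-pairs shortest paths via Floyd--Warshall, solve the assignment instance via the Hungarian algorithm, and verify validity through the obstruction-solver swap procedure. Your write-up is considerably more detailed than the paper's---you make explicit the observation that the fictitious cost $W$ appears the same number of times in every perfect matching and hence is immaterial, and you spell out the triangle-inequality argument showing that the target swap cannot increase total weight---whereas the paper leaves these as ``one can check'' assertions.
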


\begin{proof}
Note that $C(a, b)$ is equal to the weight
of a shortest path in $G$ connecting $a$ and $b$ whenever $a \in S$ and $b \in T$ and $W$ otherwise. After we obtain the matching, we can move the tokens matched to vertices in $B \setminus U$ as follows. Assume we want to move token $\tau_i$; if the path $\tau_i$ would take to reach its target has another token $\tau_j$ on it, we switch the targets of the two tokens and we move $\tau_j$ instead, in a manner similar to the obstruction solver subroutine. One can check that the weight of the edges traversed does not exceed the weight of the path system. On the other hand, the optimum solution (in fact, any solution) must move tokens to targets and cannot do better than the total weight of the shortest paths in a minimum-weight assignment.
The running time follows from the fact that we run the Floyd-Warshall algorithm followed by the Hungarian algorithm, each of which has a running time of $\mathcal{O}(n^3)$.
\end{proof}

\section{Isolation subroutine}\label{app:isolation}
A common issue associated with the assignment subroutine is that it might label a vertex as both a source and a target vertex, either labeling a target vertex as its own source or labeling a vertex as the target of one path and the source of another. 
Such double labeling might result in unnecessary transfer operations, possibly displacing an atom that could otherwise have remained in place.
To minimize the number of unnecessary transfer operations caused by double labeling, we implement an isolation subroutine, which we run before computing the moves associated with the path system, to remove doubly-labeled vertices whenever possible. The removal of a doubly-labeled vertex is possible whenever the recomputed path system (obtained after excluding the vertex and its incident edges from the graph and updating $S$ and $T$ accordingly) remains valid and has a total weight that is less than or equal to the total weight of the original path system. The isolation subroutine guarantees that every token in the resulting path system has to be displaced at least once. 
It also guarantees that the baseline reconfiguration algorithm does not serendipitously displace fewer tokens than the aro algorithm; however, because implementing the subroutine is computationally costly and these serendipitous instances are rare, this subroutine can be safely ignored in an operational setting.

As just explained, the isolation subroutine is a heuristic that guarantees that, for any distance-minimizing path system, every token in the path system (after deleting some vertices and tokens from the graph) has to move at least once, regardless of the path system or of the order in which we execute the moves. 

We let $I_{\mathcal{P}}$ denote the set of vertices which contain tokens that are isolated in $\mathcal{P}$, and we let $F_{\mathcal{P}}$ denote the set of vertices in $S$ that do not appear in $\mathcal{P}$ (their tokens are said to be \emph{fixed in place by $\mathcal{P}$}). We show, in particular, that we can transform $\mathcal{P}$ into another path system $\mathcal{P'}$ such that $w(\mathcal{P'}) = w(\mathcal{P})$, $I_{\mathcal{P'}} \supseteq I_{\mathcal{P}}$, $F_{\mathcal{P'}} \supseteq F_{\mathcal{P}}$, and there exists no other path system $\mathcal{P''}$ such that $w(\mathcal{P''}) = w(\mathcal{P'})$, and either $I_{\mathcal{P''}} \supset I_{\mathcal{P'}}$ or $F_{\mathcal{P''}} \supset F_{\mathcal{P'}}$ (see Figure~\ref{fig:fig1a}b for an example of an instance where our rerouting heuristics would fail to isolate an extra token while the isolation subroutine succeeds).

Assume we are given a distance-minimizing path system $\mathcal{P}$ and let $v \in S \subseteq V(G)$ be a vertex containing a token $\tau$ which is not isolated or fixed in place in $\mathcal{P}$ (not in $I_{\mathcal{P}} \cup F_{\mathcal{P}}$). Let $G' = G - (I_\mathcal{P} \cup F_\mathcal{P} \cup \{v\})$ denote the graph obtained from $G$ after deleting the set of vertices $I_\mathcal{P} \cup F_\mathcal{P} \cup \{v\} \subseteq V(G)$ and the edges incident on all vertices in $I_\mathcal{P} \cup F_\mathcal{P} \cup \{v\}$.  We compute an assignment (App.~\ref{app:assignment}) in the graph $G' = G - (I_\mathcal{P} \cup F_\mathcal{P} \cup \{v\})$ with $S' = S \setminus (I_\mathcal{P} \cup F_\mathcal{P} \cup \{v\})$ and $T' = T \setminus (I_\mathcal{P} \cup F_\mathcal{P} \cup \{v\})$.
Let $\mathcal{P'}$ denote the path system associated with this newly computed assignment. If $\mathcal{P'}$ is $T'$-valid and $w(\mathcal{P'}) = w(\mathcal{P})$, then we use $\mathcal{P'}$ instead of $\mathcal{P}$ and consider token $\tau$ (on vertex $v$) as either an extra isolated token or an extra fixed in place token, depending on whether $v \in S \cap T$ or $v \in S \setminus T$. We then add $v$ to either $I_{\mathcal{P'}}$ or $F_{\mathcal{P'}}$, depending on whether the token on it was isolated or fixed in place. This process is repeated as long as we can find new tokens to isolate or fix in place. We show that, in a distance-minimizing path system where no tokens can be isolated or fixed in place, all other tokens will have to move at least once. We call such a path system an \emph{all-moving path system}.

\begin{lemma}\label{lem-greedy-isolate}
Given an $n$-vertex (positive) edge-weighted graph $G$, two sets $S,T \subseteq V(G)$ such that $|S| \geq |T|$, and a $T$-valid distance-minimizing path system $\mathcal{P}$, we can compute, in time $\mathcal{O}(n^5)$, a valid all-moving path system $\mathcal{P'}$ such that $w(\mathcal{P'}) = w(\mathcal{P})$, $I_{\mathcal{P'}} \supseteq I_{\mathcal{P}}$, and $F_{\mathcal{P'}} \supseteq F_{\mathcal{P}}$. 
\end{lemma}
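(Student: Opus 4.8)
\textbf{Proof plan for Lemma~\ref{lem-greedy-isolate}.}

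The plan is to formalize the greedy procedure described in the paragraph preceding the lemma statement and to argue separately about (i)~correctness, meaning that the output is a valid all-moving path system satisfying the stated containments, (ii)~maximality, meaning that no valid path system of the same weight can properly enlarge $I$ or $F$, and (iii)~the running time bound $\mathcal{O}(n^5)$. First I would set up notation: starting from the given $T$-valid distance-minimizing path system $\mathcal{P}$, maintain a current path system together with its associated sets of isolated vertices $I$ and fixed vertices $F$. One iteration of the loop picks any token $t$ on a vertex $v \in S \setminus (I \cup F)$, builds $G' = G - (I \cup F \cup \{v\})$ with $S' = S \setminus (I \cup F \cup \{v\})$ and $T' = T \setminus (I \cup F \cup \{v\})$, computes a distance-minimizing path system $\mathcal{P}'$ in $G'$ via the assignment subroutine of App.~\ref{app:assignment}, and accepts the update (adding $v$ to $I$ if $v \in S \cap T$, else to $F$) precisely when $\mathcal{P}'$ is $T'$-valid and $w(\mathcal{P}') = w(\mathcal{P})$. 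The loop halts when no candidate $v$ yields an accepted update.

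For correctness I would first observe that the weight never changes on an accepted step, so by induction the final weight equals $w(\mathcal{P})$; and $I, F$ only grow, giving $I_{\mathcal{P}'} \supseteq I_{\mathcal{P}}$, $F_{\mathcal{P}'} \supseteq F_{\mathcal{P}}$. The delicate point is that deleting $v$ and its incident edges from $G$ cannot destroy $T'$-validity in a way that is invisible: I would argue that whenever an accepted $\mathcal{P}'$ exists in $G'$, re-inserting $v$ together with the isolated single-vertex paths on $I$ and the (absent) tokens on $F$ yields a legitimate path system for the original instance with the same weight, so the invariant ``current path system is $T$-valid, distance-minimizing, with isolated set $I$ and fixed set $F$'' is preserved. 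That the final path system is \emph{all-moving} is essentially the stopping condition: if some non-isolated, non-fixed token $t$ on $v$ did not have to move under \emph{some} ordering of the moves, then removing $v$ would still leave a valid distance-minimizing path system of the same weight (the move sequence that keeps $t$ idle restricted to $G - v$), contradicting the fact that $v$ was rejected. I would make this precise by relating ``token $t$ need not move'' to the existence of a valid path system on $G - v$ of weight $\le w(\mathcal{P}')$, which by distance-minimality forces equality; here the earlier remark that in a valid path system a token on a target or internal vertex must be a source vertex of some path is the tool that lets me rewrite the move sequence.

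For maximality, suppose toward a contradiction that there is a valid path system $\mathcal{P}''$ with $w(\mathcal{P}'') = w(\mathcal{P}')$ and, say, $I_{\mathcal{P}''} \supset I_{\mathcal{P}'}$ (the $F$ case is symmetric), so some $v \in I_{\mathcal{P}''} \setminus I_{\mathcal{P}'}$. Then $v \notin I_{\mathcal{P}'} \cup F_{\mathcal{P}'}$, so $v$ was a candidate that got rejected at the end; but $\mathcal{P}''$ restricted to $G - (I_{\mathcal{P}'} \cup F_{\mathcal{P}'} \cup \{v\})$ witnesses a $T'$-valid distance-minimizing path system of weight $w(\mathcal{P}')$ in $G'$, so the update should have been accepted --- contradiction. (One must check that $\mathcal{P}''$ genuinely isolates/fixes everything in $I_{\mathcal{P}'} \cup F_{\mathcal{P}'}$; this follows because all four sets $I,F$ only contain vertices forced to be isolated/fixed in \emph{every} distance-minimizing path system, an invariant I would carry along from the start.) Finally, the running time: each accepted step removes at least one vertex from consideration, so there are $\mathcal{O}(n)$ accepted steps; between consecutive accepted steps there are $\mathcal{O}(n)$ candidate vertices to try, each requiring one assignment computation at cost $\mathcal{O}(n^3)$ plus an $\mathcal{O}(n^2)$ validity check, for $\mathcal{O}(n^4)$ per phase and $\mathcal{O}(n^5)$ overall.

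\textbf{Main obstacle.} The hard part will be the correctness argument that the stopping condition really certifies the all-moving property: translating the operational statement ``there exists an ordering of the moves under which token $t$ stays idle'' into the combinatorial statement ``there is a valid distance-minimizing path system on $G - v$ of the same weight,'' and conversely, requires carefully tracking how the obstruction solver reroutes paths and invoking distance-minimality to rule out cheaper alternatives. I would isolate this equivalence as a small internal claim and prove it before assembling the three pieces above.
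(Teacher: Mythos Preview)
Your plan mirrors the paper's proof: run the greedy deletion procedure, note that validity and $w(\mathcal{P}')=w(\mathcal{P})$ hold by construction, derive maximality from the one-line contradiction ``if $v^\star\in I_{\mathcal{P}''}\setminus I_{\mathcal{P}'}$ (or the $F$ analogue) then $v^\star\in S$ and the procedure would have deleted it'', and bound the running time as at most $n$ successful deletions each preceded by at most $n$ trials at $\mathcal{O}(n^3)$ per assignment call.

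Where you diverge is in what you label the main obstacle. The paper does not prove a separate operational statement that ``every remaining token must move under every ordering''; it simply \emph{defines} an all-moving path system as one for which the greedy procedure cannot isolate or fix any further token, so ``all-moving'' and the maximality of $I,F$ are synonymous. Your proposed equivalence between the operational and combinatorial formulations is therefore not needed for the lemma as stated (though it is the reason the subroutine is useful).

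One caution on the parenthetical: the invariant you suggest --- that $I,F$ contain only vertices forced to be isolated/fixed in \emph{every} distance-minimizing path system --- is false in general (two tokens may each be isolable but not simultaneously, so greedy choices are not forced). The paper does not carry such an invariant; its maximality argument is the single sentence quoted above. Note that $I_{\mathcal{P}''}\supset I_{\mathcal{P}'}$ already forces $\mathcal{P}''$ to avoid every vertex of $I_{\mathcal{P}'}\cup\{v^\star\}$ (by the definition of isolated, no other path contains those vertices), and $v^\star\in S\cap T$ while $F_{\mathcal{P}'}\subseteq S\setminus T$, so $v^\star\notin F_{\mathcal{P}'}$ automatically. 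The only residual worry is whether $\mathcal{P}''$ might route through some vertex of $F_{\mathcal{P}'}$; the paper does not explicitly dispatch this, so if you want to be more careful than the paper you should argue it directly rather than via the invariant you proposed.
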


\begin{proof}
Let $\mathcal{P'}$ denote the path system obtained after exhaustively applying the described procedure. Clearly, $\mathcal{P'}$ is valid and $w(\mathcal{P'}) = w(\mathcal{P})$ (by construction). It remains to show that $\mathcal{P'}$ is an all-moving path system. In other words, we show that there exists no other valid path system $\mathcal{P''}$ such that $w(\mathcal{P''}) = w(\mathcal{P'})$ and $\mathcal{P''}$ can isolate or fix a proper superset of $I_{\mathcal{P'}}$ or $
F_{\mathcal{P'}}$, i.e., $I_{\mathcal{P''}} \supset  I_{\mathcal{P'}}$ or $F_{\mathcal{P''}} \supset F_{\mathcal{P'}}$.

Assume that $\mathcal{P''}$ exists and let $v^\star \in I_{\mathcal{P''}} \setminus I_{\mathcal{P'}}$ or $v^\star \in F_{\mathcal{P''}} \setminus F_{\mathcal{P'}}$. In either case, we know that $v^\star \in S$, which gives us the required contradiction as our procedure would have deleted $v^\star$ (and either isolated or fixed the token on it). 

For the running time, note that we can iterate over vertices in $S$ in $\mathcal{O}(n)$ time. Once a vertex is deleted, we run the APSP algorithm followed by the Hungarian algorithm, and this requires $\mathcal{O}(n^3)$ time. We can delete at most $n$ vertices, and whenever a vertex is successfully deleted (which corresponds to fixing in place or isolating a token), we repeat the procedure. Therefore, the total running time of the isolation procedure is $\mathcal{O}(n^5)$. 
\end{proof}

\section{Distance-preserving rerouting subroutine}\label{app:rerouting}

Following Sec.~\ref{sec:rerouting} where we present an overview of the distance-preserving rerouting subroutine, we show that the problem of rerouting a path system to increase the number of isolated tokens without increasing the path system's total weight has a substructure that we can utilize to design a polynomial-time dynamic programming solution.  We present this subroutine in detail for uniformly-weighted grid graphs and we sketch how it could be generalized to work on general positive edge-weighted graphs.

We assume that we are working with a path $P$ with a source vertex $(c_1, r_1)$ and a target vertex $(c_2, r_2)$ such that $r_1 < r_2$ and $c_1 < c_2$. The other three cases entail flipping one or both of these inequalities and can be solved in a similar way. For the path in question, we introduce a $(W + 1) \times (H + 1)$ matrix, $dp$, such that $W = |c_1 - c_2|$, $H = |r_1 - r_2|$, and $dp[i][j]$ is the smallest number of isolated tokens on any shortest path between $(c_1, r_1)$ and $(c_1 + i, r_1 + j)$ in the path system that excludes the current path $P$, i.e., in $\mathcal{P} \setminus P$. The $dp[i][j]$ values are computed as follows:

\begin{equation*}
\begin{split}
       & dp[i][j] = \\
       & 
\begin{matrix}
  \textsf{I}    \\[0.5ex]
  \textsf{II}   \\[0.5ex]
  \textsf{III}  \\[0.5ex]
  \textsf{IV}  \\[0.5ex]
  \\
\end{matrix}\quad
\begin{cases}
     0 & (i = 0, j = 0) \\
  \text{isolated}[i][j] + dp[i-1][j] & (i \neq 0, j = 0) \\
  \text{isolated}[i][j] + dp[i][j-1] & (i = 0, j \neq 0) \\
  \text{isolated}[i][j] + \\
  ~~\min(dp[i-1][j], dp[i][j-1]) & \text{otherwise}
\end{cases}
\end{split}
\end{equation*}

Here $\text{isolated}[i][j]$ is equal to $1$ whenever there is an isolated token on vertex $(c_1 + i, r_1 + j)$ in $\mathcal{P} \setminus P$ and $0$ otherwise. 
The value of interest is $dp[W][H]$, which can be computed in $\mathcal{O}(WH)$ time. The proof of correctness of this algorithm follows from the next lemma. We note that the generalization to positive edge-weighted graphs entails modifying the Floyd-Warshall algorithm to break ties between shortest paths based on the number of isolated tokens on them in the path system that excludes them. It follows that, in the general case, the smallest number of isolated tokens on any shortest path between two vertices can be computed in $\mathcal{O}(n^3)$ time.

The statement and the proof of the following lemma assume that $r_1 < r_2$ and $c_1 < c_2$. The proof can be altered to work for the other three possible cases.

\begin{lemma}\label{lem-dp}
Given a valid path system $\mathcal{P}$ in a uniformly-weighted grid graph $G$ and a path $P \in \mathcal{P}$ with source vertex $(c_1, r_1)$ and target vertex $(c_2, r_2)$ such that $r_1 < r_2$ and $c_1 < c_2$, $dp[i][j]$ is the smallest number of isolated tokens in the path system that excludes the path $P$ (i.e., $\mathcal{P} \setminus P$) on any shortest path between the source vertex and vertex $(c_1 + i, r_1 + j)$.
\end{lemma}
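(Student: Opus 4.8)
The plan is to prove Lemma~\ref{lem-dp} by induction on $i + j$, showing that the recurrence for $dp[i][j]$ correctly computes the minimum number of isolated tokens (with respect to $\mathcal{P} \setminus P$) along any shortest path from $(c_1, r_1)$ to $(c_1 + i, r_1 + j)$. The first observation to record is that, because $r_1 < r_2$ and $c_1 < c_2$, every shortest path from $(c_1, r_1)$ to $(c_1 + i, r_1 + j)$ uses only ``right'' steps (increasing the column) and ``up'' steps (increasing the row); equivalently, it is a monotone staircase path whose vertices all lie in the sub-grid $[c_1, c_1 + i] \times [r_1, r_1 + j]$. In particular, any such path reaches $(c_1 + i, r_1 + j)$ from either $(c_1 + i - 1, r_1 + j)$ (a final right step) or $(c_1 + i, r_1 + j - 1)$ (a final up step), and the prefix of the path up to that penultimate vertex is itself a shortest path from $(c_1, r_1)$ to that vertex. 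This is the structural decomposition that the dynamic program exploits.

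Next I would set up the quantity to be optimized. For a staircase path $Q$ from $(c_1, r_1)$ to $(c_1 + i, r_1 + j)$, define $\mathrm{cost}(Q) = \sum_{v \in V(Q)} \text{isolated}(v)$, where $\text{isolated}(v) = 1$ iff $v$ carries a token that is isolated in $\mathcal{P} \setminus P$ and $0$ otherwise; the claim is that $dp[i][j] = \min_Q \mathrm{cost}(Q)$. The base case $i = j = 0$ gives the single-vertex path $\{(c_1, r_1)\}$; I would note that the source vertex of $P$ carries the token being rerouted, which is not isolated in $\mathcal{P} \setminus P$ (it is the token we are trying to free), so its contribution is $0$, matching case~\textsf{I}. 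For the inductive step, split into the boundary cases and the generic case: when $j = 0$ the only staircase paths are forced to arrive via a right step, so $\min_Q \mathrm{cost}(Q) = \text{isolated}(c_1+i, r_1) + dp[i-1][0]$ by the induction hypothesis (case~\textsf{II}); symmetrically for $i = 0$ (case~\textsf{III}); and in general every staircase path to $(c_1+i, r_1+j)$ decomposes as (shortest path to one of the two predecessors) $\cup$ (final step), so minimizing over the two choices gives case~\textsf{IV}. Conversely, any choice realizing the minimum of the two predecessor values extends to a staircase path achieving $dp[i][j]$, so the minimum is attained.

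One technical point I would be careful about is double-counting: a staircase path visits each of its vertices exactly once, and the decomposition into prefix plus final vertex partitions $V(Q)$, so $\mathrm{cost}(Q) = \mathrm{cost}(\text{prefix}) + \text{isolated}(c_1+i, r_1+j)$ with no overlap — this is why the recurrence simply adds $\text{isolated}[i][j]$ rather than needing an inclusion–exclusion correction. A second point worth a sentence is that $\text{isolated}$ is defined relative to the \emph{fixed} path system $\mathcal{P} \setminus P$, which does not change as we vary $Q$; this is exactly what makes the subproblems independent and the DP valid, and it is why the lemma is phrased in terms of $\mathcal{P} \setminus P$. Finally, I would conclude that $dp[W][H]$ equals the minimum number of isolated tokens of $\mathcal{P} \setminus P$ lying on any shortest (hence any rerouted) path from $v_s$ to $v_t$, which justifies selecting a rerouted path achieving this minimum; reconstructing the path by backtracking through the $\min$ choices takes $\mathcal{O}(W + H)$ time and filling the table takes $\mathcal{O}(WH)$ time, both $\mathcal{O}(n)$ on a grid.

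The main obstacle is not any single hard step but rather stating the structural claim precisely enough that the two-predecessor decomposition is airtight — in particular, making sure the ``prefix of a shortest path is a shortest path to the penultimate vertex'' claim is invoked correctly for monotone staircase paths, and handling the remaining three sign cases ($r_1 > r_2$, $c_1 > c_2$, or both) by the stated symmetry (reflecting the grid) without re-deriving everything. Everything else is a routine induction.
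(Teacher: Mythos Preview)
Your proposal is correct and follows essentially the same approach as the paper's proof: both argue by induction (you on $i+j$, the paper on the pair $(i,j)$ ordered lexicographically, which is equivalent here since the recurrence only references $(i-1,j)$ and $(i,j-1)$), both rely on the monotone-staircase structure of shortest paths in the grid to reduce to the two-predecessor decomposition, and both handle the base case by observing that the source vertex contributes $0$ since its token is not isolated in $\mathcal{P}\setminus P$. Your write-up is in fact a bit more explicit than the paper's about why the additive decomposition avoids double counting and why the reference system $\mathcal{P}\setminus P$ stays fixed, but the argument is the same.
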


\begin{proof}
We use induction on the two indices $i$ and $j$. For the path going from the source vertex $(c_1, r_1)$ to itself, there are no tokens in $\mathcal{P} \setminus P$ since the current path is excluded from the path system, so $dp[0][0] = 0$ (case I) is correct. Similarly, for the vertices on the same column and the same row as $(c_1, r_1)$, there is a single distance-minimizing path to each of them, so $\text{isolated}[i][j] + dp[i-1][j]$ and $\text{isolated}[i][j] + dp[i][j - 1]$, respectively, compute the numbers of isolated tokens on the paths to those vertices correctly (cases II and III).
Now, assume that $dp[i][j]$ takes on the correct value for $0 \leq i \leq k$, $0 \leq j \leq k'$ (excluding the pair ($k, k'$)). 
We prove that $dp[k][k']$ takes on the correct value. 

In any path from $(c_1, r_1)$ to $(c_1 + k, r_1 + k')$, vertex $(c_1 + k, r_1 + k')$ can be reached on a distance-minimizing path either from vertex $(c_1 + k - 1, r_1 + k')$ or from vertex $(c_1 + k, r_1 + k' - 1)$. By the inductive hypothesis, we already know the smallest number of isolated tokens from $(c_1, r_1)$ to either one of those two vertices in $\mathcal{P} \setminus P$; the smallest number of isolated tokens from $(c_1, r_1)$ to $(c_1 + k, r_1 + k')$ will therefore be the minimum of those two values, to which we add 1 in case there is an isolated token on vertex $(c_1 + k, r_1 + k')$ in $\mathcal{P} \setminus P$ (case IV).
\end{proof}

Once we obtain $dp[W][H]$, if its value is smaller than the number of isolated tokens along the current path $P$ in the path system that excludes $P$, then $P$ has to be rerouted (since we can isolate more tokens by rerouting $P$); otherwise, $P$ is unchanged. Since path reconstruction is required, we need to store the decisions that were made by the dynamic programming procedure, and the easiest way to do so is by introducing a $(W + 1) \times (H + 1)$ matrix, $prev$, such that $prev[i][j]$ indicates whether $dp[i][j]$'s value was obtained by reaching vertex $(c_1 + i, r_1 + j)$ from the bottom ($prev[i][j] = 1$) or from the left ($prev[i][j] = 0$). Using the $prev$ matrix, we can therefore reconstruct the rerouted path and use it to replace the initial path if needed.

\begin{lemma}
\label{distance_preserving_running_time}
Given a valid path system $\mathcal{P}$ in an $n$-vertex uniformly-weighted grid graph $G$ (resp. positive edge-weighted graph $G$), we can, in time $\mathcal{O}(n^3)$ (resp. in time $\mathcal{O}(n^5)$), exhaustively run the distance-preserving rerouting heuristic.
\end{lemma}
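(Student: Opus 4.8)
The claim is purely a running-time statement, so the plan is to account for the cost of one full pass of the heuristic over the path system. Recall that the subroutine processes the paths of $\mathcal{P}$ one at a time, that $\mathcal{P}$ contains at most $|S| \leq n$ paths, and that when a path is rerouted $\mathcal{P}$ is updated before the next path is considered. Hence it suffices to bound the work done for a single path $P$ by $\mathcal{O}(n^2)$, after which the total running time is $\mathcal{O}(n) \cdot \mathcal{O}(n^2) = \mathcal{O}(n^3)$.

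For a fixed $P$ with source $(c_1, r_1)$ and target $(c_2, r_2)$ I would first dispose of the trivial case in which $P$ is a straight line (nothing to do), and otherwise reduce to the case $r_1 < r_2$, $c_1 < c_2$ by reflecting the grid coordinates (the three other cases being symmetric), both $\mathcal{O}(1)$. The dominant step is determining which vertices host isolated tokens in $\mathcal{P} \setminus P$: I would compute, in an array indexed by $V(G)$, the number of paths of $\mathcal{P}\setminus P$ containing each vertex by scanning every vertex of every path $P' \neq P$; since each path has at most $n$ vertices and there are at most $n$ paths, this costs $\mathcal{O}(n^2)$. A vertex then hosts an isolated token of $\mathcal{P}\setminus P$ precisely when it carries a single-vertex path of $\mathcal{P}\setminus P$ whose count is $1$, which is read off in $\mathcal{O}(1)$ per vertex. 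Restricting this indicator to the $(W+1)\times(H+1)$ bounding box of $P$ yields the matrix $\text{isolated}[\cdot][\cdot]$ in time $\mathcal{O}(WH)$, and since $WH \leq (W_G-1)(H_G-1) < W_G H_G = n$ this step, and the subsequent fill of the $dp$ and $prev$ matrices by the stated recurrence, costs only $\mathcal{O}(n)$. By Lemma~\ref{lem-dp}, $dp[W][H]$ equals the minimum number of isolated tokens of $\mathcal{P}\setminus P$ on any shortest source-to-target path. Comparing it against the number of isolated tokens of $\mathcal{P}\setminus P$ lying on $P$ itself (obtained by scanning $P$, in $\mathcal{O}(|P|) \leq \mathcal{O}(n)$) decides whether to reroute; if so, the replacement path is reconstructed from $prev$ in $\mathcal{O}(W+H)$ and, being a shortest path, has weight $W+H = w(P)$, so substituting it leaves $w(\mathcal{P})$ unchanged. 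Altogether one path costs $\mathcal{O}(n^2)$, dominated by the occupancy computation.

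Summing over the at most $n$ paths gives the claimed $\mathcal{O}(n^3)$ bound. I would close with a remark that maintaining the occupancy array incrementally — decrementing along $P$ before the DP and updating along the (possibly new) path afterwards, each in $\mathcal{O}(|P|)$ — reduces the per-path cost to $\mathcal{O}(n)$ and the total to $\mathcal{O}(n^2)$, but the looser $\mathcal{O}(n^3)$ bound already suffices for the stated result and matches the budget charged to this step in Alg.~\ref{alg:aro}.

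The point requiring care is conceptual rather than computational: one must recompute the isolated-token structure of $\mathcal{P}\setminus P$ at the start of each iteration, because rerouting an earlier path can both create and destroy isolated tokens, so this cost cannot be amortized to a single precomputation; and one must combine the elementary bounds (number of paths $\leq n$, $|P| \leq n$, and $WH \leq W_G H_G = n$) to see that no step exceeds $\mathcal{O}(n^2)$. Correctness of the per-path rerouting itself is not re-proved here, as it is exactly the content of Lemma~\ref{lem-dp}.
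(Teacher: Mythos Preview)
Your argument bounds only a \emph{single} pass over the path system, but the word ``exhaustively'' in the lemma (and the description in the paper) means the subroutine is iterated to a fixed point: after looping over all paths, if at least one path was rerouted, the entire loop is repeated, and this continues until a full pass produces no change. You have therefore omitted the key termination argument. With your per-path budget of $\mathcal{O}(n^2)$ and up to $n$ passes, your analysis would only yield $\mathcal{O}(n^4)$, not $\mathcal{O}(n^3)$.

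The missing idea is a potential-function bound on the number of passes: a path is rerouted only when the rerouting strictly increases the number of isolated tokens in $\mathcal{P}$, and since there are at most $n$ tokens, at most $n$ reroutings can occur in total; hence at most $\mathcal{O}(n)$ passes (the last pass makes no change). Combined with the $\mathcal{O}(WH)=\mathcal{O}(n)$ cost of the dynamic program per path (which requires the isolated-token indicator to be available in $\mathcal{O}(1)$ per vertex, e.g.\ via an incrementally maintained occupancy array rather than the $\mathcal{O}(n^2)$ recomputation you charge), one obtains $\mathcal{O}(n)$ passes $\times$ $\mathcal{O}(n)$ paths $\times$ $\mathcal{O}(n)$ per DP $=\mathcal{O}(n^3)$. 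Your closing remark about incremental maintenance is exactly what is needed to make the per-path cost $\mathcal{O}(n)$, but you must actually use it, not present it as an optional refinement, once the outer repetition is accounted for.
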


\begin{proof}
The procedure loops over all paths in the path system and attempts to reroute each path. If at least one path is rerouted, once all paths have been considered, the process is repeated. The process keeps getting repeated until the algorithm goes through all paths without changing any path. The algorithm terminates in $\mathcal{O}(n^3)$ time on uniformly-weighted grid graphs (resp. $\mathcal{O}(n^5)$ time on positive edge-weighted graphs),
assuming at most $n$ paths in the path system, given that we can isolate at most $n$ tokens, that computing a replacement path requires $\mathcal{O}(WH) = \mathcal{O}(n)$ time (resp. $\mathcal{O}(n^3)$ time), and that every time a token is isolated, the procedure repeats from scratch.
\end{proof}

\section{Distance-increasing rerouting subroutine}\label{app:distance-increasing-rerouting}
Following Sec.~\ref{sec:rerouting} where the distance-increasing rerouting subroutine is briefly mentioned, we now provide a detailed presentation. This subroutine is a trade-off heuristic that seeks to trade an increase in displacement distance for a reduction in the number of displaced tokens. We present this subroutine only for uniformly-weighted grid graphs and we note that the distance-increasing rerouting subroutine could potentially be generalized to work on general positive edge-weighted graphs. We say that a path in a grid is \emph{rectilinear} if it is horizontal or vertical (assuming an embedding of the grid in the plane). 
Recall that we say that a token $\tau$ is isolated in a path system $\mathcal{P}$ whenever there exists a single-vertex path $P = \{v\} \in \mathcal{P}$ such that the token $\tau$ is on the vertex $v$ and no other path in $\mathcal{P}$ contains vertex $v$. 

The purpose of the distance-increasing rerouting subroutine is to introduce a mechanism that allows us to increase token isolation, even if that comes at the cost of increasing displacement distance. We also want to make it possible to control how much leeway is given to this subroutine when it comes to deviating from the minimization of overall displacement distance. 
To do so, we introduce the concept of a \emph{margin}, which we denote by $\mu$. The margin limits the paths considered. For a margin $\mu$, a source vertex $(c_1, r_1)$, and a target vertex $(c_2, r_2)$ (assuming, without loss of generality, $r_1 < r_2$ and $c_1 < c_2$) defined in a path system in a uniformly-weighted grid graph $G$, the rerouted path can now include any of the vertices that are within the subgrid bounded by the vertices $(max(c_1 - \mu, 0), max(r_1 - \mu, 0))$ (bottom left corner) and $(min(c_2 + \mu, W_G - 1), min(r_2 + \mu, H_G - 1))$ (top right corner), which we call the \emph{extended subgrid}. 

As was the case for the analysis of the distance-preserving rerouting subroutine, for the rest of this section, we assume that we are working with a path system $\mathcal{P}$ and a path $P$ with a source vertex $(c_1, r_1)$ and a target vertex $(c_2, r_2)$ such that $r_1 < r_2$ and $c_1 < c_2$. The other three cases can be handled in a similar way. Evidently, as was the case for the distance-preserving rerouting subroutine, there is no point in attempting to enumerate all the paths here either, as their number is exponential in $H + W + \mu$. In fact, even for $\mu = 0$, the possible reroutings are a superset of the possible reroutings in the distance-preserving rerouting, as we removed the restriction on maintaining a shortest path within the subgrid.

Out of the reroutings of $P$ that minimize the number of isolated tokens in $\mathcal{P} \setminus P$, we select the ones that have the shortest path length, and out of those, we select the ones that have the smallest number of changes in direction (horizontal vs. vertical). We arbitrarily select any one of the remaining paths.

Again, we exploit dynamic programming to solve the problem. Just like in App.~\ref{app:rerouting}, we make use of the matrix $\text{isolated}$; however, in this case, we have to consider all vertices that are part of the extended subgrid, so the matrix $\text{isolated}$ is of size $(W + 1 + 2\mu) \times (H + 1 + 2\mu)$, and $\text{isolated}[i][j]$ is equal to $1$ whenever $(c_1 + i - \mu, r_1 + j - \mu)$ is in the grid and contains an isolated token, and $0$ otherwise. We introduce a $((W + 1 + 2\mu) \times (H + 1 + 2\mu)) \times (W + 1 + 2\mu) \times (H + 1 + 2\mu)$ matrix, $dp$, such that $dp[i][j][k]$ is the smallest number of isolated tokens on any path of length $i$ between $(c_1, r_1)$ and $(c_1 + j - \mu, r_1 + k - \mu)$ in $\mathcal{P} \setminus P$. The $dp[i][j][k]$ values are computed as follows:

\begin{equation*}
    \begin{split}
       & dp[i][j][k] = \\
       & 
\begin{matrix}
  \textsf{I}    \\[0.5ex]
  \textsf{II}   \\[0.5ex]
  \textsf{III}  \\[0.5ex]
  \textsf{IV}  \\[0.5ex]
  \\
  \textsf{V}  \\[0.5ex]
  \\
  \\
  \\
\end{matrix}\quad
\begin{cases}
    0 ~~ (i = 0, j = \mu, k = \mu) \\
    +\infty ~~ ((c_1 + j - \mu, r_1 + k - \mu) \text{ not in grid graph}) \\
      +\infty ~~ (i = 0, j \neq \mu \text{ or } k \neq m) \\
      +\infty ~~ (dp[i-1][j - 1][k] = dp[i - 1][j + 1][k] \\
      ~~= dp[i - 1][j][k - 1] = dp[i - 1][j][k + 1] = +\infty) \\
      \text{isolated}[j][k] +  
      \\~~ \min(dp[i-1][j-1][k], dp[i-1][j+1][k],\\
      ~~dp[i-1][j][k-1], \\ ~~dp[i-1][j][k+1])~\text{otherwise} \\
   \end{cases}
    \end{split}
\end{equation*}

By convention, we set $dp[i][j][k]$ to $+\infty$ when (1) the shortest distance between $(c_1, r_1)$ and $(c_1 + j - \mu, r_1 + k - \mu)$ is greater than $i$ (cases III and IV, which is equivalent to saying that the smallest number of isolated tokens of any path of length $i$ between those two vertices in $\mathcal{P} \setminus P$ is infinite, and this makes case V work), (2) when $(c_1 + j - \mu, r_1 + k - \mu)$ are coordinates that do not correspond to the coordinates of a vertex within the grid graph (case II), or (3) when at least one of $j$ or $k$ is out of bounds, i.e., when $(c_1 + j - \mu, r_1 + k - \mu)$ has fewer than 4 neighboring vertices in the grid graph (this makes it easier to deal with cases IV and V).

The value of interest is $\min(dp[W + H][\mu + W][\mu + H], dp[W + H + 1][\mu + W][\mu + H], \ldots, dp[(W + 1 + 2\mu) \times (H + 1 + 2\mu) - 1][\mu + W][\mu + H])$. That is, we consider the maximum number of tokens we managed to isolate for every path length greater than or equal to $W + H$ (which is the length of the shortest path between ($c_1, r_1$) and ($c_2, r_2$)) and we pick the maximum across all path lengths. The proof of correctness of this procedure follows from Lemma~\ref{bounded_correctness}. 

The statement and the proof assume that $r_1 < r_2$, $c_1 < c_2$. The proof can be altered to work for the other three possible cases.

\begin{lemma}
\label{bounded_correctness}
Given a valid path system $\mathcal{P}$ in a uniformly-weighted grid graph $G$ and a path $P \in \mathcal{P}$ with source vertex $(c_1, r_1)$ and target vertex $(c_2, r_2)$ such that $r_1 < r_2$ and $c_1 < c_2$, $\min\limits_{i \in \mathbb{N}^{+}([0,(W+1+2\mu)\times(H+1+2\mu)-1])}dp[i][j][k]$ is the smallest number of isolated tokens in $\mathcal{P} \setminus P$ on any path between the source vertex and vertex $(c_1 + j - \mu, r_1 + k - \mu)$.
\end{lemma}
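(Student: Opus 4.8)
\emph{Proof proposal.} The plan is to establish a more explicit intermediate claim and then read off the lemma. For every $i \in \mathbb{N}^{+}([0,(W+1+2\mu)\times(H+1+2\mu)-1])$ and every index pair $(j,k)$, I will show that $dp[i][j][k]$ equals the minimum, over all \emph{walks} of length exactly $i$ from $(c_1,r_1)$ to $v := (c_1+j-\mu, r_1+k-\mu)$ that remain inside the extended subgrid, of the number of occurrences of isolated tokens of $\mathcal{P}\setminus P$ visited along the walk, with the convention that this minimum is $+\infty$ when no such walk exists. Note that the recurrence is a strict forward recurrence in $i$ (level $i$ depends only on level $i-1$), so no fixed-point argument is needed; the claim follows by induction on $i$ and, within a fixed level, directly from the case analysis of the recurrence. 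This parallels the argument for Lemma~\ref{lem-dp}, with the extra index $i$ tracking path length.

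For the base case $i=0$, the only length-$0$ walk is the single vertex $(c_1,r_1)$, which visits no token of $\mathcal{P}\setminus P$ (the path $P$ being rerouted is excluded from the path system), so the minimum is $0$ when $v=(c_1,r_1)$, i.e. $j=k=\mu$, and $+\infty$ otherwise; this is exactly cases I and III, together with case II when $v$ lies outside the grid. For the inductive step at level $i\geq 1$, a walk of length $i$ from $(c_1,r_1)$ to $v$ is precisely a walk of length $i-1$ from $(c_1,r_1)$ to some grid-neighbor $u$ of $v$ inside the extended subgrid, followed by the edge $\{u,v\}$, and its isolated-token count equals that of the length-$(i-1)$ prefix plus $\text{isolated}[j][k]$. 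Hence the minimum over length-$i$ walks to $v$ equals $\text{isolated}[j][k]$ plus the minimum, over the in-grid neighbors $u$ of $v$, of the minimum isolated-token count over length-$(i-1)$ walks to $u$; by the inductive hypothesis these are exactly the entries $dp[i-1][\cdot][\cdot]$ appearing in case V, so case V is correct, while cases II (target off-grid), IV (no neighbor reachable in $i-1$ steps), and the implicit out-of-bounds convention reproduce the $+\infty$ value precisely when no length-$i$ walk to $v$ exists.

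It remains to pass from walks to the simple paths named in the lemma, i.e. to show $\min_i dp[i][j][k]$ equals the minimum isolated-token count over simple paths from $(c_1,r_1)$ to $v$. One inequality is immediate: a simple path inside the extended subgrid visits at most $(W+1+2\mu)\times(H+1+2\mu)$ vertices, hence has length at most $(W+1+2\mu)\times(H+1+2\mu)-1$, so it is accounted for by some $dp[i][j][k]$ with $i$ in range, giving $\min_i dp[i][j][k] \leq$ (simple-path optimum). For the reverse inequality, fix an index $i^\star$ attaining $\min_i dp[i][j][k]$ and a walk $Q$ realizing it; repeatedly deleting closed subwalks from $Q$ yields a simple path $Q'$ from $(c_1,r_1)$ to $v$, still inside the extended subgrid, whose set of visited vertices is contained in that of $Q$, so its isolated-token count is at most that of $Q$. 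Thus (simple-path optimum) $\leq \text{count}(Q') \leq \text{count}(Q) = \min_i dp[i][j][k]$, and combining the two inequalities proves the lemma. The remaining three orientations of source and target are handled by the symmetric reflections of the extended subgrid, exactly as noted before the statement.

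The main obstacle is not a single hard step but bookkeeping discipline around the $+\infty$ sentinels and the walk-versus-path distinction: one must commit to the walk-based reading of $dp$ so that the $\min$ in case V is over a well-defined quantity, verify that every way a length-$i$ walk to $v$ can fail to exist ($v$ off-grid, an out-of-bounds index, or $v$ unreachable from $(c_1,r_1)$ in exactly $i$ steps) maps to a $+\infty$ entry so that such cells never spuriously win a minimum, and confirm that the shortcutting step never increases the isolated-token count, which is the point at which the ``paths'' of the lemma statement are reconciled with the ``walks'' optimized by the recurrence. Once these are in place the induction is routine.
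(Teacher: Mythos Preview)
Your proof is correct and follows essentially the same approach as the paper's: induction on the length index $i$, with the same base case and the same neighbor-minimization inductive step. Your treatment is in fact slightly more careful than the paper's, which speaks of ``paths of length $i$'' throughout even though the recurrence genuinely optimizes over walks; your explicit shortcutting argument is exactly what is needed to reconcile the walk-optima computed by the recurrence with the simple-path optimum named in the statement.
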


\begin{proof}
We start by proving the correctness of the dynamic programming approach, as the correctness of the lemma statement follows directly from that. That is, we start by proving that $dp[i][j][k]$, for pairs $(j, k)$ that correspond to vertices in the grid, is the smallest number of isolated tokens on any path of length $i$ between $(c_1, r_1)$ and $(c_1 + j - \mu, r_1 + k - \mu)$ in the path system that excludes the path $P$, and is equal to $+\infty$ otherwise.

We will use induction on the first dimension only (i.e., the path length dimension/number of edges in the path).

The only path of length 0 starting from $(c_1, r_1)$ reaches $(c_1, r_1)$. Since the current path is excluded from the path system, there are no tokens from $(c_1, r_1)$ to itself, and therefore $dp[0][\mu][\mu] = 0$ (case I), as needed. No other vertex is reachable for $i = 0$, so $dp[0][i][j]$ ($0 \leq i \leq W + 2\mu$, $0 \leq j \leq H + 2\mu$) should equal $+\infty$, and case III ensures that $dp[0][i][j]$ takes on the correct value in the specified range.

Now, assume that $dp[l][j][k]$ takes on the correct values ($0 \leq j \leq W + 2\mu$, $0 \leq k \leq H + 2\mu$). We would like to prove that $dp[l + 1][j][k]$ takes on the correct values ($0 \leq j \leq W + 2\mu$, $0 \leq k \leq H + 2\mu$).

There are two cases to consider. For a fixed value of the pair $(j, k)$ corresponding to a vertex $(c_1 + j - \mu, r_1 + k)$ in the grid, if none of $(c_1 + j - \mu - 1, r_1 + k - \mu)$, $(c_1 + j - \mu + 1, r_1 + k - \mu)$, $(c_1 + j - \mu, r_1 + k - \mu - 1)$ and $(c_1 + j - \mu, r_1 + k - \mu + 1)$ is reachable within $l$ displacements, then $(c_1 + j - \mu, r_1 + k - \mu)$ should not be reachable within $l + 1$ displacements. By the induction hypothesis, we would have $dp[l][j-1][k] = dp[l][j+1][k] = dp[l][j][k-1] = dp[l][j][k+1] = +\infty$, which sets the value of $dp[l+1][j][k]$ to $+\infty$ as well, as needed (case IV).

The second case is when at least one of the neighbors of $(c_1 + j - \mu, r_1 + k - \mu)$ is reachable in $l$ steps. By the inductive hypothesis, we have the smallest number of isolated tokens in $\mathcal{P} \setminus P$ from $(c_1, r_1)$ to the neighbors of $(c_1 + j - \mu, r_1 + k - \mu)$ reached in $l$ steps. The vertex $(c_1 + j - \mu, r_1 + k - \mu)$ can only be reached in $l + 1$ steps from any of its neighbors that were reached in $l$ steps, so the smallest number of isolated tokens on any path of length $l + 1$ from $(c_1, r_1)$ to $(c_1 + j - \mu, r_1 + k - \mu)$ in $\mathcal{P} \setminus P$ is equal to the minimum of the values obtained for the neighbors reached in $l$ steps, to which we add 1 in the case in which there is an isolated token on vertex $(c_1 + j - \mu, r_1 + k - \mu)$, which is what the algorithm does (case V). 

We have proven that $dp[i][j][k]$ takes on the correct value, that is, $dp[i][j][k]$ is the smallest number of isolated tokens on any path of length $i$ between $(c_1, r_1)$ and $(c_1 + j - \mu, r_1 + k - \mu)$ in $\mathcal{P} \setminus P$. It follows that the smallest number of isolated tokens on any path between $(c_1, r_1)$ and $(c_1 + j - \mu, r_1 + k - \mu)$ in $\mathcal{P} \setminus P$ is $\min\limits_{i \in \mathbb{N}^{+}([0,(W+1+2\mu)\times(H+1+2\mu)-1])}dp[i][j][k]$.
\end{proof}

Just like the distance-preserving case, the distance-increasing case requires keeping track of paths, their lengths, as well as their number of changes in direction; $dp$ can be easily augmented to accommodate for that, or an auxiliary matrix (similar to $prev$ in the previous subsection) can be used to keep track of this information. The procedure is run exhaustively, that is, we loop over all paths and attempt to reroute them, and if at least one path is rerouted, once all paths have been considered, the process is repeated. It remains to prove that the procedure terminates and runs in polynomial time. 

\begin{lemma}
\label{bounded_termination_increasing}
Given a valid path system $\mathcal{P}$ in an $n$-vertex uniformly-weighted grid graph $G$ and a margin $\mu \leq n$, we can, in time $\mathcal{O}(n^7)$, exhaustively run the distance-increasing rerouting heuristic.
\end{lemma}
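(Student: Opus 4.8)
The plan is to split the claim into termination and per-iteration cost and then multiply. Correctness of the inner dynamic program --- that for the path $P$ currently being processed it returns a rerouting through the fewest possible isolated tokens of $\mathcal{P}\setminus P$ (tie-broken by length and then by number of direction changes) --- is already Lemma~\ref{bounded_correctness}, so I would invoke it and focus on how many times the exhaustive outer loop can repeat and how much work each repetition does.

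For termination I would use a strictly monotone integer potential bounded by $n$, namely $\Phi=$ the number of isolated tokens in the current path system. The point is that while a single path $P$ is being processed the rest of the path system $\mathcal{P}\setminus P$ is frozen, so the set $I^\star$ of tokens isolated in $\mathcal{P}\setminus P$ is a fixed reference set; since single-vertex paths are never rerouted, a token of $I^\star$ on a vertex $v$ is isolated in the full system exactly when the current $P$ avoids $v$, and no token outside $I^\star$ can be isolated at all. Hence $\Phi$ equals $|I^\star|$ minus the number of $I^\star$-vertices lying on $P$; since $P$ is replaced only by a path through strictly fewer such vertices (the common source and target of $P$ and its replacement cancel in the comparison), $\Phi$ strictly increases at every replacement. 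As $\Phi\le n$ throughout, at most $n$ replacements occur over the whole run, so the outer loop repeats at most $n+1$ times --- every pass but the last performs at least one replacement.

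For the cost of one pass: a valid path system has at most $|S|\le n$ paths, so one pass runs the dynamic program at most $n$ times. For a fixed $P$ the table $dp[\cdot][\cdot][\cdot]$ of Lemma~\ref{bounded_correctness} has first dimension $(W+1+2\mu)(H+1+2\mu)$ and position dimensions $(W+1+2\mu)$ and $(H+1+2\mu)$; with the crude bounds $W,H\le n$ and $\mu\le n$ this is $O(n^{2})\cdot O(n)\cdot O(n)=O(n^{4})$ cells, and each cell evaluation needs one test of whether a given subgrid vertex carries an isolated token of $\mathcal{P}\setminus P$, costing $O(n)$ by scanning the paths. Adding the $O(n^{2})$-time reconstruction and comparison of the optimal rerouting, processing one path is $O(n^{5})$, a pass is $O(n^{6})$, and the whole procedure is $O(n^{7})$. (This estimate is intentionally loose, as the paper notes: precomputing the $\text{isolated}$ array once per path, and truncating the first dimension of $dp$ at the $\le n$ vertices of the clipped extended subgrid, removes factors of $n$.)

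The step I expect to be the crux is the monotonicity argument for termination. Unlike the distance-preserving case, here the replacement path may be strictly longer than $P$, so one must argue that the extra vertices it covers cannot un-isolate tokens faster than the avoided $I^\star$-vertices are re-isolated; this is precisely what freezing $\mathcal{P}\setminus P$ and reasoning through the fixed reference set $I^\star$ accomplishes. One should also check that a replacement rule which additionally swaps $P$ for an equally-good-but-shorter (or straighter) rerouting cannot cause cycling --- handled by noting such a replacement leaves $\Phi$ fixed while strictly decreasing a polynomially-bounded secondary potential (total path length, then total number of direction changes). The remaining ingredients --- the bound on the number of paths, the table size, the per-cell cost, and Lemma~\ref{bounded_correctness} for inner correctness --- are routine.
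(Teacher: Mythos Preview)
Your argument has a genuine gap at exactly the point you flag as the crux. You claim that $\Phi$ (the number of isolated tokens) strictly increases at \emph{every} replacement, giving at most $n$ replacements and hence $n{+}1$ passes. But the heuristic, as specified, also replaces $P$ when the optimal rerouting meets the \emph{same} number of $I^\star$-vertices yet is strictly shorter, or is equally long with fewer direction changes --- these are consequences (2) and (3) in the paper's own case split. Such replacements leave $\Phi$ unchanged, so the $n{+}1$-pass bound is simply false for the algorithm being analysed. You do notice this in your last paragraph and patch in a lexicographic secondary potential (total length, then total direction changes), but you use it only to rule out cycling; you never re-count. Once secondary replacements are in play the number of passes is no longer $O(n)$, and multiplied against your deliberately loose $O(n^6)$ per-pass estimate the product overshoots $O(n^7)$.

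The paper's route is structured differently and avoids this trap. It bounds the number of \emph{reroutings} directly, via a \emph{per-path} rather than global accounting: rerouting a given path $P$ either isolates a new token, or strictly decreases $w(P)$, or strictly decreases the direction-change count of $P$. Since the latter two are each $O(n)$ for a single path, their interleaving yields $O(n^2)$ reroutings of $P$, hence $O(n^3)$ over all paths; the at most $n$ isolation events each reset one path's weight and contribute another $O(n^3)$. With the dp table at $O(n^3)$ per path and the restart overhead, each rerouting costs $O(n^4)$, giving $O(n^3)\cdot O(n^4)=O(n^7)$. The reason per-path potentials succeed where your global secondary potential stalls is that rerouting $P$ cannot touch the weight or direction-change count of any \emph{other} path --- so those quantities only reset when $P$ itself is rerouted via an isolation --- whereas a global total length can jump upward at every $\Phi$-increase, making a straight lex-chain count too coarse.
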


\begin{proof}
Computing all the entries in the augmented $dp$ table for a single path can be achieved in time $\mathcal{O}(n^3)$. Now, every time a path is rerouted we have one of the following three consequences:
\begin{enumerate}
    \item The isolation of one or more tokens and an indeterminate effect on the overall weight of the path system and the overall number of direction changes in the path system
    \item The decrease of the overall weight of the path system and an indeterminate effect on the overall number of direction changes in the path system (no increase in the number of isolated tokens)
    \item The decrease of the overall number of direction changes in the path system (no increase in the overall weight of the path system or decrease in the number of isolated tokens)
\end{enumerate}

The number of tokens that can be isolated is linear in the number of vertices in the grid, whereas the overall weight of the path system as well as the overall number of direction changes in the path system are quadratic in the number of vertices in the grid.

On the one hand, the weight of a path can be decreased at most $\mathcal{O}(n)$ times, ignoring the effect that token isolation may have on the weight of the path. On the other hand, the number of direction changes in a path can be decreased at most $\mathcal{O}(n)$ times, ignoring the effect that token isolation or a decrease in path weight may have on the number of direction changes. Now, since decreasing the weight of a path may affect the number of direction changes within it, in the worst case, every two reroutings that decrease the weight of a path $P$ (of which we have $\mathcal{O}(n)$) may be separated by $\mathcal{O}(n)$ reroutings of path $P$, each of which decreases its number of direction changes. Therefore, ignoring token isolation, each path can be rerouted at most $\mathcal{O}(n^2)$ times, for a total of $\mathcal{O}(n^3)$ path reroutings. 

We now incorporate token isolation into our analysis, and we consider how it interacts with the other two consequences of distance-increasing rerouting. Since each token isolation may lead to updating the weight of a single path to $\mathcal{O}(n)$, accounting for the $\mathcal{O}(n)$ token isolations and relying on the reasoning from the previous paragraph means that we may have $\mathcal{O}(n^3)$ extra path reroutings in total, in addition to the $\mathcal{O}(n^3)$ path reroutings we have already accounted for earlier. This implies that the total number of possible reroutings is bounded by $\mathcal{O}(n^3)$. 

It remains to show how fast we can accomplish each rerouting. Recall that computing all the entries in the augmented $dp$ table for a single path can be achieved in time $\mathcal{O}(n^3)$.
Moreover, the algorithm reiterates through the paths of the path system from scratch every time a rerouting takes place, meaning that a single rerouting is completed in time $\mathcal{O}(n^4)$. 
Given that the total number of possible reroutings is $\mathcal{O}(n^3)$, we can exhaustively run the distance-increasing rerouting heuristic in time $\mathcal{O}(n^7)$. 
\end{proof}

\section{Ordering subroutine}\label{app:ordering}

The ordering subroutine (briefly explained in Sec.~\ref{sec:ordering}) seeks to order the path system and convert it into a (valid) ordered path system. It is the central module of the aro algorithm and consists of finding the order in which to execute the moves so that each token is displaced at most once~(Fig.~\ref{fig:fig1a}a).

\subsection{Step 1 -- Merge path system}\label{app:mps}
The first step of the aro subroutine is to compute a  \emph{merged path system} (MPS). A merged path system is a path system such that no pair of paths within the system intersects more than once, where an intersection between two paths is a nonempty, maximal sequence of vertices that appear in each path's vertex sequence representation contiguously either in the same order or in reverse order.

\begin{lemma}[Merged path system]\label{lemma:mps}
Given an $n$-vertex (positive) edge-weighted graph $G$ with $|E(G)| = m$ and $\sum_{e \in E(G)}{w(e)} = \mathcal{O}(n^{c})$ for some positive integer $c$, two sets $S,T \subseteq V(G)$, and a $T$-valid path system $\mathcal{P}$, we can compute, in time $\mathcal{O}(n^{c+4}m^2)$, a valid merged path system $\mathcal{P'}$ such that $w(\mathcal{P'}) \leq w(\mathcal{P})$. Moreover, the number of distinct edges used in $\mathcal{P'}$ is at most the number of distinct edges used in $\mathcal{P}$. 
\end{lemma}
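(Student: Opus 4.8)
The plan is to show that as long as the path system is not merged, there exist two paths $P_a, P_b \in \mathcal{P}$ that intersect in at least two maximal segments, and that we can ``swap tails'' between them at one of these intersections to strictly reduce a suitable potential function without increasing the total weight and without introducing new edges. First I would make precise the swap operation: if $P_a$ goes from $v_{s_a}$ to $v_{t_a}$ and $P_b$ goes from $v_{s_b}$ to $v_{t_b}$, and both share a maximal common segment $\sigma$ (traversed possibly in opposite directions), then I reroute by gluing the prefix of $P_a$ up to $\sigma$ with the suffix of $P_b$ after $\sigma$, and symmetrically for the other two pieces, taking care of the reversed-orientation case so that the two resulting walks are $v_{s_a} \rightsquigarrow v_{t_b}$ and $v_{s_b} \rightsquigarrow v_{t_a}$ (or $v_{s_a} \rightsquigarrow v_{t_a}$, $v_{s_b} \rightsquigarrow v_{t_b}$ in the reverse case). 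The key geometric observation is that two paths sharing $\sigma$ and also meeting again elsewhere forces one of the four ``corner'' combinations to be a genuine walk of weight no larger than $w(P_a)+w(P_b)$, because the shared segment is counted once instead of twice in at least one recombination; after shortcutting the recombined walks to paths (removing any repeated vertices) the weight can only decrease further, and the set of edges used is a subset of $E(P_a)\cup E(P_b)$, giving both the weight bound and the ``no new edges'' claim. I also need to re-establish $T$-validity after the swap: since the multiset of source vertices and the multiset of target vertices are preserved (a swap just repairs which source is matched to which target), and each new path still starts on a token, validity in the sense of Sec.~\ref{sec:problem} is maintained — I would cite the fact, noted in the excerpt, that a valid path system admits an executable ordering and that the obstruction solver can always re-pair sources and targets.

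Next I would set up the termination argument via a potential function. A natural choice is $\Phi(\mathcal{P}) = \big(w(\mathcal{P}),\, \sum_{\{P,P'\}} (\text{number of intersection segments of }P,P')\big)$ ordered lexicographically, or more simply the total number of (vertex, path)-incidences, which is bounded by $n$ times the number of paths and hence by $\mathcal{O}(n^{c+1})$ since each path has weight at most $\mathcal{O}(n^c)$ and at most $\mathcal{O}(n)$ paths exist. Each swap-and-shortcut either strictly decreases $w(\mathcal{P})$ or keeps $w(\mathcal{P})$ fixed while strictly decreasing the number of intersection segments (the recombination at $\sigma$ merges the two copies of $\sigma$ into one incidence pattern, and shortcutting cannot create new multiple-intersections between a fixed pair without decreasing weight). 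Since $w(\mathcal{P}) \le \mathcal{O}(n^{c+1})$ and the segment-count is polynomially bounded, only polynomially many swaps occur — I would bound this by $\mathcal{O}(n^{c+2})$ or so. Each iteration requires detecting a violating pair: enumerate all $\mathcal{O}(n^2)$ (really $\mathcal{O}(\text{paths}^2) = \mathcal{O}(n^2)$ after bounding the number of paths) pairs of paths, and for each pair scan their vertex sequences (each of length $\mathcal{O}(n^c)$, though on grids $\mathcal{O}(n)$) to count maximal common segments; this is $\mathcal{O}(n^{2}\cdot n^{c}) = \mathcal{O}(n^{c+2})$ per iteration, or cheaper with hashing. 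Multiplying the per-iteration cost by the iteration bound and bookkeeping for path reconstruction and shortcutting (each $\mathcal{O}(n^c)$) gives the stated $\mathcal{O}(n^{c+4}m^2)$ bound; I would not optimize constants, and I expect the stated exponent is deliberately loose.

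The main obstacle I anticipate is the case analysis for the swap when the shared segment $\sigma$ is traversed in opposite directions by $P_a$ and $P_b$, together with ensuring that after recombination and shortcutting we still have a genuine \emph{path system} in which source vertices carry tokens — i.e., that the shortcutting step never deletes a source or target vertex and never produces a walk that fails to start at a token-bearing vertex. Handling this cleanly requires carefully choosing, among the (up to) two orientation-consistent recombinations, the one whose total weight does not exceed $w(P_a)+w(P_b)$; I would argue this by a short counting argument on how many times $\sigma$'s edges appear in each recombination, noting that the ``good'' recombination uses $\sigma$ once. A secondary subtlety is that shortcutting one recombined walk could in principle re-create an intersection with a \emph{third} path; I would absorb this by observing that shortcutting only removes edges, so it cannot increase the edge set and cannot increase $w(\mathcal{P})$, hence any such event is still covered by the monotone decrease of $\Phi$. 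Once these points are settled, the lemma follows by running the swap procedure to a fixed point.
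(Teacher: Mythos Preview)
Your tail-swap operation has a genuine gap: it does not always decrease your potential. Consider two paths
\[
P_a = s_a,\,u,\,v,\,p,\,w,\,x,\,t_a, \qquad P_b = s_b,\,u,\,v,\,q,\,w,\,x,\,t_b
\]
sharing two maximal segments $\sigma_1=(u,v)$ and $\sigma_2=(w,x)$, both traversed in the same direction and in the same order. Swapping tails at $\sigma_1$ yields $P_a' = s_a,u,v,q,w,x,t_b$ and $P_b' = s_b,u,v,p,w,x,t_a$; these are already simple paths (no shortcutting is triggered), have exactly the same total weight, and still share precisely the two segments $\sigma_1,\sigma_2$. Your $\Phi$ is unchanged and the procedure loops. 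Your parenthetical justification (``the recombination at $\sigma$ merges the two copies of $\sigma$ into one incidence pattern'') is simply false in the same-direction case: both recombined paths still contain $\sigma$. The opposite-direction case is fine, but that is not the only case.

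The paper uses a different merge operation that keeps source--target pairings intact. For an unmerged pair $P_i,P_j$ it locates the \emph{first} and \emph{last} vertices of $P_i$ that lie on $P_j$, takes the two subpaths $P^i_{f,l}$ and $P^j_{f,l}$ between them, and reroutes \emph{both} paths through one of these subpaths, so that afterwards $P_i$ and $P_j$ intersect in exactly one maximal segment. Which subpath survives is decided by a three-tier preference: first reduce total weight; failing that, increase the number of isolated tokens; failing that, drop a carefully chosen ``exclusive'' edge (one belonging to exactly one of $P^i_{f,l},P^j_{f,l}$) of minimum exclusivity frequency over all unmerged pairs. Termination is argued against this triple, not against an intersection-segment count; the key observation is that once the minimum-exclusivity edge has been eliminated from every unmerged pair it cannot re-enter one, because the reroute only reuses edges already present in the path system. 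This is what replaces your failed monotonicity claim and is also what delivers the ``no new distinct edges'' conclusion directly.
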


\begin{proof}
For any pair of paths $P_i$, $P_j$ that intersect, let $v_{i,f}$ and $v_{i,l}$ be the first and last vertices of $P_i$ that are also in $P_j$. Similarly, let $v_{j,f}$ and $v_{j,l}$ be the first and the last vertices of $P_j$ that are also in $P_i$. 
We let $P^i_{f,l}$ denote the subpath of $P_i$ that starts at $v_{i,f}$ and ends at $v_{i,l}$. 
We let $P^j_{f,l}$ denote the subpath of $P_j$ that starts at $v_{j,f}$ and ends at $v_{j,l}$. 
An edge in a pair of intersecting paths $P_i, P_j$ is said to be \emph{exclusive} if it belongs to either $P^i_{f,l}$ or $P^j_{f,l}$ but not both.

With the above in mind, we describe the merging process. While there exists an edge in the path system that is exclusive in some pair of intersecting paths, we look for the edge that is exclusive in the smallest number of intersecting path pairs (this number is called the \emph{exclusivity frequency}). If such an edge does not exist, this implies that the path system is merged. Once we have selected an edge, call it $e$, we pick an arbitrary pair of intersecting paths where the edge $e$ is exclusive, and we proceed to merge this pair. Let the selected pair be $P_i$ and $P_j$. We can either reroute $P^i_{f,l}$ through $P^j_{f,l}$, or we can reroute $P^j_{f,l}$ through $P^i_{f,l}$. If one of the reroutings decreases the weight of the path system, it is chosen, and the path system is updated accordingly. Otherwise, we make both paths go through whichever of $P^i_{f,l}$ or $P^j_{f,l}$ maximizes token isolation. If rerouting through either of those subpaths isolates the same number of tokens, we reroute both paths through the subpath that does not contain the edge $e$ we selected initially. 

By definition of a merged path system, termination implies correctness. Therefore, it remains to show that the algorithm terminates. Merging two paths has one of three consequences:

\begin{enumerate}
    \item The decrease of the overall weight of the path system, a possible increase in the number of isolated tokens in the path system (because path merging may decrease the number of distinct edges used in the path system), and an indeterminate effect on the exclusivity frequencies of edges in unmerged path pairs
    \item The increase of the overall number of isolated tokens in the path system and an indeterminate effect on the exclusivity frequencies of edges in unmerged path pairs (no increase in the overall weight of the path system)
    \item The decrease of the exclusivity frequency of the edge that is exclusive in the smallest number of unmerged path pairs and an indeterminate effect on the exclusivity frequencies of edges in other unmerged path pairs (no increase in the overall weight of the path system or decrease in the number of isolated tokens)
\end{enumerate}

The first two consequences occur polynomially many times; we can isolate at most $\mathcal{O}(n)$ tokens, and since we assume that $\sum_{e \in E(G)}{w(e)} = \mathcal{O}(n^{c})$, the first consequence can occur at most $\mathcal{O}(n^{c + 1})$ times (since  $\sum_{P \in \mathcal{P}}{w(P)} = \mathcal{O}(n^{c + 1})$). We need to take into account the interaction between the first two consequences and the third consequence. The third consequence can occur $\mathcal{O}(nm)$ times in a row, as each edge can belong to all paths, of which we have $\mathcal{O}(n)$. Interleaving the third consequence with the first two consequences, both of which have an indeterminate effect on exclusivity frequencies, leads to a total of $\mathcal{O}(nm \cdot (n + n^{c + 1})) = \mathcal{O}(n^{c+2}m)$ path pair merges. Each path pair merge is executed in time $\mathcal{O}(m)$ and is preceded by a lookup for the edge that is exclusive in the smallest number of intersecting path pairs. This lookup is done in time $\mathcal{O}(n^2m)$, as it requires checking every path pair and iterating over the edges of the paths in each path pair. The overall running time of the path merging procedure is therefore $\mathcal{O}(n^{c+2}m \cdot (m + n^2m)) = \mathcal{O}(n^{c+4}m^2)$.

We still have to show that, once an edge is no longer exclusive in any unmerged path pair, that its exclusivity frequency can no longer increase as a result of the third consequence. 
If some merge increases the exclusivity frequency of the edge in question, since merging involves reusing edges that are already part of the path system, this implies that the edge already occurred exclusively in some unmerged path pair involving the path that the merging rerouted through. This contradicts the fact that the edge is no longer exclusive in any unmerged path pair, as needed. 
\end{proof}

\subsection{Step 2 -- Unwrap path system}\label{app:ups}

The second step of the aro subroutine is to compute an \emph{unwrapped path system} (UPS). An unwrapped path system is a MPS such that no path within it contains another path. A path $P_i$ is said to \emph{contain} a path $P_j$ if the intersection between $P_i$ and $P_j$ is $P_j$. 

\begin{lemma}[Unwrapped path system]\label{lemma:ups}
Given an $n$-vertex (positive) edge-weighted graph $G$ with $|E(G)| = m$, two sets $S,T \subseteq V(G)$, and a $T$-valid merged path system $\mathcal{P}$, we can compute, in time $\mathcal{O}(n^2m)$, a valid unwrapped path system $\mathcal{P'}$ such that $w(\mathcal{P'}) \leq w(\mathcal{P})$. Moreover, the number of distinct edges used in $\mathcal{P'}$ is at most the number of distinct edges used in $\mathcal{P}$.
\end{lemma}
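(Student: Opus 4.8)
The plan is to follow the same template as the proof of Lemma~\ref{lemma:mps}: exhibit a local \emph{unwrapping operation}, show that one application never increases $w(\mathcal{P})$ nor the number of distinct edges used, preserves $T$-validity and the merged property, and makes progress toward a UPS; then bound the number of applications by a potential argument and multiply by the per-operation cost. The operation is: while some path contains another, pick a path $P_j$ that is \emph{maximal} with this property (contained in no other path). Because $\mathcal{P}$ is merged, every path contained in $P_j$ occupies a contiguous sub-run of $P_j$'s vertex sequence, so each such path — and $P_j$ itself — has both of its endpoints on $P_j$. Collect all source vertices and all target vertices of $P_j$ and of the paths it contains, sort the sources and the targets separately by their position along $P_j$, pair the $\ell$-th source with the $\ell$-th target, and replace each of these paths by the sub-path of $P_j$ joining its new source to its new target; the rest of $\mathcal{P}$ is untouched.

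For correctness of a single step I would parametrize the vertices of $P_j$ by their cumulative weight from its first endpoint, so that the weight of a sub-path of $P_j$ between two of its vertices equals the $L_1$ distance of their parameters. The new total weight of the replaced paths is then the cost of the \emph{monotone} matching between two point multisets on a line under $L_1$ cost, which is at most the cost of any other matching of the same multisets, in particular the original one; since the replaced paths were sub-paths of $P_j$ their original total weight equals the latter, so $w(\mathcal{P}')\le w(\mathcal{P})$. The new sub-paths use only edges of $P_j$, which were already in use, so the number of distinct edges does not increase. The sorted matching is crossing-free, which for sub-intervals of a common path means the new sub-paths are pairwise non-nested; and anything a new sub-path could contain would itself lie inside $P_j$, hence would be among the paths just re-paired, so no new sub-path contains another path. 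The merged property survives because the intersection of a new sub-path (an interval of $P_j$) with any surviving path $P_k$ is contained in $P_j\cap P_k$, a single run of $P_j$ embedded monotonically into $P_k$; intersecting that run with an interval of $P_j$ again yields a single contiguous run in both paths. Finally, the set of source vertices and the set of target vertices are unchanged and every new path routes a source to a target inside $G$, so $\mathcal{P}'$ is still $T$-valid, by the same reasoning used for the obstruction solver of Sec.~\ref{sec:baseline}.

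The main obstacle, exactly as in Lemma~\ref{lemma:mps}, is termination together with the $\mathcal{O}(n^2m)$ running-time bound. Processing a maximal $P_j$ destroys every tangled pair inside its cluster, but it can create new tangled pairs between a new sub-path of $P_j$ and some surviving, longer path, so no single obvious quantity (number of tangled pairs, weight, distinct edges) is on its own strictly monotone. I would control this with a three-part lexicographic potential mirroring the ``three consequences'' analysis of Lemma~\ref{lemma:mps}: each unwrapping step either (i) strictly decreases $w(\mathcal{P})$; or (ii) strictly increases the number of isolated tokens without increasing $w(\mathcal{P})$ — this is precisely the case where the sorted matching pairs a source with a coinciding target, collapsing a path to a single vertex; or (iii) strictly decreases a ``tangledness'' measure (such as the number of ordered pairs $(P,Q)$ with $V(P)\subseteq V(Q)$, or a suitably weighted refinement) without increasing $w(\mathcal{P})$ or decreasing the number of isolated tokens. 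Since $w(\mathcal{P})=\mathcal{O}(n^{c+1})$, at most $\mathcal{O}(n)$ tokens can become isolated, and the tangledness measure is $\mathcal{O}(n^2)$, this bounds the number of steps polynomially; each step costs $\mathcal{O}(nm)$ to locate a maximal containing path, re-pair the endpoints, and rebuild the sub-paths, yielding $\mathcal{O}(n^2m)$ overall.

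The part I expect to require the most care is verifying that the tangledness measure in case (iii) really is monotone: one must show that re-pairing along the host $P_j$ never increases it on the surviving part of the system. The key tool here is again the single-run structure of a merged path system — a surviving path $P_k$ meets $P_j$ in one monotone run $R_k$, so a new sub-path can be contained in $P_k$ only if it lies inside $R_k$, and then its endpoints come from the contiguous blocks of sorted sources/targets lying in $R_k$; matching the bookkeeping of these blocks against the paths $P_k$ already contained is the delicate combinatorial step, and I would dispatch it exactly as the corresponding exclusivity-frequency argument is dispatched in the proof of Lemma~\ref{lemma:mps}.
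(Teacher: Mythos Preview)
Your single-step analysis --- that the sorted re-pairing along a host path does not increase weight, introduces no new edges, preserves $T$-validity, and keeps the system merged --- is correct and essentially matches the paper (one small difference: the paper re-pairs only the paths wrapped in the host $P$, not $P$ itself). The divergence is in termination and running time, and there your proposal both over-engineers and leaves genuine gaps.

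The paper does not iterate to a fixed point via a potential. It makes a \emph{single pass} over the paths in arbitrary order: for each $P$, collect the paths currently wrapped in $P$, sort their sources and their targets separately by position along $P$, and re-pair monotonically. That is $\mathcal{O}(n)$ iterations at $\mathcal{O}(nm)$ each, giving $\mathcal{O}(n^2m)$ directly. Correctness is argued without any potential: processing $P$ destroys every wrapping among paths inside $P$, and the paper then argues that processing a later $P_k$ can only modify paths wrapped in $P_k$, so it cannot create a wrapping inside any $P_j$ that $P_k$ does not contain --- hence one sweep suffices.

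Your three-part lexicographic potential, by contrast, does not yield the claimed bound even if it worked. Combining $w(\mathcal{P})=\mathcal{O}(n^{c+1})$, $\mathcal{O}(n)$ isolations, and an $\mathcal{O}(n^2)$ tangledness measure bounds the number of steps by something of order $\mathcal{O}(n^{c+3})$, not $\mathcal{O}(n)$; multiplied by $\mathcal{O}(nm)$ per step this overshoots $\mathcal{O}(n^2m)$. More importantly, case~(iii) is not established. You correctly observe that re-pairing inside a maximal $P_j$ can create new containments in a surviving $P_k$ (a new sub-interval of $P_j$ may land entirely inside the single run $P_j\cap P_k$), and you then defer the required monotonicity to an analogy with the exclusivity-frequency bookkeeping of Lemma~\ref{lemma:mps}. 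That analogy is not automatic: the exclusivity argument there depends on the specific way a merge reroutes through one of two existing subpaths, whereas here the quantity to control is how many matched source--target pairs fall inside each run $P_j\cap P_k$, and you have not shown this cannot increase under sorted re-pairing when weight and isolated-token count are unchanged. The paper sidesteps all of this by not needing a potential at all.
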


\begin{proof}
We process the paths in an arbitrary order, and for every path $P$, we unwrap all the paths that it wraps. To do so, we go through the path system and we detect all paths whose source vertex and target vertex are both contained within the selected path $P$. Let $l$ be the number of such paths. We then separately sort the $l$ source vertices and the $l$ target vertices by their order of appearance within the selected path $P$. The final step assigns source $i$ to target $i$ in the ordering, and it does so for all $i$ in $\mathbb{N}^{+}([1,l])$; the path with source $i$ as source vertex is then rerouted to target vertex $i$ via the selected path $P$.

For a selected path $P$, this process destroys all wrappings within it. Assume it does not, that is, assume the assignment of sources to targets in order of appearance in $P$ fails to unwrap a wrapping. Without loss of generality, we will suppose that $P_g$ wraps $P_h$, and we will use $g$ and $h$ to designate the indices of the source and target vertices of $P_g$ and $P_h$ respectively after sorting within $P$, rather than them just being arbitrary indices for the tangled paths. The concerned vertices $v_{s_g}, v_{s_h}, v_{t_g}, v_{t_h}$, must have appeared in one of four orders. Two of those four orders will be discussed below, as the analysis for the other two is symmetrical. If the sequence of vertices in the initial selected path $P$ takes on the form $\ldots, v_{s_g}, \ldots, v_{s_h}, \ldots, v_{t_h}, \ldots, v_{t_g}, \ldots$, we have a contradiction; since $v_{s_g}$ appears before $v_{s_h}$ in $P$, $g < h$. Likewise, since $v_{t_h}$ appears before $v_{t_g}$ in $P$, $h < g$. The same is true if we have the form $\ldots, v_{s_g}, \ldots, v_{t_h}, \ldots, v_{s_h}, \ldots, v_{t_g}, \ldots$. The last two forms, which are identical to the two forms we presented but with the positions of $v_{s_g}$ and $v_{t_g}$ switched, yield the same contradiction. 

We still have to show that path unwrapping does not ``unmerge'' a path system, and that there are no wrappings left when the algorithm terminates. 

We start by showing that applying path unwrapping to a merged path system does not undo merging. It is sufficient to show that unwrapping paths within an arbitrary path in a merged path system does not give rise to a pair of paths that have more than one intersection. When unwrapping paths within a path, some paths are shortened and some paths are extended. Shortening a path does not create an additional intersection between it and any other path, so we only have to worry about the paths that get extended as a result of the unwrapping. If the extension of some path $P_x$ makes it intersect some other path $P_y$ more than once, the selected arbitrary path $P_z$ which initially wrapped the now-extended path $P'_x$ intersects $P_y$ more than once, because $P'_x$ is a subpath of $P_z$, which contradicts the assumption that we started with a merged path system.

Finally, we show that no unwrapped path remains after path unwrapping terminates. Assume that path $P_i$ remains wrapped in path $P_j$ after termination. We know that, in its execution, the algorithm should have processed $P_j$ and all the paths that contain it. We proved that unwrapping paths within any of the paths that contain $P_j$ (including $P_j$ itself) would eliminate the wrapping of $P_i$ within $P_j$. Since the wrapping persisted, it has to be the case that it was caused by the unwrapping of paths within another path that does not contain $P_j$. This is not possible, as the only paths that modify $P_i$ and $P_j$ via unwrapping are paths that contain them. 

The algorithm unwraps paths within every path; unwrapping paths within a single path can be executed in time $\mathcal{O}(nm)$. Finding the wrapped paths can be accomplished in time $\mathcal{O}(m)$, whereas reconstructing the wrapped paths is done in time $\mathcal{O}(nm)$, which is equal to the sum of their lengths, and since there are $\mathcal{O}(n)$ paths in total, the running time of the path unwrapping procedure is $\mathcal{O}(n^2m)$.
\end{proof}

\subsection{Step 3 -- Detect and break cycles}\label{app:cps}

The third step of the aro algorithm detects and breaks cycles in a path system to compute a \emph{cycle-free path system} (CPS), which is a UPS such that the graph it induces is a forest. 

We use $G[\mathcal{P}]$ to denote the graph induced by (the vertices of the paths of) a path system $\mathcal{P}$, which we also call the \emph{path system graph}. 
A cycle is either represented by a sequence of vertices $\langle v_1, v_2, \ldots, v_k \rangle$ or a sequence of edges $\langle e_1, e_2, \ldots, e_k \rangle$. Given a path system $\mathcal{P} = \{P_1, \ldots, P_{k}\}$ that induces a cycle $C$ characterized by its edge set $\mathcal{E}$, we define an edge coloring of the cycle as a function $col_C : \mathcal{E} \mapsto \{1, \ldots, k\}$ (the subscript indicating the cycle is dropped if the cycle we are referring to is clear from the context), where color $i$ is associated with $P_i$. An edge coloring of the path system $\mathcal{P}$ (or, more broadly, a set of paths) is defined analogously. If $col(e_i) = j$, we say that the edge $e_i$ is \emph{$j$-colored}, and $e_i$ is \emph{$j$-colorable} if and only if it is on the path $P_j$. We say that a path is $j$-colorable if all its edges are $j$-colorable. Note that even though edges can appear in more than one path (which implies that an edge can be colored using one of multiple colors), we are interested in a special type of edge colorings, the purpose of which will become clearer later. 
A cycle is \emph{contiguously colored} if any two edges $e_1$, $e_2$ that have the same color $j$ are separated by a sequence of edges along the cycle that are $j$-colored.
%
%
A cycle that is not contiguously colored is \emph{discontiguously colored}. A cycle is \emph{contiguously colorable} if there exists a coloring of its edges that makes it contiguously colored. 
A color in a cycle is \emph{discontiguous} if there exist two non-consecutive edges $e$ and $e'$ in the cycle such that $col(e) = col(e')$ and neither of the two subpaths that connect them along the cycle are $col(e)$-colored.

\begin{theorem}\label{thm:cps}
Given an $n$-vertex (positive) edge-weighted graph $G$ with $|E(G)| = m$ and $\sum_{e \in E(G)}{w(e)} = \mathcal{O}(n^{c})$ for some positive integer $c$, two sets $S,T \subseteq V(G)$, and a $T$-valid unwrapped path system $\mathcal{P}$, we can compute, in time $\mathcal{O}(n^{c+6}m)$, a valid cycle-free path system $\mathcal{P'}$ such that $w(\mathcal{P'}) \leq w(\mathcal{P})$. Moreover, the number of distinct edges used in $\mathcal{P'}$ is at most the number of distinct edges used in $\mathcal{P}$.
\end{theorem}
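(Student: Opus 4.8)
The plan is to produce $\mathcal{P}'$ from $\mathcal{P}$ through an iterative \emph{cycle-breaking loop}. As long as the path system graph $G[\mathcal{P}]$ contains a cycle, we (i) locate a structurally well-behaved cycle in $G[\mathcal{P}]$ --- a \emph{special cycle}, (ii) identify the sub-collection of paths of $\mathcal{P}$ that induce it, and (iii) reroute exactly those paths so that at least one edge of the special cycle is no longer traversed by any path. Deleting that edge from $G[\mathcal{P}]$ breaks the chosen cycle, and the loop is repeated on the updated path system until $G[\mathcal{P}]$ is acyclic, i.e., a forest; re-establishing the merged and unwrapped properties as needed (via Lemmas~\ref{lemma:mps} and~\ref{lemma:ups}, which is free in both weight and distinct-edge count and, since merging/unwrapping only reuse edges already present, cannot reintroduce a deleted edge or a cycle) then yields the desired CPS $\mathcal{P}'$.

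Two facts make the loop well defined. The first is \emph{existence and efficient detection of special cycles}: whenever $G[\mathcal{P}]$ has any cycle it has a special one, and one can be found in polynomial time. I would take a special cycle to be (in essence) a chordless cycle of $G[\mathcal{P}]$ that is contiguously colorable --- one admitting an edge-coloring by the paths covering it in which every color class forms a contiguous arc --- and establish existence by starting from an arbitrary cycle and repeatedly passing to a sub-cycle along a chord, or rerouting to eliminate a \emph{discontiguous color}, a process that must terminate while keeping a cycle present; this is the content of App.~\ref{app:special-cycle}, with the search itself given in App.~\ref{app:find-cycles}. The second fact is \emph{breakability} (Lemma~\ref{cycle_breaking_reduction}): given a special cycle $C$ and its inducing paths, there is an edge $e$ of $C$ such that every path using $e$ traverses $C$ along a contiguous arc that contains $e$ and has weight at least $w(C)/2$; rerouting each such path around the complementary arc of $C$ then does not increase the total weight, eliminates $e$ from the path system, and --- since the replacement arcs only reuse edges already on $C$ --- does not increase the number of distinct edges used, while validity is preserved. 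The contiguous-coloring property is exactly what excludes the configuration of Fig.~\ref{fig:mst_counterexample}, in which no single edge can be removed weight-neutrally and the minimum-spanning-tree shortcut fails.

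For termination and running time I would use the number of distinct edges appearing in the current path system as a monotone potential: it is at most $m$ initially, it never increases (Step~(iii), and any re-merging or re-unwrapping, only reuses edges already present), and each cycle-break strictly decreases it, so the loop executes at most $\mathcal{O}(m)$ times. Bounding the cost of one iteration --- detecting a special cycle, then rerouting its inducing paths --- by $\mathcal{O}(n^{c+6})$ (the dominant contribution coming from the special-cycle search of App.~\ref{app:find-cycles} and the reduction of Lemma~\ref{cycle_breaking_reduction}, the $n^{c}$ factor accounting for the path weights entering the weight-budget checks) yields the claimed total $\mathcal{O}(n^{c+6}m)$. Validity, $w(\mathcal{P}')\le w(\mathcal{P})$, and ``distinct edges of $\mathcal{P}'$ at most distinct edges of $\mathcal{P}$'' all follow because each of these three quantities is preserved (weight and edge count non-increasing, validity exactly) by every individual step.

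The step I expect to be the main obstacle is \emph{breakability}. The naive move --- reroute every path using $e$ ``the long way'' around $C$ --- is not weight-safe when $e$ lies on a short arc of some path, and deleting an edge outside a minimum spanning tree can strictly increase the path-system weight (Fig.~\ref{fig:mst_counterexample}); it is precisely the contiguous colorability of a special cycle, together with the combinatorics of arcs on a circle that all contain a fixed edge, that lets one certify a ``good'' edge $e$. Close behind it is the proof that special cycles always exist and can be found in polynomial time --- controlling the elimination of discontiguous colors without destroying the cycle --- and the bookkeeping that Step~(iii) never reintroduces a deleted edge and never creates a new cycle, so that the loop genuinely makes progress toward a forest.
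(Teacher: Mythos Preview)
Your high-level loop (find a special cycle, break it, re-merge/re-unwrap, repeat) is the paper's scheme, but the step you flag as the main obstacle is in fact a genuine gap, and your proposed fix does not work.

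\medskip
\noindent\textbf{The breakability claim is false.} You assert that on a contiguously colored special cycle $C$ there exists an edge $e$ such that \emph{every} path using $e$ traverses an arc of weight at least $w(C)/2$, so that rerouting each such path along the complementary arc is weight-safe. Take a cycle of total weight $3\ell$ covered by three paths, each a contiguous arc of weight $\ell$, meeting only at their endpoints (sources and targets alternating around $C$). This is an inclusion-minimal, contiguously colored special cycle, yet every edge lies on a single path whose arc has weight $\ell<w(C)/2$; rerouting that path around the complementary arc strictly increases its weight by $\ell$. No ``good'' edge exists. The paper does \emph{not} break cycles this way: it first eliminates \emph{source-target corners} (target swaps that collapse one path without changing total weight), reducing to an even number of paths; for two paths it uses the merging logic; for four or more it builds the \emph{even} and \emph{odd reduced path systems}, observes their weights sum to twice the original, and picks the lighter one (breaking ties by discarding $e^\star$). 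That averaging argument is what replaces your arc-length claim.

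\medskip
\noindent\textbf{Misattribution and termination.} Lemma~\ref{cycle_breaking_reduction} is not a breakability lemma: it is the reduction from ``there is a special cycle through $e^\star$'' to $(l_{P_b},r_{P_b})$-reachability in the path-intersection graphs, i.e., it is part of the \emph{search} in App.~\ref{app:find-cycles}, not of the break in App.~\ref{app:break-cycles}. Your termination potential (number of distinct edges, strictly decreasing each round) hinges on the breaking step actually removing an edge; since your breaking step is not sound, the potential argument is unsupported. The paper instead uses a three-tier potential in Lemma~\ref{lemma:cycle_breaking_termination} (decrease of total weight; else increase of isolated tokens; else decrease of the frequency of the least-frequent cyclic edge $e^\star$), which yields $\mathcal{O}(n^{c+2}m)$ breaks at $\mathcal{O}(n^4)$ each, matching the stated $\mathcal{O}(n^{c+6}m)$. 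Your per-iteration bound of $\mathcal{O}(n^{c+6})$ with only $\mathcal{O}(m)$ iterations looks reverse-engineered to hit the target and is not justified.
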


The proof of the theorem is quite involved, so we break it into several parts. We first prove that whenever the path system graph $G[\mathcal{P}]$ contains a cycle passing through some arbitrary edge $e$ then $G[\mathcal{P}]$ must contain a \emph{special cycle} (defined later) passing through $e$. 
We then describe a procedure to find a special cycle passing through $e$. We further describe our approach to break the special cycle, and finally prove termination of the cycle-breaking procedure~(Lemma~\ref{lemma:cycle_breaking_termination}). While our results are applicable for arbitrary edges, we apply the algorithms we derive from the lemmas to a specific edge $e^\star$; the careful selection of the edge $e^\star$ is what guarantees termination. 

Given a path system, we let the \emph{frequency} of an edge denote the number of paths containing it. Cycle detection consists of finding whether there is a cycle in $G[\mathcal{P}]$. Finding a cycle can be achieved using any graph traversal algorithm; either a breath-first search (BFS) or a depth-first search (DFS) is sufficient. We wish to obtain additional information; if a cycle is found, we look for the edge $e^\star$, which is the edge with the smallest frequency among all edges contained in cycles.

To find the edge of interest, i.e., $e^\star$, we sort the edges with non-zero frequency in non-decreasing order of frequency, and then, in this ordering, we look for the earliest edge that is part of a cycle. For cycle detection, let $u$ and $w$ be the endpoints of an arbitrary edge $e$. Then, $e$ is part of a cycle in $G[\mathcal{P}]$ if and only if $w$ is reachable from $u$ in $G[\mathcal{P}] - e$, where $G[\mathcal{P}] - e$ denotes the graph obtained from $G[\mathcal{P}]$ after deleting the edge $e$. 
After the edge $e^\star$ is found, we look for paths that induce a special cycle, which is a cycle that contains $e^\star$ and has particular properties.

Before describing the procedure that allows us to identify the desired set of paths, we provide a few additional relevant definitions. We define an \emph{$e$-path} as a path that contains the edge $e$. The results from Lemmas~\ref{cycle_breaking_contiguous_colors},~\ref{cycle_breaking_two_cycle_edge_paths}, and~\ref{single_induced_cycle_existence} show that if there is a cycle in the path system graph that passes through an arbitrary edge $e$, then there must exist a \emph{special cycle} that passes through edge $e$. A special cycle that passes through edge $e$ is a cycle that:

\begin{enumerate}
    \item is contiguously colorable
    \item is induced by a set of paths that:
    \begin{enumerate}
        \item is inclusion-minimal
        \item contains at most two $e$-paths
        \item induces a single cycle, i.e., it induces no cycle other than the special cycle itself
    \end{enumerate}
\end{enumerate}


\subsubsection{Proof of existence of special cycles}\label{app:special-cycle}

\begin{lemma}
\label{cycle_breaking_contiguous_colors}
If there is a cycle $C$ in $G[\mathcal{P}]$ that passes through an arbitrary edge $e$, then there is a contiguously colorable cycle $C'$ in $G[\mathcal{P'}]$, $\mathcal{P'} \subseteq \mathcal{P}$, that also passes through edge $e$. 
\end{lemma}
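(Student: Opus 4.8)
The plan is to show that whenever some coloring of a cycle $C$ through $e$ is discontiguous, we can locally repair it — either by recoloring some edges so that a discontiguous color becomes contiguous without creating a new discontiguity, or, failing that, by shortening the cycle to a strictly smaller cycle that still passes through $e$. Since the cycle length and the number of discontiguous colors are bounded nonnegative integers, this iterative repair must terminate, and the object it terminates on is a contiguously colorable cycle $C'$ through $e$.

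Concretely, suppose color $j$ is discontiguous in $C$: there are non-consecutive edges $e_1, e_2$ on $C$ with $col(e_1)=col(e_2)=j$, and neither arc of $C$ between them is entirely $j$-colored. Both $e_1$ and $e_2$ lie on the path $P_j$. Consider the subpath $Q$ of $P_j$ running from $e_1$ to $e_2$ (pick the endpoints of $e_1,e_2$ appropriately so $Q$ is well-defined along $P_j$). First I would handle the case where $Q$, together with one of the two arcs of $C$ between $e_1$ and $e_2$, forms a closed walk — then that closed walk contains a cycle, and I would argue this cycle still passes through $e$ (choosing $e_1,e_2$ and the arc so that $e$ stays on the arc we keep; this is possible because $e$ lies on one of the two arcs, and we have freedom in which discontiguous pair to pick). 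This cycle is no longer than $C$, and if it is strictly shorter we have made progress; if it has the same length, then $Q$ coincided with the complementary arc, meaning that complementary arc was entirely $j$-colored, contradicting discontiguity of $j$. So in the "closed walk" case we always strictly shorten.

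The remaining case is that $Q$ shares vertices with $C$ other than at $e_1, e_2$'s endpoints in a way that doesn't immediately give a short cycle; here I would instead argue by rerouting $C$ itself along $Q$ where it is "detour-free," or by taking the symmetric-difference of the edge sets of $C$ and $Q$, which decomposes into edge-disjoint cycles, and selecting the component through $e$. Recoloring the edges of $Q$ that get incorporated into the new cycle with color $j$ (they are all $j$-colorable since they lie on $P_j$) makes the $j$-edges on the new cycle contiguous, and crucially does not introduce a discontiguity in any other color, because we only changed colors of edges on $P_j$ and left every other edge's color untouched — any other color that was contiguous remains a union of arcs that were not subdivided by this operation. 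Thus each repair step either strictly decreases cycle length or strictly decreases the number of discontiguous colors without increasing length; by a standard monotone-variant argument (order the pairs lexicographically: length first, then count of discontiguous colors), the process terminates at the desired $C'$.

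The main obstacle I anticipate is the bookkeeping in the "symmetric difference decomposes into cycles, pick the right component through $e$" step: one must verify that $e$ survives into a component that is a genuine cycle (not a trivial one) and that the recoloring genuinely eliminates the targeted discontiguity while provably not spawning a new one in another color. Getting the case analysis on the relative position of $e$, $e_1$, $e_2$ along $C$ clean — and in particular ensuring we can always choose a discontiguous pair whose "keep" arc contains $e$ — is where the real work lies; the termination argument itself is routine once the step is shown to be length-nonincreasing and discontiguity-decreasing.
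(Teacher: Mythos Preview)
Your proposal has a genuine gap: the claim that the repaired cycle is ``no longer than $C$'' is false in general, and with it your whole lexicographic termination measure collapses. The subpath $Q$ of $P_j$ between the two discontiguous $j$-edges can leave $C$ entirely and take an arbitrarily long detour through $G[\mathcal{P}]$; the closed walk (kept arc of $C$) $\cup$ $Q$ can then be much longer than $C$, and so can the cycle you extract from it. Your fallback (``if same length then $Q$ coincides with the complementary arc'') is also wrong: equal length does not imply equal edge set. So neither the ``strictly shorter'' nor the ``length-nonincreasing'' part of your progress argument holds.

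A second gap: you do not handle the case where $Q$ itself contains $e$. Nothing prevents $P_j$ from passing through $e$ even when $j\neq col(e)$, and when $j=col(e)$ you must additionally worry that $e$ is one of the discontiguous edges you are trying to merge. Your ``keep the arc containing $e$'' move is not well-defined in these situations, and the symmetric-difference idea (path $\oplus$ cycle) does not decompose into cycles since a path has odd-degree endpoints.

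The paper's proof abandons cycle length entirely and instead uses the total number of maximal monochromatic segments as the sole progress measure: each repair turns all of color $j$ into a single segment while never splitting any other color's segment, so the count strictly drops. To make this work it treats the color of $e$ separately: first it runs a ``peripheral merge'' for every $j\neq col(e)$ (with an explicit subcase for when $P_j$ passes through $e$), and only after all other colors are contiguous does it run a dedicated ``edge-color merge'' for $col(e)$, with its own case analysis on where the extra $col(e)$-segments sit relative to the segment containing $e$. The upshot is that the right invariant is segment count, not length, and the position of $e$ relative to the rerouting path needs explicit case analysis that your sketch does not supply.
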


\begin{proof}
Consider any cycle $C$ that passes through an arbitrary edge $e$. If the cycle is contiguously colored, there is nothing to prove. Otherwise, we describe how the existence of this cycle implies the existence of a contiguously colorable cycle that includes $e$. We call a \emph{$j$-colored segment} a maximal contiguous sequence of $j$-colored edges in the edge representation of the cycle. We consider an edge coloring of $C$, $col_C$, that has a minimal number of segments in the cycle; this guarantees that the edges that separate $j$-colored segments on the cycle and that are not $j$-colored themselves are not all $j$-colorable because, otherwise, we can reduce the number of segments in the cycle.

We first cover notation we will be recurrently using in this proof. For any two vertices $u, v \in V(P_k)$, where $P_k$ is an arbitrary path, we denote the subpath of $P_k$ going from $u$ to $v$ by $P_{k, u \rightarrow v}$. For any cycle $C_k$, any edge $\eta \in E(C_k)$ and any two vertices $u, v \in V(C_k)$, we use $P_{C_k, u \rightarrow \eta \rightarrow v}$ (resp. $P_{C_k,u \rightarrow v}$) to refer to the subpath between $u$ and $v$ along cycle $C_k$ that goes through (resp. does not go through) $\eta$.

We describe a procedure that reduces the number of segments in the cycle. Let $col_C(e)$ be the color of the edge $e = \{u, w\}$. We first handle making all other colors in $C$ contiguous, which is a process we call \emph{peripheral color merging}. We then handle making the color of $e$ in the resulting cycle contiguous, which we refer to as \emph{edge color merging}.

\textbf{Peripheral color merge.} Peripheral color merging merges multiple $j$-colored segments into a single $j$-colored segment, where $j \neq col_C(e)$. Consider the vertex representation $v_1, v_2, \ldots, v_k$ of the cycle (where $v_1 = u$ and $v_k = w$ are the endpoints of $e$). Let $v_{j_p}$ and $v_{j_q}$ be the earliest and the latest vertices in the vertex representation of the cycle belonging to path $P_j$ ($v_{j_p}$ and $v_{j_q}$ may be $v_1$ and $v_k$, respectively). Clearly, $P_{j, v_{j_p} \rightarrow v_{j_q}}$ is $j$-colorable. We consider two cases: 
\begin{enumerate}
    \item \emph{$P_{j, v_{j_p} \rightarrow v_{j_q}}$ does not contain $e$:} $P_{j, v_{j_p} \rightarrow v_{j_q}}$ can replace $P_{C, v_{j_p} \rightarrow v_{j_q}}$ in the vertex sequence of the cycle, and we set the color of all the edges in $P_{j, v_{j_p} \rightarrow v_{j_q}}$ to $j$. It is easy to see that $P_{j, v_{j_p} \rightarrow v_{j_q}}$ and $P_{C, v_{j_p} \rightarrow e \rightarrow v_{j_q}}$ are internally vertex-disjoint, so the new vertex sequence is that of a cycle containing the edge $e$. The original cycle was therefore transformed into a new cycle $C'$ where color $j$ is contiguous, as ensured by the definition of $v_{j_p}$ and $v_{j_q}$. This implies that the cycle transformation reduced the number of segments by at least 1, since there were at least two $j$-colored segments in $C$. 
    \item \emph{$P_{j, v_{j_p} \rightarrow v_{j_q}}$ contains $e$:} We look at the vertices in $V(C) \cap V(P_j)$, and we partition them into two sets; the set of vertices $V_{j_b}$ that are before $e$ in the vertex representation of $P_j$, and the set of vertices $V_{j_a}$ that are after $e$ in the vertex representation of $P_j$. It is evident that $V_{j_b} \cap V_{j_a} = \emptyset$ and that $|V_{j_b}| \neq 0$, $|V_{j_a}| \neq 0$, because both $v_{j_p}$ and $v_{j_q}$ belong to one of the two sets, and they belong to different sets. Next, we search for two vertices $v_{j_b}$ and $v_{j_a}$ in the vertex representation of the cycle, such that $v_{j_b} \in V_{j_b}$, $v_{j_a} \in V_{j_a}$, and none of the internal vertices of $P_{C, v_{j_b} \rightarrow v_{j_a}}$ are in $V_{j_a}$ or $V_{j_b}$. There exist two paths between $v_{j_b}$ and $v_{j_a}$, $P_{j, v_{j_b} \rightarrow v_{j_a}}$ and $P_{C, v_{j_b} \rightarrow v_{j_a}}$, and those two paths are internally vertex-disjoint because $P_{C, v_{j_b} \rightarrow v_{j_a}}$ sharing a vertex with $P_{j, v_{j_b} \rightarrow v_{j_a}}$ that is not $v_{j_b}$ or $v_{j_a}$ would contradict our choice of $v_{j_b}$ and $v_{j_a}$. Those two paths will make up our transformed cycle $C'$. The former of the two paths will be $j$-colored (meaning that we change the color of $e$ as well), whereas the coloring of the latter of the two paths is unchanged. Color $j$ is contiguous in $C'$, given how we defined $v_{j_b}$ and $v_{j_a}$, meaning that we have reduced the number of segments by at least 1. Notice that $col_{C'}(e) \neq col_C(e)$.
\end{enumerate}

We exhaustively apply the procedure above, which clearly terminates, since the number of segments is reduced after every step, and we end up with a cycle $C''$ with edge coloring $col_{C''}$, where the only discontiguous color, if any, is $col_{C''}(e)$, the color of the edge $e$ (note that $col_{C''}(e)$ may or may not be equal to $col_C(e)$). In what follows, we use $\rchi$ (resp. $P_{\rchi}$) to refer to $col_{C''}(e)$ (resp. the path associated with $col_{C''}(e)$). We now describe how $\rchi$ can be made contiguous. 

\begin{figure}[t]
    \centering
    \includegraphics[]{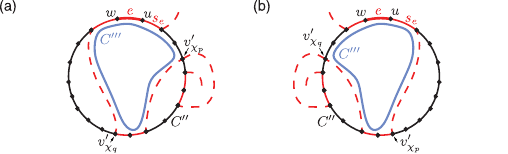}
    \caption{
    \textbf{Merging the edge color for the case where exactly one of $V_{\rchi_b} \setminus V(s_e)$ and $V_{\rchi_a} \setminus V(s_e)$ is empty.}
    (a-b)~Starting from cycle $C''$ where the edge color (red) is discontiguous, we construct a cycle $C'''$ (blue line) in which the color of edge $e$ (red) is contiguous via path $P_{\rchi}$ (red solid line on the cycle to highlight the segments, red dashed line indicates edges in the path that are not shared with the cycle).
    }
    \label{fig:coloring_cases}
\end{figure}

\textbf{Edge color merge.} Edge color merging merges multiple $\rchi$-colored segments in $C''$ into a single $\rchi$-colored segment. In the vertex representation of $C''$, we are interested in segments that are $\rchi$-colored. If there is only one $\rchi$-colored segment, the cycle is contiguously colored and we are done. Otherwise, we have two or more $\rchi$-colored segments. We can assume that the $\rchi$-colored segments are maximal, that is, none of the $\rchi$-colored segments can be extended by recoloring edges in $C''$ to include more edges. We let $s_e$ be the segment containing edge $e$ and we let $V(s_e)$ denote the vertices of $s_e$. We define $V_{\rchi_b}$, $V_{\rchi_a}$, $v_{\rchi_b}$ and $v_{\rchi_a}$ analogously to $V_{j_b}$, $V_{j_a}$, $v_{j_b}$, and $v_{j_a}$, respectively. Since there are two or more $\rchi$-colored segments in $C''$, it must be the case that at least one of $V_{\rchi_b} \setminus V(s_e)$ or $V_{\rchi_a} \setminus V(s_e)$ is nonempty, otherwise, $(V_{\rchi_b} \cup V_{\rchi_a}) \setminus V(s_e) = \emptyset$, i.e., $V(P_{\rchi}) \cap V(C) = V(s_e)$, which would imply that we have a single $\rchi$-colored segment in the cycle. The remaining two cases are considered below:

\begin{enumerate}
    \item \emph{$V_{\rchi_b} \setminus V(s_e)$ and $V_{\rchi_a} \setminus V(s_e)$ are both nonempty:} We apply the algorithm described in the second case of peripheral color merging in order to merge color $\rchi$ (i.e., $P_{\rchi}$ plays the role of $P_j$ in that case). What makes applying said algorithm possible here is the fact that both $V_{\rchi_b} \setminus V(s_e)$ and $V_{\rchi_a} \setminus V(s_e)$ are nonempty, which implies that $v_{\rchi_b}$ and $v_{\rchi_a}$ both exist, and the construction of the cycle follows from that. The resulting cycle $C'''$ is a cycle where color $\rchi$ is contiguous, so we are done.
    \item \emph{Exactly one of $V_{\rchi_b} \setminus V(s_e)$ or $V_{\rchi_a} \setminus V(s_e)$ is empty:} We look at $v'_{\rchi_p}$ and $v'_{\rchi_q}$, which are the earliest and the latest vertex, respectively, in the vertex representation of the cycle belonging to path $P_{\rchi}$ but not segment $s_e$. We either have $v'_{\rchi_p}, v'_{\rchi_q} \in V_{\rchi_b} \setminus V(s_e)$, or $v'_{\rchi_p}, v'_{\rchi_q} \in V_{\rchi_a} \setminus V(s_e)$. In the vertex representation of $P_{\rchi}$, either $u$ comes before $w$, or $w$ comes before $u$. We have four different subcases we split into two different groups: 
    \begin{enumerate}
        \item $v'_{\rchi_p}, v'_{\rchi_q} \in V_{\rchi_b} \setminus V(s_e)$ and $w$ comes before $u$ in the vertex representation of $P_{\rchi}$, or $v'_{\rchi_p}, v'_{\rchi_q} \in V_{\rchi_a} \setminus V(s_e)$ and $w$ comes after $u$ in the vertex representation of $P_{\rchi}$ (Fig.~\ref{fig:coloring_cases}a): There exist two vertex-disjoint paths between $w$ and $v'_{\rchi_p}$; $P_{\rchi, w \rightarrow v'_{\rchi_p}}$  and the path $w, u,\ldots,v'_{\rchi_p}$ along the cycle going through $e$. Those two paths will make up our transformed cycle $C'''$. The former of the two paths will be $\rchi$-colored, whereas the coloring of the latter of the two paths is unchanged. Color $\rchi$ is contiguous in $C'''$, and no color was made discontiguous, as needed.
        \item $v'_{\rchi_p}, v'_{\rchi_q} \in V_{\rchi_b} \setminus V(s_e)$ and $u$ comes before $w$ in the vertex representation of $P_{\rchi}$, or $v'_{\rchi_p}, v'_{\rchi_q} \in V_{\rchi_a} \setminus V(s_e)$ and $u$ comes after $w$ in the vertex representation of $P_{\rchi}$ (Fig.~\ref{fig:coloring_cases}b): There exist two vertex-disjoint paths between $u$ and $v'_{\rchi_q}$; $P_{\rchi, u \rightarrow v'_{\rchi_q}}$  and the path $u, w,\ldots,v'_{\rchi_q}$ along the cycle (in reverse order) going through $e$. Those two paths will make up our transformed cycle $C'''$. The former of the two paths will be $\rchi$-colored, whereas the coloring of the latter of the two paths is unchanged. Color $\rchi$ is contiguous in $C'''$, and no color was made discontiguous, as needed.
    \end{enumerate}
\end{enumerate}
This completes the proof. 
\end{proof}

\begin{lemma}
\label{cycle_breaking_two_cycle_edge_paths}
Let $C$ be a 
cycle in $G[\mathcal{P}]$ that passes through an arbitrary edge $e$ and let $\{P_1, \ldots, P_k\}$ be an inclusion-minimal set of paths from $\mathcal{P}$ that contains all edges of $C$. If there exist more than two $e$-paths in $\{P_1, \ldots, P_k\}$ then there exists a contiguously colored cycle $C'$ in $G[\mathcal{P}]$ that passes through edge $e$ such that an inclusion-minimal set of paths $\{P'_1, \ldots, P'_{k'}\} \subset \{P_1, \ldots, P_k\} \subseteq \mathcal{P}$ that contains all edges of $C'$ has at most two $e$-paths (and $k' < k$). 
\end{lemma}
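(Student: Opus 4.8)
The plan is to mine the rigidity that a contiguously colored cycle together with an inclusion-minimal cover imposes. Since the coloring uses the cover indices $\{1,\dots,k\}$ and the cover is inclusion-minimal, every $P_\ell$ is the color of at least one edge (else each edge it covers has another cover path on it and $P_\ell$ is removable), so by contiguity each $P_\ell$ is the color of exactly one maximal monochromatic arc $S_\ell$ of $C$. Moreover each $S_\ell$ is a genuine subpath of $P_\ell$: two consecutive cycle edges sharing a vertex $v$ and both lying on $P_\ell$ force $v$ to be interior to $P_\ell$ and these to be $v$'s two $P_\ell$-edges. Normalize $col(e)=1$ with $e=\{u,w\}$, so $e$ lies inside $S_1\subseteq P_1$. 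Any other cover $e$-path $P_j$ then contains both $e$ and the subpath $S_j$ (disjoint from $e$ up to boundary cases), so it has a well-defined \emph{connecting subpath} $Q_j$ running inside $P_j$ from an endpoint of $e$ to an endpoint of $S_j$, with $S_j\cup Q_j\cup\{e\}$ a subpath of $P_j$.

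\textbf{The re-routing surgery.} Suppose there are $t\ge 2$ cover $e$-paths besides $P_1$; order their segments $S_{a_1},\dots,S_{a_t}$ by position along the path $\pi = C-e$. I would pick $P_a$ among these so that its connecting subpath $Q_a$ leaves $e$ on the side whose $C$-arc toward $S_a$ contains the segment $S_b$ of at least one other cover $e$-path $P_b$ (choosing $S_a$ last along $\pi$, together with the freedom in which endpoint of $e$ to anchor at and which arm of $P_a$ reaches $S_a$, should always make this possible). I then cut from $C$ the arc from $e$ to $S_a$ containing $S_b$ and splice in the portion of $P_a$ realizing $S_a\cup Q_a\cup\{e\}$. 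The resulting closed walk $C^\star$ still runs through $e$, now uses $P_a$ for the spliced chunk (so $e$ may be recolored $a$), and no longer meets $S_b$; if the spliced $P_a$-subpath touches the retained arc, I replace $C^\star$ by the sub-cycle cut off at the first such touch, verifying it still passes through $e$ by, if necessary, anchoring the splice at the other endpoint of $e$ and taking the complementary $C$-arc. I then restore contiguity by applying Lemma~\ref{cycle_breaking_contiguous_colors} to $C^\star$. An inclusion-minimal cover of $C^\star$ is contained in $\{P_a,P_1\}\cup\{P_\ell : S_\ell \text{ survives the splice}\}$; since the spliced piece reuses only edges of $P_1$ and $P_a$, no new $e$-path is forced in, and $P_b$ is dropped, so $C^\star$ has strictly fewer cover $e$-paths than $C$.

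\textbf{Descent.} Finishing is then a descent: starting from any contiguously colored cycle through $e$ whose minimal cover has $\ge 3$ $e$-paths, one surgery step produces a contiguously colored cycle through $e$ whose minimal cover has at least one fewer $e$-path, so after finitely many steps the count is at most two, giving the desired $C'$. Equivalently, phrase it extremally: among all contiguously colored cycles through $e$, take one minimizing the number of cover $e$-paths; if this number were $\ge 3$ the surgery would contradict minimality.

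\textbf{Main obstacle.} The delicate part is the geometric bookkeeping in the surgery: proving $P_a,P_b$ can always be chosen so the $e$-to-$S_a$ arc actually contains $S_b$ (combining the ``last along $\pi$'' choice with the anchor/arm freedom), and proving that after discarding crossings one still has a \emph{simple} cycle that genuinely passes through $e$ rather than a sub-loop that has wandered off it. Boundary situations — $e$ at an end of $S_1$, $S_j$ meeting $S_1$ at a vertex, single-vertex connecting subpaths — will need the same flavor of four-way case split as in the proof of Lemma~\ref{cycle_breaking_contiguous_colors}, which I also expect to invoke both to set up the initial contiguous coloring and to restore it after each surgical step.
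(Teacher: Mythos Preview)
Your surgery idea is close in spirit to the paper's, but there is a genuine gap exactly where you flag it. The claimed freedom ``in which endpoint of $e$ to anchor at and which arm of $P_a$ reaches $S_a$'' does not exist: since $S_a$ is a subpath of $P_a$ not containing $e$, it lies entirely in one component of $P_a-e$, and that component determines both the anchor endpoint and the arm uniquely. Now consider three $e$-paths $P_1,P_a,P_b$ where, along $\pi=C-e$ oriented from $u$ to $w$, the segment $S_a$ precedes $S_b$, while the structure of $P_a$ forces $Q_a$ to leave at $u$ and the structure of $P_b$ forces $Q_b$ to leave at $w$. Then the $C$-arc from $u$ toward $S_a$ is the short piece before $S_a$, which misses $S_b$; and the $C$-arc from $w$ toward $S_b$ is the short piece after $S_b$, which misses $S_a$. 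Neither choice of surgery path satisfies your selection criterion, so a single splice need not excise any other $e$-path's segment, and the descent can stall.

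The paper sidesteps this by not attempting to drop one $e$-path per step. Instead it sweeps through \emph{all} $e$-path colors $2,3,\ldots,\ell$ and back to $1$: at each step it recolors $e$ to the current color $j$ and applies the edge-color-merge construction of Lemma~\ref{cycle_breaking_contiguous_colors} (case~2) to restore contiguity of $j$. That construction leaves $e$ at an extremity of the resulting $j$-segment, so the next recoloring cannot break $j$'s contiguity, and the sweep never reintroduces discontiguities in already-processed colors. After the full loop the cycle satisfies the invariant that recoloring $e$ to \emph{any} surviving $e$-path color causes no discontiguity. This forces at most two $e$-paths directly: any $e$-path whose segment is not incident to $e$ would violate the invariant upon recoloring, and if three or more segments were all incident to $e$ then $e$ together with its two cycle-neighbours would carry three distinct colors, so $e$ could absorb a neighbour's color and one path would become redundant, contradicting minimality of the cover. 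The four-way case split you anticipate is thus done once, inside Lemma~\ref{cycle_breaking_contiguous_colors}, and reused wholesale here rather than re-derived for a bespoke choice of $(P_a,P_b)$.
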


\begin{proof}
We can assume that the edges of the same color are contiguous along $C$, by Lemma~\ref{cycle_breaking_contiguous_colors}. We show how to transform $C$ into a cycle that uses two or fewer $e$-paths.

If $C$ already has this property, there is nothing to prove. Otherwise, $C$ uses three or more $e$-paths and we describe how $C$ can be transformed. In $C$, we are dealing with a total of $\ell$ different $e$-paths, having colors $1$ to $\ell$, with color $1$ being the color of the edge $e$.
We now describe a process that yields a cycle with the desired properties. We iterate through the $\ell$ possible colors for the edge $e$ in order, starting from color $2$, and, for every color, we change the color of $e$ accordingly and we merge the edge $e$ with the other segment of the same color. The first color change may cause up to two discontiguities; one discontiguity in color $1$ if $e$ is not at the extremity of the $1$-colored segment, and one discontiguity in color $2$ if the $2$-colored segment does not share an endpoint of $e$ with the $1$-colored segment. If the color change makes color $2$ discontiguous, we fix the discontiguity in color $2$ using the construction from case 2 of the edge color merge subroutine from Lemma~\ref{cycle_breaking_contiguous_colors}. This construction can be applied here, despite the fact that its precondition may not be satisfied, because the vertices in the $2$-colored segment that does not contain $e$ are either all in $V_{2_b}$ or $V_{2_a}$, and the other end of the path whose vertices are in the cycle but are not part of a $2$-colored segment can be safely ignored. This construction restores color 2's contiguity, and eliminates the discontiguity in color 1 (if any) by either discarding or recoloring one of the two $1$-colored segments. Note that this process may eliminate some colors in the cycle; if an $e$-path color is eliminated, it is skipped in the process of iterating through the colors for the edge $e$. Next, we color the edge $e$ using color $3$. Note that this does not cause a discontiguity in color $2$ because the edge $e$ is at the extremity of color $2$'s segment, owing to the construction from the edge color merging subroutine; we merge color $3$, we color the edge $e$ using color $4$, and so on. Any coloring and merging can eliminate a color, but can never cause a discontiguity in any of the other colors. We then loop back to color $1$. At this point, there are no color discontiguities in the cycle, and no color change applied on the edge $e$ can cause color discontiguities. We claim that the cycle resulting from this process conforms to the property we are looking for. That is, it uses two or fewer $e$-paths.

Suppose it does not, that is, suppose that it uses three or more $e$-paths. If any of those $e$-paths is not incident to the edge $e$, the color of the edge $e$ can be changed to cause a discontiguity, which contradicts what we said earlier about changes in the color of $e$ involving an $e$-path color not being able to cause discontiguities. Otherwise, the only case where we have three or more contiguous segments incident to $e$ is when $e$ and the two edges that are adjacent to it along the cycle have three different colors. However, in that case, $e$ can be colored using one of the other two colors and the path whose color was previously used to color $e$ can be discarded, contradicting minimality.
\end{proof}

\begin{lemma}
\label{single_induced_cycle_existence}
If there exists a cycle $C$ in $G[\mathcal{P}]$ containing edge $e$, then there exists an inclusion-minimal set of paths $\mathcal{P}^\star = \{P_1, \ldots, P_{k}\} \subseteq \mathcal{P}$, $k \geq 3$, that induces a unique cycle $C^\star$ such that $C^\star$ contains $e$. 
\end{lemma}

\begin{proof}
Assume the set of paths $\{P_1, \ldots, P_{k}\}$ is inclusion-minimal and induces multiple cycles. Let $\mathcal{C} = \{C_1, C_2, \ldots, C_q\}$ denote the set of all cycles in $G[\mathcal{P}^\star]$. Let $\mathcal{C}_e \subseteq \mathcal{C}$ denote the set of cycles containing $e$ and let $\mathcal{C}_{\bar{e}} = \mathcal{C} \setminus \mathcal{C}_e$ denote all remaining cycles. Two colors $i$ and $j$ are said to be \emph{adjacent} on some cycle if some coloring of the cycle includes two (consecutive) segments colored $i$ and $j$ that share at least one vertex. Clearly, if two colors are adjacent, their corresponding paths intersect (either on a single vertex or on multiple consecutive vertices). 

By Lemma~\ref{cycle_breaking_two_cycle_edge_paths}, we know that $\mathcal{P}^\star$ contains at most two $e$-paths; as otherwise we can reduce the number of paths and maintain a cycle containing $e$, contradicting the minimality of $\mathcal{P}^\star$. Moreover, we know, by Lemma~\ref{cycle_breaking_contiguous_colors}, that at least one cycle in $\mathcal{C}_{e}$ can be contiguously colored. We denote this cycle by $C'$. Every path in $\{P_1, \ldots P_{k}\}$ contains at least one edge in every cycle of $\mathcal{C}_{e}$ that does not appear in any other path; otherwise, a path can be omitted from $\mathcal{P}^\star$ and the remaining paths will still induce a cycle containing $e$, contradicting the minimality of $\mathcal{P}^\star$. We call such edges~\emph{private edges}, i.e., edges belonging to cycles in $\mathcal{C}$ and to a single path in $\mathcal{P}^\star$.  
Without loss of generality, we let $\langle 1, 2, \ldots, k \rangle$ denote the color ordering of the segments of $C'$ (viewed in the clockwise order and assuming the segment containing $e$ is colored $1$). 

Now, assume that $\mathcal{P}^\star$ contains three or more $e'$-paths for some $e'$ appearing in any cycle of $\mathcal{C}$. Then, by Lemma~\ref{cycle_breaking_two_cycle_edge_paths}, we can find a proper subset of $\mathcal{P}^\star$, say $\mathcal{P}'$, such that there exists a contiguously colored cycle $C''$ in $G[\mathcal{P}']$; $\mathcal{P}'$ 
contains all edges of $C''$, has at most two $e'$-paths, and hence at least one path less than $\mathcal{P}^\star$, i.e., $|\mathcal{P}'| < |\mathcal{P}^\star|$. This implies that there exist two paths $P_i$ and $P_j$, where $i \neq j$ and $|i - j| \geq 2$, that are intersecting in $C''$ (have consecutive segments in $C''$) but whose segments are not consecutive in $C'$. First, assume that neither $P_i = P_1$ nor $P_j = P_1$. Then, we can construct a cycle containing $e$ and using at most $k - 1$ colors (paths), a contradiction. For the remaining case, we assume, without loss of generality, that $P_j = P_1$. Depending on whether the intersection of $P_i$ and $P_j$ includes $e$, is to the left of $e$ (in the vertex ordering of $P_1$), or is to the right of $e$, we can still construct a cycle containing $e$ and using at most $k - 1$ paths, again contracting our choice of $\mathcal{P}^\star$. Putting it all together, we can now assume that every edge of $\mathcal{P}^\star$ belonging to a cycle of $\mathcal{C}$ can appear in at most two paths. Using a similar argument, we also conclude that every cycle in $\mathcal{C}$ must contain at least one segment from each path in $\{P_1, \ldots, P_{k}\}$ (along with a private edge); otherwise we can again construct a cycle containing $e$ and using fewer paths. 

It remains to show that $\mathcal{C} = \{C^\star\}$, i.e., $G[\mathcal{P}^\star]$ contains a unique cycle $C^\star$. Assume otherwise and let $C'$ denote the contiguously colored cycle containing $e$. Let $C''$ be any other cycle. We have shown that $C''$ must contain at least one segment of each color and no edge of the path system belongs to more than two paths. There are two cases to consider. If $C''$ can be contiguously colored such that the ordering matches that of $C'$, i.e., $\langle 1, 2, \ldots, k \rangle$ with edge $e$ colored $1$, then we claim that $C' = C''$. To see why, consider two paths $P_{i}$ and $P_{i+1}$ whose colors are consecutive in the color ordering. Both $C'$ and $C''$ have to go through at least one private edge in $P_{i}$ and one private edge in $P_{i+1}$, and since the colors of $P_{i}$ and $P_{i+!}$ are contiguous in the color ordering, the cycles must go through the private edges in question via the unique intersection of the two consecutive path segments (using either a single vertex of the intersection or all edges of the intersection). Since the pair of paths whose colors are consecutive in the color ordering were picked arbitrarily, it follows that both $C'$ and $C''$ contain the edges (or vertices) in all the path intersections. A similar argument can be used to show that the inclusion of the edges (or vertices) in the path intersections in both cycles implies the inclusion of the edges in the subpaths between the path intersections in both $C'$ and $C''$. Since the two cycles have the same edge set, it follows that they are in fact the same cycle, as needed.

Finally, assume that $C''$ cannot be contiguously colored so that its color ordering matches that of $C'$. If for some edge coloring of $C''$, we get a color $i$ that is adjacent to at least $3$ distinct colors in $C'$ and $C''$, then we denote those colors by $j$, $l$, and $p$. One of the colors $j$, $l$, or $p$ will not be adjacent to color $i$ in the color ordering of $C$. Without loss of generality, assume that color $j$ is not adjacent to color $i$ in the color ordering of $C'$. Starting from the color ordering of $C'$, we can construct a sequence of segments that omits the colors between $i$ and $j$ in the color ordering of $C'$ because paths $P_{i}$ and $P_{j}$ intersect (given that colors $i$ and $j$ are adjacent). The constructed sequence of colors corresponds to a set of paths with cardinality strictly less than $k$ that induces a cycle (and contains $e$), thus contradicting the fact that the set of paths $\{P_1, \ldots, P_{k}\}$ was assumed to be inclusion-minimal with regards to $C^\star$. Moreover, we can assume a coloring of $C''$ that minimizes the number of segments which implies that the color ordering of $C''$ cannot contain a contiguous sequence of the form $i$, $i+1$, $i$, since this implies that $P_i$ and $P_{i + 1}$ intersect at least twice, another contradiction. It follows that the coloring of $C''$ must be contiguous and ordered as in $C'$. As noted in the previous case, this implies that $C' = C''$ and we are done. 
\end{proof}

The combination of the three previous lemmas allows us to reduce the problem of finding cycles to the problem of finding special cycles. The reason why special cycles are of interest to us is because they make the cycle-breaking algorithm work, as the cycle-breaking procedure is specifically designed to break special cycles.

\subsubsection{Procedure to find a special cycle}\label{app:find-cycles}

We now describe the procedure for finding a special cycle containing the edge $e^\star$ and obtain the set of paths that induce this special cycle. Recall that $e^\star$ is the edge with the smallest frequency among all edges that are contained in cycles.

Let $z$ be the number of $e^\star$-paths in the path system. For each $e^\star$-path, we construct $2(z-1) + 1$ different graphs that we call \emph{path intersection graphs}, for a total of $z(2(z-1) + 1)$ path intersection graphs. Every path intersection graph is built out of a selection of one $e^\star$-path as a \emph{base path}, and at most one other $e^\star$-path as a \emph{support path}, where a base path is an $e^\star$-path whose color we use for $e^\star$, and a support path is an $e^\star$-path that is needed as part of the special cycle we are looking for. In every path intersection graph, we add a vertex $v_{P_i}$ for every path $P_i$ that is not an $e^\star$-path, and we add an edge between two such vertices if the corresponding paths intersect. Moreover, in every path intersection graph with a path $P_b$ as its base path, we add two vertices $l_{P_b}$ and $r_{P_b}$. The vertex $l_{P_b}$ (resp. $r_{P_b}$) is associated with all the vertices in the base path that occur to the left (resp. to the right) of the edge $e^\star$ in the vertex sequence of the path (including one of the endpoints of the edge $e^\star$ in both cases). If some path $P_j$ that is not an $e^\star$-path intersects $P_b$, we add an edge from $v_{P_j}$ to either $l_{P_b}$ or $r_{P_b}$, depending on whether it intersects $P_b$ in a subpath associated with $l_{P_b}$ or $r_{P_b}$ (clearly, a path cannot intersect both subpaths without going through the edge $e^\star$ because the path system is merged). Support paths may either be treated as left-intersecting paths (i.e., the vertex corresponding to the support path is connected to $l_{P_b}$), right-intersecting paths (i.e., the vertex corresponding to the support path is connected to $r_{P_b}$), or neither, and that explains why we construct $2(z-1) + 1$ different path intersection graphs ($z-1$ path intersection graphs for the $z-1$ different choices of left-intersecting paths, $z-1$ path intersection graphs for the $z-1$ different choices of right-intersecting paths, and one path intersection graph with no support path) for every base path.

We claim that the problem of obtaining a set of paths that induces a special cycle containing edge $e^\star$ reduces to running $(l_{P_b}, r_{P_b})$-reachability queries on the constructed $z(2(z-1) + 1)$ path intersection graphs. More specifically, we use BFS to check whether $r_{P_b}$ is reachable from $l_{P_b}$, and if we find any yes-instance, we reconstruct the set of paths by using the BFS tree of the instance with the shortest path between $r_{P_b}$ and $l_{P_b}$; this set of paths induces a special cycle.

\begin{lemma}
\label{cycle_breaking_reduction}
Paths $P_1, P_2, \ldots, P_k$ induce a special cycle containing edge $e^\star$ $(e^\star \in E(P_1))$ if and only if $r_{P_1}$ is reachable from $l_{P_1}$ through vertices $v_{P_2}, v_{P_3}, \ldots, v_{P_k}$ in one of the generated reachability instances. 
\end{lemma}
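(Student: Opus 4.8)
The plan is to prove the biconditional directly, in both directions, using Lemmas~\ref{cycle_breaking_contiguous_colors}, \ref{cycle_breaking_two_cycle_edge_paths}, and \ref{single_induced_cycle_existence} to move between an arbitrary cycle through $e^\star$ and a special one, and using the facts that $\mathcal{P}$ is merged and unwrapped (any two paths meet in a single maximal intersection, no path is contained in another). By ``the instance determined by $P_1,\dots,P_k$'' I mean the path intersection graph whose base path is $P_1$ and whose support path is the second $e^\star$-path among $P_1,\dots,P_k$, if one exists (there is at most one more, by property~2b of a special cycle), designated on the side of $e^\star$ matching its position in the cycle; if $P_1$ is the only $e^\star$-path used, the no-support instance.

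For the forward direction I would start from a special cycle $C$ induced by $P_1,\dots,P_k$ and fix a contiguous coloring of $C$ in which $e^\star$ is colored $1$ (possible since $P_1$ is an inducing path through $e^\star$; cf.\ the recolouring in the proof of Lemma~\ref{cycle_breaking_two_cycle_edge_paths}). By inclusion-minimality each $P_i$ owns a private edge, hence contributes a nonempty maximal monochromatic arc $A_i$, so $C$ traverses the arcs $A_1, A_{\sigma(2)},\dots,A_{\sigma(k)}$ in a cyclic order with $e^\star\in A_1$. Consecutive arcs share an endpoint vertex, so the corresponding paths intersect, and since $\mathcal{P}$ is merged each such intersection is unique; in particular $e^\star$ splits $A_1$ so that one of its boundary vertices lies in the left subpath of $P_1$ and the other in its right subpath, and the two arcs neighbouring $A_1$ therefore attach to $l_{P_1}$ and $r_{P_1}$ respectively in the instance determined by $P_1,\dots,P_k$. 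Reading the cyclic order from the $l$-side to the $r$-side then exhibits a simple $l_{P_1}$--$r_{P_1}$ path whose internal vertices are exactly $v_{P_2},\dots,v_{P_k}$ (a second $e^\star$-path, if present, appears as one of the intermediate arcs and is precisely the recorded support path on the matching side).

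For the backward direction I would take a simple $l_{P_1}$--$r_{P_1}$ path $l_{P_1}, v_{P_{j_1}},\dots,v_{P_{j_{k-1}}}, r_{P_1}$ with $\{j_1,\dots,j_{k-1}\}=\{2,\dots,k\}$ in one of the generated instances (whose base path is then necessarily $P_1$), choose a witness vertex for each consecutive adjacency -- a vertex of $P_{j_1}$ in the left subpath of $P_1$, a common vertex of $P_{j_t}$ and $P_{j_{t+1}}$, a vertex of $P_{j_{k-1}}$ in the right subpath of $P_1$ -- and splice the corresponding subpaths of $P_1, P_{j_1},\dots,P_{j_{k-1}}, P_1$ together with $e^\star$ into a closed walk that traverses $e^\star$ exactly once; this walk contains a cycle $C$ through $e^\star$ covered by $P_1,\dots,P_k$. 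Lemma~\ref{cycle_breaking_contiguous_colors} then lets me take $C$ contiguously colored, whence its $e^\star$-paths are among the base and support path (property~2b), and Lemmas~\ref{cycle_breaking_two_cycle_edge_paths} and \ref{single_induced_cycle_existence} let me take the inducing set inclusion-minimal with a single induced cycle. Choosing the reachability path of minimum length forces the inducing set to be exactly $\{P_1,\dots,P_k\}$: if some $P_{j_t}$ were dispensable there would be a shorter $l_{P_1}$--$r_{P_1}$ path, which by the forward direction (applied to the smaller special cycle) contradicts minimality.

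I expect the backward direction to be the main obstacle: turning the purely combinatorial certificate (pairwise intersections recorded as edges of a path intersection graph) into a single \emph{embedded} cycle that satisfies all four properties of a special cycle and whose inducing set is \emph{exactly} $\{P_1,\dots,P_k\}$. The bookkeeping concentrates on excluding unintended overlaps among the spliced subpaths (which could shorten the extracted cycle or drop a $P_{j_t}$), on the case where the left and right subpaths of $P_1$ both meet the same non-base path, on preventing a spurious second induced cycle, and on the degenerate cases where witness vertices coincide or the support path is involved -- all of which are controlled by mergedness, unwrappedness and the minimum-length choice, but none of which are entirely free.
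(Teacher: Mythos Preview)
Your forward direction is essentially the paper's: both read off the cyclic order of arcs in a contiguous colouring to exhibit an $l_{P_1}$--$r_{P_1}$ path in the appropriate instance, distinguishing the one-$e^\star$-path and two-$e^\star$-path cases.

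The backward direction genuinely differs. The paper argues \emph{directly within the fixed instance} that $\{P_1,\dots,P_k\}$ is inclusion-minimal: if some proper subset induced a cycle in $G$, then the earliest vertex $v_{P_i}$ along the BFS path would have two neighbours $v_{P_j},v_{P_l}$ in the path-intersection graph that both occur strictly later on the BFS path; but on a BFS shortest path any neighbour of $v_{P_i}$ that occurs strictly later must sit at depth exactly one more, forcing $v_{P_j}=v_{P_l}$, a contradiction. Lemma~\ref{single_induced_cycle_existence} then gives uniqueness of the induced cycle, and mergedness (together with the BFS-tree observation) is used to rule out any discontiguous colouring. No appeal to the forward direction is made.

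Your route instead builds \emph{some} cycle through $e^\star$, reduces it via Lemmas~\ref{cycle_breaking_contiguous_colors} and~\ref{single_induced_cycle_existence} to a special cycle on an inclusion-minimal subset $S\subseteq\{P_1,\dots,P_k\}$, and then invokes the already-proved forward direction on $S$ to contradict minimality of the chosen $l_{P_1}$--$r_{P_1}$ path. The gap is exactly here: the forward direction, applied to the smaller special cycle, yields a short path in \emph{some} generated instance, not necessarily the one you started from. The base may switch to the support path (if $P_1\notin S$ but $P_s\in S$), or the support side may flip (if $P_s\in S$ and its arc in the reduced cycle lies on the opposite side of $e^\star$ from the side recorded in your instance; recall the support edge $v_{P_s}$--$l_{P_1}$ or $v_{P_s}$--$r_{P_1}$ is inserted artificially and only on one side). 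In either case the short path need not exist in your instance, so ``minimum length'' there is not contradicted. You could patch this by declaring ``minimum length'' to mean globally over all $r(2(r-1)+1)$ instances, but that is an additional hypothesis; the paper's BFS-tree argument sidesteps the issue entirely by never leaving the single instance.
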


\begin{proof}
We handle the forward direction first. If we have a special cycle, we know that it is contiguously colorable, that there exists a set of paths that induce it and induce no other cycles, and that it uses at most two $e^\star$-paths. We are able to extract a minimal sequence of paths $P_1, P_2, \ldots, P_k$ such that the only path intersections that exist within the sequence of paths are between two contiguous paths or the first and the last path in the sequence (if those are not the only path intersections, we end up with more than one induced cycle). There are two cases we need to handle:

\textbf{Case 1:} $P_1$ is the only $e^\star$-path in the special cycle; there is a path from $v_{P_2}$ to $v_{P_k}$ in all of the path intersection graphs that have $P_1$ as the base path, since $P_i$ intersects $P_{i + 1}$ for all $i \in \mathbb{N}^{+}([2,k-1])$. The path intersection graphs of interest to us are the ones with no support paths. $P_2$ and $P_k$ each intersect one side of $P_1$, otherwise, $e^\star$ is not contained in the special cycle, therefore, there exists a path intersection graph that either has edges from $v_{P_2}$ to $l_{P_1}$ and from $v_{P_k}$ to $r_{P_1}$, or edges from $v_{P_2}$ to $r_{P_1}$ and from $v_{P_k}$ to $l_{P_1}$; $r_{P_1}$ is reachable from $l_{P_1}$ either way. 

\textbf{Case 2:} we are dealing with two $e^\star$-paths in the special cycle, meaning that $P_1$ and exactly one of $P_2$ or $P_k$ are $e^\star$-paths (note that the second $e^\star$-path has to be one of those two paths, otherwise, we contradict the inclusion-minimality property of the set of paths $\{P_1, P_2, \ldots, P_k\}$). Here, $P_1$ acts as a base path, and exactly one of $P_2$ or $P_k$ acts as a support path. We therefore consider the two corresponding $(l_{P_1}, r_{P_1})$-reachability instances, and using an argument similar to the one we employed for the first case, at least one of those two reachability instances will be a yes-instance.
 
We now handle the backward direction. Assume $r_{P_1}$ is reachable from $l_{P_1}$, and let $l_{P_1}, v_{P_2}, v_{P_3}, \ldots, v_{P_k}, r_{P_1}$ be the corresponding shortest path that we obtained from the BFS tree. We can easily verify that the existence of such a path implies the existence of a subset of paths that induce cycles, this subset of paths being $P_1, P_2, \ldots, P_k$. 

We still have to show that $P_1, P_2, \ldots, P_k$ induce a special cycle. The easiest criterion to verify is that no more than two $e^\star$-paths are involved, as this follows by construction of the path intersection graphs, since all of them involve either one base path and no support path or one base path and one support path.

\begin{figure}[t]
    \centering
    \includegraphics[]{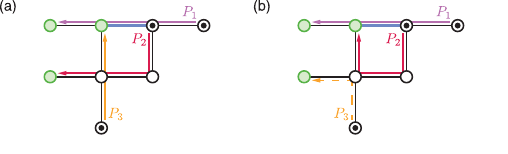}
    \caption{\textbf{Leveraging a source-target corner to reduce the number of paths that induce the special cycle.} (a)~Example of a sequence of three paths that induce a special cycle that contains the edge $e^\star$ (thick blue line).
    (b)~Updated sequence of paths after swapping the target vertices of $P_2$ and $P_3$, which can then be removed from the sequence of paths, as it no longer contributes any edge to the special cycle. 
    }
    \label{fig:cycle_breaking_st_corner}
\end{figure}
 
We now show that the path from $l_{P_1}$ to $r_{P_1}$ implies the corresponding paths induce a single cycle.

\textbf{Case 1:} a proper subset of the paths induces a cycle; we pick the smallest such subset. Let $\overleftarrow{P}$ be the path in said subset such that $v_{\overleftarrow{P}}$ is the earliest occurring vertex in the path from $l_{P_1}$ to $r_{P_1}$. Within the subset, there must exist two paths $\overleftarrow{P}$ intersects such that the vertices associated with those two paths occur later than $v_{\overleftarrow{P}}$ in the path from $l_{P_1}$ to $r_{P_1}$; let those paths be $P_j$ and $P_l$. Since $v_{\overleftarrow{P}}$ is the earliest occurring vertex, BFS visited it earlier than $v_{P_j}$ and $v_{P_l}$. Therefore, in the BFS tree, $v_{P_j}$ and $v_{P_l}$ are children of $v_{\overleftarrow{P}}$; the path from a leaf ($r_{P_1}$ specifically) to the root cannot therefore include both $v_{P_j}$ and $v_{P_l}$.

\textbf{Case 2:} the set of paths is inclusion-minimal and induces multiple cycles; we know this is not possible due to Lemma~\ref{single_induced_cycle_existence}.
The color contiguity criterion follows immediately, as we have already proven in Lemma~\ref{cycle_breaking_contiguous_colors} that any cycle can be turned into a contiguously colored cycle without adding new paths to the set of paths that induces it. 
%
%
%
\end{proof} 

\subsubsection{Procedure to break special cycles}\label{app:break-cycles}

The cycle-breaking procedure takes as input a sequence of paths (the selected paths) that induce a special cycle $C$ containing the edge $e^\star$ and works on breaking $C$ by modifying the path system. As a consequence of the cycle being special, the only intersections that exist within the sequence of paths are between two contiguous paths or the first and the last path in the sequence of paths.

This procedure aims to reduce the number of paths that induce $C$ by looking for a \emph{source-target corner}. A source-target corner is a tree that is a subgraph of the path system graph and that is induced by two paths, such that updating the paths by swapping their targets and reconstructing them via the tree they induce makes it possible to delete one of the two updated paths without destroying $C$. Every time a source-target corner is found, we eliminate it by updating the paths as described earlier then deleting one of them and we end up with a smaller set of paths that induces the same cycle. For example, in Figure~\ref{fig:cycle_breaking_st_corner}a, three paths induce the special cycle. A source-target corner exists between $P_2$ and $P_3$, because swapping targets, reconstructing the paths and removing $P_3$ preserves the cycle~(Fig.~\ref{fig:cycle_breaking_st_corner}b), so $P_3$ is definitively removed from the selected paths, as it no longer contributes any edge to the special cycle. This process does not increase the total weight of the path system, since the same edges are used for the reconstruction.
Note that if the number of remaining selected paths is odd, then there must exist another source-target corner; otherwise, the number of source vertices will not be equal to the number of target vertices in the graph that the selected paths induce.

\begin{figure}[t]
    \includegraphics[]{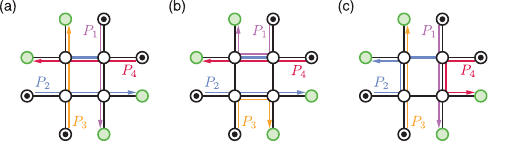}
    \caption{
    \label{fig:cycle_breaking_nontrivial}
    \textbf{Breaking a cycle in a path system induced on a reduced set of paths without source-target corners.} (a)~Example of path system with a special cycle formed by four paths. 
    (b)~Even reduced path system and (c)~odd reduced path system obtained after breaking the cycle. Because both reduced path systems have the same overall weight
    and the odd path system does not contain the edge $e^\star$ (thick blue line), the odd reduced path system gets picked.
    }
\end{figure}

Eventually, we end up with a cycle with no source-target corners (and an even number of paths); if the cycle is made up of two paths, we use logic that is similar to the logic we presented for path merging (in the proof of Lemma~\ref{lemma:mps}). We attempt to reduce total path weight; if this fails, we attempt to isolate tokens; if this also fails, we reroute through the subpath that does not contain $e^\star$. 
If we end up with a cycle formed out of four or more paths~(see~Fig.~\ref{fig:cycle_breaking_nontrivial}), we refer to the remaining paths as a \emph{reduced set of paths}. Let $r$ be the cardinality of the reduced set of paths. We can generate two path systems induced on the reduced set of paths, which we call \emph{reduced path systems}, such that they each break the cycle. Let $\{P_1, P_2, \ldots, P_{r}\}$ be the reduced set of paths:

\begin{enumerate}
    \item \emph{Even reduced path system}: Match $v_{s_i}$ with $v_{t_i}$ if $i$ is even, $v_{s_i}$ with $v_{t_{((i+1) \mod r) + 1}}$ if $i$ is odd, $1 \leq i \leq r$. In the latter case, reconstruct path $P_i$ on the graph induced by the reduced set of paths without using its private edges.
    \item \emph{Odd reduced path system}: Match $v_{s_i}$ with $v_{t_i}$ if $i$ is odd, $v_{s_i}$ with $v_{t_{((i+1) \mod r) + 1}}$ if $i$ is even, $1 \leq i \leq r$. In the latter case, reconstruct path $P_i$ on the graph induced by the reduced set of paths without using its private edges.
\end{enumerate}

We remark that there is a single way to reconstruct the paths within each of the two path systems, as we are forbidding the inclusion of private edges within them, and if there were multiple reconstructions that would work, this would imply that the reduced set of paths induces more than one cycle.
The fact that both graphs induced by the reduced path systems are cycle-free follows from the fact that the reduced set of paths induces a single cycle, and from the existence of private edges for every path. Those private edges are no longer part of any path in each of the reduced path systems, and the number of distinct edges used did not increase; therefore the cycle has been broken.  
We would now like to update the reduced set of paths using one of the two reduced path systems we constructed.
Observe that the total weight of one of the two generated reduced path systems is less than or equal to that of the path system involving the selected paths. 
The observation follows from the fact that the total weight of the odd reduced path system and that of the even reduced path system add up to twice the total weight of the reduced set of paths. If the total weights of the two reduced path systems are different, we pick the reduced path system with the smallest total weight. Otherwise, we pick the reduced path system that does not include the edge $e^\star$ (Figure~\ref{fig:cycle_breaking_nontrivial}). After the selection is made, we update the paths in the path system accordingly.

It may be the case that the cycle-breaking procedure gives rise to pairs of paths that are unmerged, or paths that are wrapped; those are two possible consequences of the elimination of source-target corners. Therefore, every time a cycle is broken, we run the path merging and the path unwrapping procedures to reinstate the invariant (namely that the path system is a UPS), followed by repopulating the edge frequencies in the updated path system. The path system being unwrapped is what ensures the correctness of our work.

\subsubsection{Proof of termination of the cycle-breaking procedure}\label{app:termination}
Finally, we show that the cycle-breaking procedure terminates. Once the cycle-breaking procedure terminates, the path system graph is a forest, which admits an ordering of moves (as we show in the next section).

\begin{lemma}
\label{lemma:cycle_breaking_termination}
Given an $n$-vertex (positive) edge-weighted graph $G$ with $|E(G)| = m$ and $\sum_{e \in E(G)}{w(e)} = \mathcal{O}(n^{c})$ for some positive integer $c$, two sets $S,T \subseteq V(G)$, and a $T$-valid unwrapped path system $\mathcal{P}$, the cycle-breaking procedure terminates in time $\mathcal{O}(n^{c + 6}m)$.
\end{lemma}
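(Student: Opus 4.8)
The plan is to follow the template of the proof of Lemma~\ref{lemma:mps}: view the cycle-breaking step as a loop that repeatedly (i) locates the minimum-frequency cycle edge $e^\star$, (ii) extracts a special cycle through $e^\star$ together with a set of paths inducing it, (iii) breaks that cycle, and (iv) re-merges and re-unwraps to restore the UPS invariant; then argue that the loop preserves $T$-validity with non-increasing weight and non-increasing number of distinct edges, that it iterates only polynomially many times, and that each iteration costs polynomial time. Correctness of the lemma is then immediate: on exit no edge of $G[\mathcal{P}]$ lies on a cycle, so $G[\mathcal{P}]$ is a forest and $\mathcal{P}$ is a valid CPS with $w(\mathcal{P'})\le w(\mathcal{P})$ using no more distinct edges than the input. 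That one iteration preserves the invariant uses only facts already in hand: the special-cycle search (Lemmas~\ref{cycle_breaking_contiguous_colors},~\ref{cycle_breaking_two_cycle_edge_paths},~\ref{single_induced_cycle_existence},~\ref{cycle_breaking_reduction}) merely reads the path system; eliminating a source-target corner swaps two targets and reconstructs the two paths along the tree they already induce, reusing the same edges and keeping $T$-validity by the same target-swapping argument used by the obstruction solver; the final resolution (the two-path case, or picking the even or odd reduced path system) re-matches sources to targets on the subgraph induced by the reduced set of paths while forbidding their private edges, which keeps the weight from increasing, drops at least one edge from that subgraph, and keeps $T$-validity; and the subsequent re-merge and re-unwrap preserve $T$-validity, weight-non-increase, and distinct-edge-non-increase by Lemmas~\ref{lemma:mps} and~\ref{lemma:ups}.

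For termination I would attach to a path system the lexicographic measure $\big(w(\mathcal{P}),\,-|I_{\mathcal{P}}|,\,|E(G[\mathcal{P}])|,\,\Psi(\mathcal{P})\big)$, where $I_{\mathcal{P}}$ is the set of isolated tokens, $E(G[\mathcal{P}])$ the set of distinct edges used, and $\Psi(\mathcal{P})$ a tie-breaker recording the ``bottleneck'' of the surviving cycle structure --- for example the minimum frequency over all edges lying on some cycle paired with the number of such edges, or the total frequency summed over all cycle edges. The key claim is that each iteration strictly decreases this measure: the first three coordinates move only in the allowed directions by the invariant, and whenever an iteration is \emph{neutral} --- it neither lowers $w$ nor $|E(G[\mathcal{P}])|$ nor raises $|I_{\mathcal{P}}|$ --- it must fall into one of the tie-break cases of the cycle-breaking routine in which the selected paths are rerouted through the subpath that avoids $e^\star$; because $e^\star$ was chosen as a globally minimum-frequency cycle edge, peeling all selected paths off $e^\star$ strictly improves the bottleneck and cannot create a cycle edge of smaller frequency, so $\Psi$ strictly decreases. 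Since $w(\mathcal{P})=\mathcal{O}(n^{c+1})$ (the $\mathcal{O}(n)$ paths each have weight $\mathcal{O}(n^{c})$), $|I_{\mathcal{P}}|=\mathcal{O}(n)$, $|E(G[\mathcal{P}])|=\mathcal{O}(m)$, and between consecutive changes of these $\Psi$ can decrease only $\mathcal{O}(nm)$ times, the number of iterations is polynomially bounded, with the interleaving accounted for exactly as in Lemma~\ref{lemma:mps}.

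It remains to bound the cost of one iteration. Locating $e^\star$ takes $\mathcal{O}(n+m)$ time via a single DFS that marks the edges on cycles, followed by a scan for the minimum frequency. Extracting a special cycle amounts to building the $r(2(r-1)+1)=\mathcal{O}(n^2)$ path-intersection graphs ($r=\mathcal{O}(n)$ being the number of $e^\star$-paths), each with $\mathcal{O}(n)$ vertices and $\mathcal{O}(n^2)$ edges, and running one $(l_{P_b},r_{P_b})$-reachability BFS on each (Lemma~\ref{cycle_breaking_reduction}); with all pairwise path intersections precomputed this is polynomial. Breaking the cycle --- iteratively finding and deleting source-target corners, then resolving the even-length reduced set of paths --- is polynomial, after which the re-merge (Lemma~\ref{lemma:mps}) and re-unwrap (Lemma~\ref{lemma:ups}) together with recomputing edge frequencies dominate the per-iteration cost. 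Multiplying the per-iteration cost by the bound on the number of iterations and simplifying gives $\mathcal{O}(n^{c+6}m)$.

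The main obstacle is the second paragraph: pinning down a tie-breaker $\Psi$ for which \emph{every} neutral cycle-break provably decreases the lexicographic measure, and --- equally delicate --- verifying that the re-merge and re-unwrap that follow a break cannot undo that progress, i.e.\ that $\Psi$ (as well as weight, isolation count, and distinct-edge count) is monotone under those two operations. This is precisely why $e^\star$ must be the globally minimum-frequency edge on a cycle rather than an arbitrary cycle edge: it ensures that removing $e^\star$'s contribution to the cycle structure cannot be offset by some other edge becoming an even smaller bottleneck. Getting this bookkeeping tight is the analogue here of the exclusivity-frequency argument in Lemma~\ref{lemma:mps}, and is where most of the work lies.
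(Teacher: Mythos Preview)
Your plan is essentially the paper's: a lexicographic potential with coordinates (total weight, number of isolated tokens, a frequency-based tie-breaker on the minimum-frequency cycle edge), the same interleaving count giving $\mathcal{O}(n^{c+2}m)$ cycle-breaks, and a polynomial per-break cost. The paper uses exactly those three consequences and does not carry your extra $|E(G[\mathcal{P}])|$ coordinate; it is harmless but not needed.

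The genuine gap is in your cost accounting. You assert that the re-merge (Lemma~\ref{lemma:mps}) and re-unwrap (Lemma~\ref{lemma:ups}) dominate the per-iteration cost and that the product with the iteration count ``simplifies'' to $\mathcal{O}(n^{c+6}m)$. But Lemma~\ref{lemma:mps} gives $\mathcal{O}(n^{c+4}m^2)$ for a merge; with $\mathcal{O}(n^{c+2}m)$ iterations this yields $\mathcal{O}(n^{2c+6}m^3)$, not the claimed bound. The paper avoids this by pricing a single cycle-break at $\mathcal{O}(n^4)$ --- dominated by building the $\mathcal{O}(n^2)$ path-intersection graphs in $\mathcal{O}(n^2)$ time each (after an $\mathcal{O}(n^2m)$ precomputation of pairwise intersections) and running $\mathcal{O}(n^2)$-time BFS on each --- and does \emph{not} charge the full Lemma~\ref{lemma:mps} bound to the post-break re-merge. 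To repair your argument you must either argue that re-merging after a cycle-break is local (only the paths touched by the break can become unmerged) and hence far cheaper than the global bound, or fold every re-merge/re-unwrap step into the same three-consequence ledger so that it contributes to the $\mathcal{O}(n^{c+2}m)$ iteration count rather than to the per-iteration cost. This is precisely the issue you flag in your last paragraph --- that re-merge and re-unwrap must not undo progress on $\Psi$ --- and you have correctly identified it as the crux but not closed it; until you do, the stated $\mathcal{O}(n^{c+6}m)$ bound is unsupported.
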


\begin{proof}
The cycle-breaking procedure was designed in a way that ensures it terminates in polynomial time. If we limited it to arbitrarily detecting and breaking cycles, it would have been harder to prove its termination, let alone that it terminates in polynomial time. 
Breaking a single cycle (which takes time $\mathcal{O}(n^4)$, with the proof to follow) has one of three consequences:

\begin{enumerate}
    \item The decrease of the overall weight of the path system, a possible increase in the number of isolated tokens in the path system, and an indeterminate effect on the edge frequencies
    \item The increase of the overall number of isolated tokens in the path system, with an indeterminate effect on edge frequencies (no increase in the overall weight of the path system)
    \item The decrease of the frequency of the least frequent edge that is part of a cycle, with an indeterminate effect on the frequency of the other edges (no increase in the overall weight of the path system or decrease in the number of isolated tokens)
\end{enumerate}

The consequences and their hierarchy are given by construction of the algorithm. The first consequence can occur $\mathcal{O}(n^{c+1})$ times; it may also increase the number of isolated tokens (but not decrease it, as cycle-breaking reroutes paths through edges that are already in the path system). Irrespective of the first consequence, the second consequence can occur $\mathcal{O}(n)$ times, because there are $\mathcal{O}(n)$ tokens, and the third consequence can occur $\mathcal{O}(nm)$ times in a row, since the frequency of a single edge is $\mathcal{O}(n)$, and edges that are taken out of cycles are not brought back into cycles because of how cycle-breaking is designed. One can observe that changes in path system weight or changes in the number of isolated tokens affect edge frequencies, in the sense that there can be $\mathcal{O}(nm)$ occurrences of the third consequence in a row after each occurrence of either of the first two consequences. Analogously to the reasoning we employed for path merging, this means that the cycle-breaking procedure involves breaking $\mathcal{O}((n^{c+1} + n)nm) = \mathcal{O}(n^{c+2}m)$ cycles. 

It remains to show that breaking a single cycle can be done in time $\mathcal{O}(n^4)$. In the worst case, there are $\mathcal{O}(n)$ paths that induce the cycle in question. In this case, constructing the reachability instances is accomplished in time $\mathcal{O}(n^4)$, as we end up with $\mathcal{O}(n^2)$ instances overall, each of which consists of $\mathcal{O}(n)$ vertices, such that constructing the edges for each instance takes time $\mathcal{O}(n^2)$ (provided that we precompute which pairs of paths intersect, which is executed once and in time $\mathcal{O}(n^2m)$ prior to the construction of the reachability instances). Every single BFS call on each reachability instance takes time $\mathcal{O}(n^2)$, which is the number of edges in a single path intersection graph, so all BFS calls combined take time $\mathcal{O}(n^4)$. We now account for the running time of source-target corner eliminations. In the worst case, we have $\mathcal{O}(n)$ source-target corner eliminations, each of which is executed in time $\mathcal{O}(n + m)$ (via BFS). Finally, it should be easy to see that the base cases run in time $\mathcal{O}(nm)$, as they either involve merging a pair of paths ($\mathcal{O}(m)$), or constructing two path systems and computing their total weight ($\mathcal{O}(nm)$). 
Given all the above, it follows that the cycle-breaking procedure terminates in time $\mathcal{O}(n^{c + 6}m)$. 
\end{proof}

\subsection{Step 4 -- Order path system}\label{app:ops}

The fourth step of the aro algorithm is to compute an \emph{ordered path system} (OPS) whose moves can be executed with the guarantee that each token moves at most once. 
  
\begin{theorem}[Ordered path system]\label{thm:ops}
Given an $n$-vertex (positive) edge-weighted graph $G$ with $|E(G)| = m$ and $\sum_{e \in E(G)}{w(e)} = \mathcal{O}(n^{c})$ for some positive integer $c$, two sets $S,T \subseteq V(G)$, and a $T$-valid path system $\mathcal{P}$, we can compute, in time $\mathcal{O}(n^{c + 6}m + n^3)$, a valid ordered path system $\mathcal{P'}$ such that $w(\mathcal{P'}) \leq w(\mathcal{P})$. Moreover, the number of distinct edges used in $\mathcal{P'}$ is at most the number of distinct edges used in $\mathcal{P}$.
\end{theorem}

\begin{proof}
Given $\mathcal{P}$, we apply Theorem~\ref{thm:cps} to compute a cycle-free path system $\mathcal{P''}$ in time $\mathcal{O}(n^{c + 6}m)$. Recall that $\mathcal{P''}$ is both merged and unwrapped and $G[\mathcal{P''}]$ is a forest. Hence, we can apply the algorithm of~\cite{Calinescu2007} on each tree of the forest. It was shown in~\cite{Calinescu2007} that given a tree with $n$ vertices, and with the number of source vertices $|S|$ in the tree being equal to the number of target vertices $|T|$ in the tree, there is a $\mathcal{O}(n)$-time algorithm, which we call the \emph{exact tree solver}, that performs the optimal (minimum) number of moves to transform $S$ to $T$ while moving each token at most once. It is easy to see that the non-trivial trees (i.e., the trees not containing a single source and target vertex) induced by the path system each have an equal number of source vertices and target vertices; if not, there will exist a token in some tree that does not need to move, which implies that the path system can be modified in a way that decreases its total weight (or increases the number of isolated/fixed in place tokens). So we can assume, without loss of generality, that each tree has an equal number of source vertices and target vertices. We keep track of the moves produced by the algorithm for each tree and add them in order to obtain $\mathcal{P'}$; the order among the different trees is irrelevant. The usage of the exact tree solver guarantees that $w(\mathcal{P'}) \leq w(\mathcal{P})$ and that the number of distinct edges used in $\mathcal{P'}$ is at most the number of distinct edges used in $\mathcal{P}$, as isolated tokens will not move (in fact, we might even increase the number of isolated tokens) and the set of edges used in $\mathcal{P'}$ is a subset of the edges used in $\mathcal{P}$. Moreover, the algorithm of~\cite{Calinescu2007} can be easily adapted to guarantee that the frequency of each edge, i.e., the number of times it is traversed in the path system, cannot increase since it is always traversed in one of its two possible directions. 
To see why $\mathcal{P'}$ is ordered, we can construct a directed graph $D$ where each path $P$ in $\mathcal{P'}$ corresponds to a node $v_{P}$ in $D$ and we add a directed edge from node $v_P$ to $v_{P'}$ whenever another path $P'$ in $\mathcal{P'}$ depends on $P$. We claim that $D$ is a directed acyclic graph. Suppose not. Then the existence of a cycle implies that either some pair of paths is unmerged, or some pair of paths is wrapped, or at least one token must move more than once, a contradiction in all cases. Hence, we can reconstruct the ordering of the moves by simply computing a topological ordering of $D$, which can be done in time $\mathcal{O}(|V(D)| + |E(D)|) = \mathcal{O}(n^2)$~\cite{kahn1962}. Note that constructing the graph $D$ takes time $\mathcal{O}(n^{3})$ in the worst case.
\end{proof}

\section{Obstruction solver subroutine}\label{sec:obstruction_solver}

The \emph{obstruction solver subroutine} seeks a sequence of moves associated with a valid path system. Our implementation of the obstruction solver subroutine is a variation of the subroutine devised by Călinescu \emph{et al.}~\cite{Calinescu2007}, the core difference being that our version can handle token surplus (i.e., $|S| > |T|$). Processing every path in the path system in an arbitrary order, the subroutine attempts to move each token from its source vertex to its target vertex. If an obstructing token is present on the path, the subroutine switches the target of the token that it is attempting to move with the target of the obstructing token, updates the path system with the previously-computed shortest paths between the newly updated pairs of source and target vertices, and then attempts to move the obstructing token. The recursive procedure terminates when all vertices in $T$ are occupied.

Formally, suppose that the token $\tau_i$ is on the source vertex $v_{s_i}$ of a path $P_i\in\mathcal{P}$ aiming towards its target vertex $v_{t_i}$.
If the move associated with $P_i$ is not obstructed, i.e., there is no other token on the path between $v_{s_i}$ and $v_{t_i}$, then $\tau_i$ is moved to $v_{t_i}$. Otherwise, there is some obstructing token that is closest to $v_{s_i}$, say $\tau_j$, on the path $P_i$. The obstruction solver subroutine finds the target vertex $v_{t_j}$ associated with token $\tau_j$ and then switches the target of $\tau_i$ with that of $\tau_j$. The solver updates the path system by choosing the shortest paths for the updated pairings of source and target vertices. The shortest paths are previously computed during the APSP subroutine and stored in memory, which requires $\mathcal{O}(n^2)$ space and does not present any scalability issues. After choosing the shortest paths, the solver recursively attempts to move the obstructing token. Because it started with a valid path system, the solver is guaranteed to return a valid sequence of unobstructed moves in polynomial time ($\mathcal{O}(n^2)$ time in the worst case).

The performance of the obstruction solver subroutine depends on the ordering of the paths on which moves are executed. Indeed, the ordering affects the number of displaced tokens, as well as the total number of transfer operations. Figure~\ref{fig:fig1a}a, an example on a path system that induces a cycle, shows an instance in which the execution of the ordering $(P_1, P_2, P_3)$ displaces every token once for a total of three moves, which is optimal, whereas the execution of the ordering $(P_2, P_3, P_1)$ requires five moves (given that the tokens are moved sequentially, one after the other), with tokens $\tau_2$ and $\tau_3$ being displaced twice. Our proposed ordering subroutine (Sec.~\ref{sec:ordering}) mitigates this excessive number of control operations by computing a partial ordering of the paths that guarantees that no token is displaced more than once.

The ordering of the paths in the execution of the obstruction solver subroutine also affects the number of tokens that are displaced. For example, Figure~\ref{fig:fig1a}b shows an instance in which three tokens are displaced for one ordering $(P_3, P_2, P_1)$, whereas two tokens are displaced for another ordering $(P_2, P_1, P_3)$. Indeed, in the latter ordering, $P_2$ is not obstructed, so it can be directly executed. Then, because $P_1$ is obstructed by token $\tau_3$, the target of $\tau_1$ becomes $v_{t_3}$ and the target of $\tau_3$ becomes $v_{t_1}=v_{s_3}$. The obstruction solver subroutine first attempts to move $\tau_3$, and, because it already occupies its target, there is nothing to be done. The solver then attempts to move $\tau_1$ to $v_{t_3}$. 
If the shortest path between $v_{s_1}$ and $v_{t_3}$ goes through $P_2$, then, because a move on this path is obstructed by $\tau_2$, the solver first moves $\tau_2$ to $v_{t_3}$ and then $\tau_1$ to $v_{t_2}$; two tokens have thus been displaced instead of three. Because token $\tau_3$ can be discarded from the path system, token $\tau_3$ can be isolated; the isolation subroutine~(App.~\ref{app:isolation}) seeks to find tokens that can be isolated and removes them from the path system to reduce unnecessary displacement operations.

\section{Batching subroutine}\label{sec:batching}
Because typical assignment-based reconfiguration algorithms do not take into account the number of transfer operations, the resulting control protocols might perform as many extraction and implantation operations as displacement operations, i.e., an EDI cycle for each elementary displacement operation. To reduce the number of transfer operations, we implement a \emph{batching subroutine} that seeks to simultaneously displace multiple atoms located on the same row or column of the grid graph within a single EDI cycle. Given that the output of the aro algorithm is a sequence of displacements, a batch is a maximal sequence of consecutive displacements, each of which displaces a distinct atom, such that the displacements’ sources are adjacent along a row/column, and the displacements’ targets are adjacent along the same row/column. If the reconfiguration system allows for simultaneous displacements within the same row or within the same column, then all atoms in a batch can be displaced at once. This simple batching routine could be further extended to achieve greater operational performance. The running time of the batching subroutine is no more than $\mathcal{O}(n^3)$. 

Although the resulting performance of our baseline algorithm is improved over a typical assignment-based algorithm that does not rely on the isolation and batching subroutines, our benchmarking analysis shows that a larger gain in operational performance is achieved by the aro algorithm, which further includes a rerouting subroutine and an ordering subroutine.

\end{appendix}


%

\end{document}